\documentclass[12pt,reqno]{amsart}
\usepackage{amsfonts, amsmath, amssymb, amsthm, mathtools, mathrsfs, dsfont, tikz-cd, xcolor, braket}
\usepackage{tabularray}

\usepackage{geometry}
\usepackage[linktocpage, colorlinks=true, linkcolor={blue}, citecolor={blue}]{hyperref}

\usepackage[capitalise]{cleveref}
\crefname{theorem}{Theorem}{Theorems}
\crefname{lemma}{Lemma}{Lemmas}
\crefname{proposition}{Proposition}{Propositions}
\crefname{fact}{Fact}{Facts}
\crefname{corollary}{Corollary}{Corollaries}

\usepackage{enumitem}

\theoremstyle{definition}
\newtheorem{theorem}{Theorem}[section]
\newtheorem{corollary}[theorem]{Corollary}
\newtheorem{proposition}[theorem]{Proposition}
\newtheorem{definition}[theorem]{Definition}
\newtheorem{remark}[theorem]{Remark}
\newtheorem{example}[theorem]{Example}
\newtheorem{lemma}[theorem]{Lemma}
\newtheorem{prop}[theorem]{Proposition}
\newtheorem{cor}[theorem]{Corollary}
\newtheorem*{prob}{Problem}
\newtheorem{rem}[theorem]{Remark}
\newtheorem{fact}[theorem]{Fact}

\numberwithin{equation}{section}

\newcommand{\mb}{\mathbb}
\newcommand{\mc}{\mathcal}
\newcommand{\mf}{\mathfrak}
\newcommand{\lset}{\left\{ }
\newcommand{\rset}{\right\} }

\newcommand{\lpara}{\left(}
\newcommand{\rpara}{\right)}

\newcommand{\norm}[1]{\left\Vert #1 \right\Vert}
\newcommand{\su}{\mathfrak{su}}
\newcommand{\so}{\mathfrak{so}}

\newcommand{\kla}{\left ( }
\newcommand{\mer}{\right ) }
\newcommand{\ketbra}[1]{|{#1}\rangle\langle{#1}|}
\newcommand{\kl}{\pl \le \pl}
\newcommand{\gl}{\pl \ge \pl}
\newcommand{\lel}{\pl = \pl}
\newcommand{\ez}{{\mathbb E}}
\newcommand{\rz}{{\mathbb R}}
\newcommand{\zz}{{\mathbb Z}}
\newcommand{\cz}{{\mathbb C}}
\newcommand{\ten}{\otimes}
\newcommand{\pl}{\hspace{.1cm}}
\newcommand{\hz}{\vspace{0.5cm}}
\newcommand{\al}{\alpha}
\newcommand{\om}{\omega}
\newcommand{\si}{\sigma}
\newcommand{\la}{\lambda}
\newcommand{\eps}{\varepsilon}
\newcommand{\mz}{\mathbb{M}}
\newcommand{\lan}{\langle}
\newcommand{\ran}{\rangle}

\DeclareMathOperator{\tr}{tr}
\DeclareMathOperator{\id}{id}

\DeclareMathOperator{\ad}{ad}
\DeclareMathOperator{\Ad}{Ad}
\DeclareMathOperator{\diag}{diag}
\DeclareMathOperator{\Lin}{Lin}
\DeclareMathOperator{\Ham}{Ham}
\DeclareMathOperator{\lie}{lie}
\DeclareMathOperator{\cha}{Cha}

\title[One Hamiltonian Suffices]{
Avoiding exponentially large groups with open quantum system technology}
\author{Jihong Cai}
\email{jihongc2@illinois.edu}
\author{Advith Govindarajan}
\email{agovind2@illinois.edu}
\author{Marius Junge}
\email{mjunge@illinois.edu}
\thanks{MJ is partially supported by NSF DMS 2247114 and NSF DMS 2556195}

\address{Department of Mathematics, University of Illinois Urbana-Champaign}

\begin{document}
\maketitle
\begin{abstract}
    We propose a novel approach to state preparation by embracing the full power of open quantum systems. Instead of working with the whole unitary group in $n$ qubits, we identify a small Lie group $G$ in $(n+3)$ qubits. The dimension of its Lie algebra is $\text{poly}(n)$. Every state can be approximated starting from any fixed input state by repeatedly using partial trace and state preparation in the environment alongside group operations. Moreover, we also identify a single (open system magic) interaction Hamiltonian whose unitary group can be combined with open system quantum technology mentioned above to achieve transitivity on the space of all densities. Similarly, we show that all channels on $n$ qubits can be approximated by using a larger environment, achieving channel universality. The interesting small groups we identify are derived from Lie algebras of local Pauli strings and lead to a new landscape of cheap and expensive states, densities, and channels.
\end{abstract}

\tableofcontents
\section{Introduction}
The theory of quantum information science studies open and closed quantum systems. From a physical point of view open quantum systems governed by an interaction Hamiltonian appear to be natural starting point. In the early stages of quantum information science considerable effort was required to extract  single qubit control from  suitable Hamiltonian \cite{lloydviola, verstraete2008}. On the other hand, the progress in the theory of closed quantum systems in the last decades is undeniable: Circuits can be executed on existing quantum devices \cite{qis},  and perform fault tolerant computation in experiments\cite{Luk}. From a computer science perspective new paradigms for computation add interesting layers in  complexity theory, see e.g. \cite{Cleve}.

The discovery of universal gate sets is one of the milestones of quantum information theory \cite{deutsch1989, Kitaev}. A particularly striking example is the Clifford group: Clifford circuits have a rich algebraic structure and admit efficient classical simulation \cite{gottesman}, yet they are not universal for quantum computation. Adjoining a single non-Clifford resource, such as the $T$-gate $T=\diag(1, e^{i\pi/4})$ changes the picture completely: Clifford+$T$ is universal \cite{boykin, bravyikitaev}.

A useful continuous analogue of a magic gate is therefore the following. Given a resource set $S$, call a Hamiltonian $H$ \textit{magic} when $\lie_\mb R(iS\cup\lset iH\rset)=\su(A)$. This viewpoint was developed systematically in \cite{jpv}. Several simple spin-chain examples already illustrate how little additional control may be needed:
\begin{enumerate}\label{mh}
    \item[(i)] $H=\sum_{j=0}^{n-1}(X_jX_{j+1}+Y_jY_{j+1}+Z_jZ_{j+1})$ is magic with resp. to $S=S_E=\{X_0,Z_0\}$;
    \item[(ii)] $H= Z_1Z_2+\sum_{j=0}^{n-1} X_jX_{j+1}$ is magic with resp. to  to $S=\{X_0,Z_0\}\cup \{Z_1,...,Z_n\}$;
    \item[(iii)] $H=X_0+Z_0+Z_1Z_2+\sum_{j=0}^{n-1} X_jX_{j+1}$ is magic with resp. to $S=S_A=\{Z_1,...,Z_n\}$. 
\end{enumerate}
The first example is a familiar instance of local controllability for a Heisenberg chain \cite{Burgarth2009}; related Pauli-string constructions, including (ii) and (iii), appear in \cite{jpv}. In each case, the resource set itself is small, but after adjoining the magic Hamiltonian the Lie algebra becomes the full, exponentially large algebra.

\hz 
The key idea of this paper is to replace the $T$-gates and magic Hamiltonian by open quantum system technology. 
\hz 

The use of open quantum systems as a resource has of course appeared in several forms. Lloyd and Viola \cite{lloydviola} and Verstraete, Wolf, and Cirac \cite{verstraete2008} showed that dissipation and irreversible operations can be harnessed for control and computation. More recently, one-qubit environment has appeared naturally in dissipative state-preparation algorithm \cite{dingchenlin}. Campbell's work shows another way in which passing from unitaries to channels changes the synthesis problem: mixture of unitary approximations can improve channel-level accuracy \cite{campbell}, while qDRIFT uses randomized elementary evolutions for Hamiltonian simulation \cite{campbell2}. Ancillary systems also play a striking role in symmetry-constrained controllability: Marvian and collaborators showed that a small number of ancilla can overcome strong restrictions imposed by locality and symmetry \cite{marvian, marvianliu}.
We continue to harness the power of open quantum system technology by 
addressing the following:
\begin{prob}
    {\it Is it possible to find a small Lie algebra on the combination of system and environment, while obtaining universal state preparation with the help small trace out and state preparation?}
\end{prob}

To make this problem more precise, let $E$ be an environment and let $G\subseteq U(EA)$ be the available unitary group. Preparing $ v \in E$, applying $U\in G$, and discarding $E$ produces the channel 
$$\Phi_U^v(\rho)=\tr_E(U(\ket{v}\bra{v}\otimes \rho)U^*).$$
For a set $V$ of preparable environment states, let $\mc C_V(G)$ be the closed class generated by the map $\Phi_U^v$ under composition. We say that  $(V, G)$ performs  \textit{density magic} when $\mc C_V(G)$ acts transitively on the density matrices $\mc D(A)$ of $A$. In particular, every mixed state can be prepared by a channel in $C_V(G)$ applied to $|0\ran\lan 0|^A$. 
In this model, the environment is not merely an ancillary register which participates coherently in a larger circuit. It may be discarded and prepared again between successive applications of element $G$, this is called reusable qubit  in quantum computation \cite{dingzhan}.  This elementary irreversible operation turns out to be a game changer.
This basic mechanism can already be seen at first order. For $v\in E$ one can define the functional $\om_{v}(x)=(v,xv)$ and then apply the slicing procedure 
\[ \om_v(H):= (\om_{v}\ten id)(H)
\in\mc L_{sa}(A)\] 
for a Hamiltonian in the Lie algebra generated by $iS$. 
The map $H\mapsto \pi_v(H)$ is linear, but not a Lie algebra homomorphism. In particular, 
\[[H, K]=0 \quad\not\Longrightarrow \quad[\om_v(H), \om_v(K)]=0.\]
Using the power of noncommutativity, we are able to make excessive use of this observation. 
We construct a set of generators $S\subset L(EA)$ with small Lie algebra $\mathfrak{g}$ and small Lie group $G(S)$ so that $\mc U_{C_V(G)}$, the group of unitaries $U$ such that $\Ad_{U}:\rho \mapsto U\rho U^*$ lies in $C_V(G)$, is $U(A)$.  

There is also a dissipative part to the story. We prove a general transitivity criterion which, in a convenient form, say that 
$$U(A) = U_{C_V(G)} ,\quad \text{and} \quad e^{tL_{a\otimes 1\ten\dots\ten1}}\in C_V(G) \quad \text{with} \quad  [a,a^*]\neq 0,$$
already implies that $\mc C_V(G)$ is state-transitive. Here, $L_a(\rho)=2a\rho a^*-a^*a\rho-\rho a^*a$ denotes the dissipative Lindblad generator of a single jump operator.
Thus arbitrary density preparation follows from effective coherent control together with a single non-normal local dissipative direction. The role of open system is twofold: first-order reduction creates new Hamiltonian directions, while second order reduction creates Lindbladian directions. The criterion builds on the controllability picture for Lindblad dynamics developed in \cite{cgj, bwsh16}.

Our results indicate that there is a tradeoff between the dimension of the small Lie group $G\subset SU(EA)$ versus the number of additional qubits and preparation states in the environment:
\begin{center}
\begin{figure}[h]
\begin{tabular}{c|c|c|c}
        & preparable state $V$ & coherent group $G\subseteq U(EA)$ & $\dim\lie(G)$\\
    \hline
    (1) &  $V=\{|0\ran\}$ & $G(S)=SU(EA)$ \pl, $S=S_E\cup S_A\cup\{H\}$ & exponential \\
    (2) &  $V=\{|000\ran,|\xi\xi\xi\ran\}$ & $G\subset \mathbb{T} \times SPIN(n+3)\times SPIN(n+3)$ & $O(n^2)$\\
    (3) &  $V=\{|0\ran,|1\ran\}$ & $G=\{e^{ith}| t\in \rz\}, h\in L(EA)$ & $1$ \\
    
\end{tabular}
\caption{Transitivity on the space of density with more vectors and smaller groups}
\label{main}
\end{figure}
\end{center}
Model (1) is the familiar closed-system picture. The second model (2) uses a refinement of the Jordan Wigner transform.  The vector $\xi=\frac{1}{\sqrt{6+2\sqrt{3}}}(\sqrt{3}+1, 1+i)^T$ is the $+1$-eigenvector of the symmetry  $\frac{X+Y+Z}{\sqrt{3}}$. Example (2) also allows to identify certain `cheap' operations given by the pulses of certain local Pauli strings 
\begin{align*}
    S&=  \{X_{-2}X_{-1}X_0X_1,Y_1Y_2,X_{-2}X_{-1}X_2X_3,...,X_{n-1}X_n,Y_{n-1}Y_n\} \\
    & \pl \pl  \pl \pl \pl \pl  \cup 
    \{Y_{-2}Y_{-1}Y_0Y_1,X_1X_2,Y_{-2}Y_{-1}Y_2Y_3,...,Y_{n-2}Y_{n-1},X_{n-1}X_n\} \cup \{Z_{-1}Z_1Z_2\} \pl. 
\end{align*} 
which generate the small group $G$. This example is motivated by Araki and Matsui's work on many body systems (see appendix in \cite{jpv}).
Model (3) is more extreme. The coherent control is reduced all the way to a single one-parameter group. The same interaction Hamiltonian $h$ is used in every round, while the environment is prepared in one of two fixed states. Yet the resulting reduced dynamics area again state-transitive. In this sense, \textbf{one Hamiltonian suffices!}

Although we have reduced the dimension of the required Lie algebra, the task of state preparation remains provably difficult. For this we define $l(\eps,G,V)=\inf \lset m\rset$ such that for $\eta \in A$ we can find $v(1),....,v(m)\in V$ and $U_1,...,U_m\in G$ such that 
 \[ \|\Phi_{U_1}^{v(1)}\cdots \Phi_{U_m}^{v(m)}(|0\dots0\ran \lan 0\dots0|)-|\eta\ran\lan \eta|\|_1< \eps \pl .\] 
\begin{theorem}
    Let $A=\cz^{2^n}$. Then  $l(\eps,G,V)\gl c(\eps)\frac{2^n}{n}$ for state preparation with model (2). For density preparation a lower bound for the length is $c(\eps) \frac{2^{2n}}{n^3(1+\ln n)}$ for model 2) and $c(\eps)\frac{2^n}{n^3(1+\ln n)}$ for the single magic Hamiltonian $h$.
\end{theorem}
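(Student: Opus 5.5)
The plan is a parameter-counting (covering-number) argument. The set of reachable outputs of length $m$ is the image of a low-dimensional parameter space under a Lipschitz map. The targets, whether pure states or densities, form a manifold of exponentially large real dimension. An $\eps$-net of the target set therefore has cardinality about $(c/\eps)^{\dim}$, while the image of the parameter space can be covered by about $(C L/\eps)^{m d}$ balls. Here $d$ is the dimension of the group $G$ and $L$ is a Lipschitz constant. Comparing logarithms gives the bound.

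First, parametrize words. A word of length $m$ consists of a choice $v(1),\dots,v(m)\in V$, which gives $|V|^m$ discrete branches, together with $U_1,\dots,U_m\in G$. For model (2), $G\subset \mathbb{T}\times SPIN(n+3)\times SPIN(n+3)$ has dimension $d=O(n^2)$. Each map $U\mapsto \Phi_U^v$ is $2$-Lipschitz from the operator norm to the diamond norm. Composition of channels is $1$-Lipschitz in each factor because channels are trace-norm contractions. The output $\Phi_{U_1}^{v(1)}\cdots\Phi_{U_m}^{v(m)}(|0\ran\lan0|)$ is therefore $2m$-Lipschitz in $(U_1,\dots,U_m)$ with respect to the sum of distances. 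A compact Lie group of dimension $d$ and diameter $O(\sqrt{d})$, measured in the Killing/Hilbert–Schmidt metric rescaled to operator norm, has $\delta$-nets of size $(C\sqrt{d}/\delta)^{d}$. Taking $\delta=\eps/(2m)$ in each factor, the set of reachable outputs admits an $\eps$-net of size
\[ N_m \le |V|^m\,(C m\sqrt{d}/\eps)^{md}. \]
On the target side, pure states in $\cz^{2^n}$ form a manifold of real dimension $2\cdot 2^n-2$, so any $3\eps$-net in trace norm has size at least $(c/\eps)^{2^{n+1}-2}$. Densities have real dimension $4^n-1$, giving $(c/\eps)^{4^n-1}$. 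If $l(\eps,G,V)\le m$, then $N_m$ must dominate the packing number, which yields
\[ m\bigl(\log|V| + d\log(Cm\sqrt d/\eps)\bigr)\gtrsim 2^{n}\ \text{or}\ 4^n. \]
Solving for $m$, and using $\log m\lesssim n$ from the a priori bound, gives $m\gtrsim 4^n/(n^2\cdot n)$ up to logarithmic factors for densities in model (2).

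The main obstacle is the bookkeeping of logarithmic factors so that the exact exponents $2^n/n$ and $4^n/(n^3(1+\ln n))$ come out. The pure-state bound $2^n/n$ is stronger than the naive count, which would give $2^n/(n^2\cdot n)$. For pure states I expect the authors to use the fermionic structure. Starting from $|0\cdots0\ran$ with Spin-group dynamics, a pure output presumably requires an environment reset whose effect on the system is a Gaussian-type map. Each round then effectively contributes only $O(n)$ new real parameters, namely the state of the $3$ ancilla qubits coupled linearly to modes, rather than $O(n^2)$. I would try to show that each pure-preserving step is determined by $O(n)$ parameters modulo the previous state's Gaussian orbit, and then redo the count with $d$ replaced by $O(n)$ without logarithmic loss. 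This is plausible if one counts dimensions of semialgebraic sets instead of using $\eps$-nets. A real-algebraic dimension count avoids $\log(m/\eps)$ entirely, with the $\eps$-robustness handled by a standard Milnor–Thom/Warren-type bound.

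For the single Hamiltonian $h$, $G$ is one-dimensional, so $d=1$. The argument above gives $m\log(Cm/\eps)\gtrsim 4^n$, which is even better than the stated $2^n/(n^3(1+\ln n))$. The stated weaker bound is therefore immediate, because $h$ acts on $n+O(1)$ qubits and its Lipschitz constant $\|h\|$ is at most polynomial in $n$. I would simply record that the time parameters range over a bounded period, or otherwise reduce to a compact torus closure of $\{e^{ith}\}$ of dimension at most $2^{n+O(1)}$. This last point is the subtle one: the closure of a one-parameter group can be a high-dimensional torus. That loss is presumably what degrades the exponent from $4^n$ to $2^n$ in the statement, since the torus dimension can be up to about $2^n\cdot\mathrm{poly}(n)$. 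I would carry out the count with $d=\dim\overline{\{e^{ith}\}}\le 2^{n+3}$, which reproduces the stated bound.
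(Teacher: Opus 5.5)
For the two density bounds, your plan is the paper's proof. Proposition~\ref{lww} is exactly your count: an $\eps/(4m)$-net of $G$, the estimate $\|\Phi^v_U-\Phi^v_{U'}\|_\diamond\le 2\|U-U'\|$, telescoping along the word, comparison of $m(\ln|S|+\ln|V|)$ with the lower covering bound for $\mc D(A)$, and inversion of $m\ln m$. The paper feeds it $d=O(n^2)$ and $a=O(n)$ for $G(S_{tot})\subset\mathbb T\times Spin(n+3)\times Spin(n+3)$. For $h$ it uses the closure of $\{e^{ith}\}$ inside the diagonal unitaries of $EA$, which is your final choice; the count with $d=1$ is not available, since the times are unbounded. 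This gives $c(\eps)4^n/(n^3\ln(1+n))$ in model (2) and $c(\eps)2^n/n$ for $h$, and the latter implies the stated bound.

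The gap is the pure-state clause $l(\eps,G,V)\ge c(\eps)2^n/n$ for model (2), where your refinement is only a hope.
\begin{itemize}
\item Its premise is unfounded. The output need only be $\eps$-close to a pure state, intermediate states can be mixed, and there is no Gaussian structure on $A$ to exploit: Section~\ref{spinex} shows $\Ad_{U(A)}\subseteq\mc C_V(G)$.
\item The sound version of your intuition still falls short. $G$ contains a subgroup acting on $A$ alone of codimension $O(n)$, namely the system-only parts of the two Majorana chains $S_1,S_2$. It can be pushed through the word via $\Ad_W\circ\Phi^v_U=\Phi^v_{(1\otimes W)U}$, so each round contributes only $O(n)$ new parameters. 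But nets at scale $\eps/m$ still cost a factor $\ln(m/\eps)\sim n$, and this gives only $m\gtrsim c(\eps)2^n/n^2$.
\item Appealing to Milnor--Thom/Warren does not by itself remove that logarithm. Quantitative covering versions of such bounds depend on the logarithm of the degree of the word map, and that degree grows with $m$.
\end{itemize}

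Note also that the paper does not prove this clause. The theorem at the end of Section~\ref{lowerbounds} gives, for model (2), pure states with $L_\eps\ge c(\eps)2^n/(n^3\ln(1+n))$, which is precisely your naive count. The bound $c(\eps)2^n/n$ is the paper's density bound for $h$. So compared with Section~\ref{lowerbounds}, the expressions $2^n/n$ and $2^n/(n^3(1+\ln n))$ appear interchanged in the statement. The version the paper's argument supports is the swapped one, and your covering count, without the speculative refinement, already proves it.
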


General channel magic can easily be obtained as an application using more sophisticated trace outs. Recall that 
 \[ \Phi(\rho) \lel tr_{F}(U(|0\ran\lan 0|\ten \rho)U^*) \] 
can be obtained with an environment such that $|F|=|A|^2$.  
\begin{theorem}\label{chm}  Let $G$ be the group for $\hat{A}=FA$ in \cref{main}. Let $\hat{E}=EF$ the environment obtained by adding one qubit to $F$. Let $\hat{C}_V(G)$ be the closed class of channels obtained by applying unitaries in $G$, partial trace outs in $E$ and a final trace out in $F$, and closed under composition. Then $\hat{C}_V(G)$ is the class of all channels. 
\end{theorem}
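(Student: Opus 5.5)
The plan is to reduce channel universality on $A$ to unitary and state universality on the enlarged system $\hat A=FA$, and then read off the channel from a Stinespring dilation. Fix an arbitrary channel $\Phi$ on $A$. Stinespring with an environment $F$, $|F|=|A|^2$, gives a unitary $W\in U(FA)$ such that
\[
\Phi(\rho)=\tr_F\bigl(W(|0\ran\lan 0|^F\otimes\rho)W^*\bigr).
\]
So it suffices to realise, inside $\hat C_V(G)$, two operations up to arbitrarily small error: (a) the preparation map $\rho\mapsto |0\ran\lan 0|^F\otimes\rho$, and (b) conjugation $\Ad_W$ on $FA$. After that, the final trace out over $F$ produces $\Phi$. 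Closedness of $\hat C_V(G)$ under composition and limits then shows that every channel belongs to it. The reverse inclusion is trivial.

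For (b) I apply the main construction of \cref{main} with $\hat A=FA$ as the system and $E$ as the small environment. The first-order slicing mechanism shows that the unitary group $\mc U_{C_V(G)}$ realised on $\hat A$ is all of $U(\hat A)$. Hence $\Ad_W\in C_V(G)\subseteq\hat C_V(G)$. This takes care of the coherent part.

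The main obstacle is (a): resetting $F$ to $|0\ran$ while leaving $A$ untouched, including its correlations with anything else. Transitivity on densities of $FA$ is not enough, since it only maps some input to some output and does not give a channel of the form $\rho\mapsto|0\ran\lan0|\otimes\rho$. I would instead build a reset of $F$ one qubit at a time, which is the reason for the extra qubit in $\hat E=EF$. Take the added environment qubit prepared in $|0\ran$ (this lies in $V$). Using $U(\hat A)$ together with the $E$-qubits, implement a SWAP between that qubit and the $j$-th qubit of $F$, then trace out the environment qubit. The net effect is
\[
\rho\mapsto |0\ran\lan0|_{F_j}\otimes\tr_{F_j}\rho .
\]
Concretely I would verify the following chain. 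The SWAP with an environment qubit is in the closure, because the argument for (b), applied to the system $FA$ plus one qubit, gives all unitaries there. Tracing out that qubit is an allowed partial trace in $E$. Composing over all qubits of $F$ yields $\rho\mapsto|0\ran\lan0|^F\otimes\tr_F\rho$, which on inputs of the form $\sigma\otimes\rho$ is exactly the preparation in (a).

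Finally, I combine the pieces. Start with $\rho$ on $A$, regarded as part of an arbitrary state on $FA$. Apply the reset, then $\Ad_W$, then $\tr_F$. The output is $\Phi(\rho)$, independent of the initial content of $F$. Every step lies in $\hat C_V(G)$ up to $\eps$. The approximation errors add, since channels are trace-norm contractive. Closedness of the class therefore gives $\Phi\in\hat C_V(G)$.
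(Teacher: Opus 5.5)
Your main line is the paper's own argument. It consists of a Stinespring dilation with $|F|=|A|^2$, then the slicing construction applied to the enlarged system $\hat A=FA$ to get $\mathcal U_{\mathcal C_V(G)}=U(FA)$ and hence $\Ad_W$, then a final $\tr_F$. This is exactly \cref{chlemma} and \cref{chmagic}.

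The one place you go beyond the paper is the reset of $F$ in (a), and that step fails as justified. The ``one qubit added to $F$'' in $\hat E=EF$ is the preparation register $E$ itself, and it is discarded after every use of an element of $G$. The slicing argument only produces Hamiltonians and unitaries on the system side: $\Ham_{\mathcal C_V(G)}\subseteq\mathcal L_{sa}(FA)$ and $\mathcal U_{\mathcal C_V(G)}\subseteq U(FA)$. It never gives you a prescribed unitary coupling the prepared qubit to $F_j$. A SWAP between $E$ and $F_j$ is in general not in $G$; in model (3), $G$ is a one-parameter group. Your phrase ``running (b) on $FA$ plus one qubit'' requires a different group acting on a larger space, which changes the hypothesis of the theorem. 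So your chain does not place $\rho\mapsto\ket{0}\bra{0}_{F_j}\otimes\tr_{F_j}\rho$ in $\hat C_V(G)$.

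The step is, however, not needed. In the paper, $F$ is initialised once in $\ket{0}_F$ as a primitive of the class. This is the map $\Phi^w_\Psi(\rho)=\tr_F(\Psi(\ket{w}\bra{w}\otimes\rho))$, $w\in W$, of the channel-universality section, matching the Stinespring formula recalled before the statement. So $\Ad_{U(FA)}$ and the final trace already suffice.

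If you do want to allow arbitrary initial content in $F$, build the reset inside $\mathcal C_V^{FA}(G)$ from its dissipative direction rather than from an external SWAP. For models (2) and (3) this direction is $L_\sigma$ with $\sigma=\ket{1}\bra{0}$ on the first qubit of $FA$. Since $e^{tL_{\sigma\otimes I}}=e^{tL_\sigma}\otimes\id$, it converges to $\rho\mapsto\ket{1}\bra{1}\otimes\tr_1\rho$ as $t\to\infty$, and this limit lies in the class by closedness. Conjugating by $X_1$ and by SWAP unitaries from $U(FA)$ then resets each qubit of $F$ in turn.

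Your remark that state transitivity alone would not give such a map is correct. What you missed is that the class already contains this stronger partial-reset limit.
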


The paper is organized as follows: In Section 2, we introduce the open-system framework and the channel classes generated by unitary evolution, state preparation, and partial trace. Section 3 gives general criteria for state transitivity and relates them to interaction Hamiltonians. Section 4 and 5 contain our two main constructions: a polynomial-dimensional Lie group generated by local Pauli strings, and a single open-system magic Hamiltonian. Section 6 studies approximation length and proves exponential lower bounds for the number of open system rounds required to prepare worst-case pure states and density matrices. Section 7 extends the framework to channel universality and channel complexity. We conclude in Section 8 with several open directions and the connection to variational quantum eigensolver (VQE) and the work of \cite{diaz2023, Larocca2025}.

\textbf{Acknowledgment}: The third named author wants to thank Luke Visser and Jason Pollack for inspiring conversations.
The first and the third named authors also thank 
Roy Araiza and Milad Marvian for many stimulating discussions.
Part of this research was performed while the third named author was visiting the Institute for Pure and Applied Mathematics (IPAM), which is supported by the National Science Foundation (Grant Nos. DMS-1925919 and DMS-2422832).

\section{Channel Class}
Our Hilbert space will consist of the system space $A \cong \mathbb (\mathbb C^2)^{\otimes n}$ and environment $E \cong (\mathbb C^2)^{\otimes m}$. Throughout most of the paper, we work with a small environment, typically with $m\in\lset 1,2,3\rset$. We use $\mc L(H)$ to represent the space of operators on the Hilbert space $H$ and $\mc L_{sa}(H)$ for self-adjoint operators. In order to capture open system properties, we have to work with channels and densities rather than unitaries and vectors:
\begin{definition} \mbox{}
\begin{enumerate}
    \item  A \textit{state} or \textit{density matrix} is a matrix $\rho \in \mc L(A)$ with $\tr(\rho) = 1$,  $\rho \geq 0$ and $\rho = \rho^*.$ We write $\mc D(A)$ for the space of all states.
    \item A \textit{channel}
    is a completely positive trace preserving linear map $\Phi: \mc L(A) \to \mc L(A).$
    \end{enumerate}
\end{definition}

We will place the environment registers before the system registers, so our total Hilbert space will be $EA$. Discussion of channel magic will be left to \cref{channelmagic}. Our problems will generally consist of finding transitive protocols on the space of density matrices or in approximating all channels $\Phi: \mc D(A) \to \mc D(A)$.

\begin{definition}
\mbox{}
    \begin{enumerate}
        \item Let $\mc C$ be a class of channels. $\mc C$ is called \textit{state transitive} if for any states $\rho, \sigma \in \mc D(A)$ we have $\exists \Phi \in \mc C$ with $\Phi(\rho) = \sigma.$
        
        \item Let $\mc C$ be a class of channels. We call $\mc C$ \textit{channel universal} if its topological closure $\overline{\mc C}$, where the closure is taken with respect to the diamond norm on superoperators, contains all channels.
    \end{enumerate}
\end{definition}
We refer the readers to \cite{Watrous_2018} for background on diamond norm. All  of the protocols in this paper will require a set of preparable states $V \subseteq E$, consisting of unit vectors, and a set of resources $S$. We require our resources to be given by self-adjoint matrices, $S \subseteq \mc L_{sa}(EA)$. In general the set $S$ will be given by Pauli strings. 
\begin{definition}[The Gate Sets] \mbox{}
\begin{enumerate} 
    \item To a resource set $S \subseteq \mc L_{sa}(EA)$ we associate the Lie algebra $\mf g(S) \subseteq  \mf u(EA)=i\mc L_{sa}(EA)$ by 
    \[
    \mf g(S) = \text{span}_{\mb{R}}\{ iS \cup [S, S] \cup i[[S, S],S] \cup i[S, [S,S]] \cup [[[S,S],S],S] \dots \}.
    \] This is the smallest Lie algebra in $i\mc L_{sa}(EA)$ containing $iS$. 

    \item To a generator set $S$ we associate the group $G(S) \subseteq SU(EA)$ as the Lie group generated by $\mf g(S)$. 
\end{enumerate}
\end{definition}
In general $\mf g(S)$ will be inside the lie algebra $\mf{su}(EA) = \{s \in \mc L(EA): s = -s^*, \, \tr(s) = 0\}.$ We think of the group $G(S)$ as representing which operations are allowed on the the global Hilbert space $EA$. Every unitary in $G(S)$ can be approximated by unitaries in the set $\exp(itS)$ thanks to the celebrated Chow-Rachevskii Theorem, see \cite{LeDonne2021SubRiemannian}. Iterated commutators are obtained from the zigzagging formula 
 $$e^{iat}e^{ibt}e^{-iat}e^{-ibt} = e^{-[a,b]t^2+ O(t^3)}.$$

Starting from the set of unitary gates $G(S) \subseteq U(EA)$ we can generate a class of channels.
To do so we require a partial trace $tr_E$ and state preparation $\rho \mapsto \ket v  \bra v \otimes  \rho: \mc D(A) \to \mc D(EA)$ for some $\ket v \in E$.  

\begin{definition}[The Channel Set] \mbox{}
\begin{enumerate}
    \item Let $V \subseteq E$ be a set of preparable states and let $G \subseteq U(EA)$ be some unitary group. To any $\ket v \in V$ and $U \in G$ we define the channel $$\Phi^v_U(\rho) = \tr_E(U \lpara \ketbra{v} \otimes \rho \rpara U^*): \mc D(A) \to \mc D(A).$$

    \item The \textit{channel class} $\mc C_V(G)$ associated to the pair $(V,G)$ is the smallest set of channels $\mc D(A) \to \mc D(A)$ which contains the set $\{\Phi_{v,U}: \ket v \in V, U \in G\}$, is topologically closed and closed under composition.
     
\end{enumerate}
\end{definition}

To understand the class $\mc C_V(G)$, we will analyze its Lindbladian generators.  The structure of the generators of channels was studied by Lindblad \cite{Lindblad1976} and independently by Gorini, Kossakowski and Sudarshan \cite{GKS}. We recall some of the results we will be using in this paper:
\begin{definition}
    A \textit{Lindbladian} is a map $L: \mc L(A) \to \mc L(A)$ which generates a semigroup of quantum channels, that is to say $\exp(tL)$ is a quantum channel for all $t \geq 0$.
\end{definition}
\begin{fact}
    A Lindbladian is always of the following form:
    $$L(\rho) = i[H,\rho] + \sum_j^r \gamma_j  \lpara 2 a_j \rho a_j^* - a_j^* a_j \rho - \rho a_j^* a_j  \rpara$$
    where $H=H^*$, $\gamma_j \geq 0$, and $a_j \in \mc L(A)$. To isolate the dissipative part of the Lindbladian we will write:
    $$L_{a_j} := 2 a_j \rho a_j^* - a_j^* a_j \rho - \rho a_j^* a_j  $$
    so that we have
    $$L = i \ad_H + \sum_i^r \gamma_j L_{a_j}.$$ 
The operators $a_i$ are called \textit{jump operators}. Here and in the following  we use the notations $$\ad_{H}(\rho)=[H,\rho]\qquad \Ad_U(\rho)=U\rho U^*$$ for the Hamiltonian part and unitary conjugation channel.
\end{fact}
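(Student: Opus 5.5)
This Fact is the finite-dimensional Gorini--Kossakowski--Sudarshan--Lindblad theorem, and I would prove it through the Choi matrix of $L$. Work on $\mc L(A)$ with $d=\dim A$, and let $\Omega=\sum_{k}|k\ran\ten|k\ran\in A\ten A$. The Choi matrix of a linear map $\Phi$ is $C_\Phi=(\id\ten\Phi)(|\Omega\ran\lan\Omega|)$. Since $e^{tL}$ is a channel for every $t\ge 0$, the following hold.
\begin{itemize}
\item Each $e^{tL}$ is Hermiticity preserving. Differentiating at $t=0$, $L(\rho^*)=L(\rho)^*$, so $C_L$ is self-adjoint.
\item Each $e^{tL}$ is trace preserving. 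Differentiating gives $\tr L(\rho)=0$ for all $\rho$.
\item Each $e^{tL}$ is completely positive, so $C_{e^{tL}}\ge 0$.
\end{itemize}

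The key step is conditional complete positivity. We have $C_L=\lim_{t\downarrow 0}\frac1t\bigl(C_{e^{tL}}-|\Omega\ran\lan\Omega|\bigr)$. Let $P$ be the orthogonal projection onto $\Omega^\perp$. Then $P|\Omega\ran\lan\Omega|P=0$, so
$$P C_L P=\lim_{t\downarrow 0} \tfrac1t P C_{e^{tL}} P\ \ge 0.$$
Spectrally decompose $PC_LP=\sum_j |\beta_j\ran\lan\beta_j|$ and translate each vector $\beta_j$ into an operator $b_j$ via $\beta_j=(1\ten b_j)\Omega$. The map $\Psi(\rho)=\sum_j b_j\rho b_j^*$ then has Choi matrix $PC_LP$.

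Next I would identify the remainder. Since $C_L$ is self-adjoint,
$$C_L-PC_LP=|\Omega\ran\lan\phi|+|\phi\ran\lan\Omega|$$
for some vector $\phi$. Here $\phi$ absorbs half of the $\Omega\Omega$-component, which is a real multiple of $\Omega$ because $C_L$ is self-adjoint. Writing $\phi=(1\ten k)\Omega$, a direct computation shows that this remainder is the Choi matrix of $\rho\mapsto k\rho+\rho k^*$. Hence
$$L(\rho)=\sum_j b_j\rho b_j^*+k\rho+\rho k^*.$$
I expect this bookkeeping with the off-diagonal part of the Choi matrix, and checking that it corresponds exactly to left and right multiplication, to be the main technical obstacle. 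Everything else is formal.

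Finally I would use trace preservation. The condition $\tr L(\rho)=0$ for all $\rho$ gives $\sum_j b_j^*b_j+k+k^*=0$. So the self-adjoint part of $k$ is $-\tfrac12\sum_j b_j^*b_j$, and we may write $k=-\tfrac12\sum_j b_j^*b_j+iH$ with $H=H^*$. Substituting,
$$L(\rho)=i[H,\rho]+\tfrac12\sum_j\bigl(2b_j\rho b_j^*-b_j^*b_j\rho-\rho b_j^*b_j\bigr).$$
Putting $a_j=b_j$ and $\gamma_j=\tfrac12$ gives the stated form with $\gamma_j\ge0$; equivalently one may normalize the $a_j$ and absorb the norms into the $\gamma_j$. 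The number $r$ of jump operators is at most $d^2-1$, since $\operatorname{rank}(PC_LP)\le d^2-1$.
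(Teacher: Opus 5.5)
Your proof is correct and complete for what the Fact asserts, namely that every Lindbladian has the GKSL form. The step you flagged as the main obstacle is a two-line check: $\rho\mapsto k\rho$ has Choi matrix $(1\otimes k)|\Omega\rangle\langle\Omega|=|\phi\rangle\langle\Omega|$, and $\rho\mapsto\rho k^*$ has Choi matrix $|\Omega\rangle\langle\phi|$.

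The paper gives no proof of this statement. It records it as a Fact and defers to Lindblad \cite{Lindblad1976} and Gorini--Kossakowski--Sudarshan \cite{GKS}. Your route is essentially the finite-dimensional GKS argument written in Choi-matrix language. Their positivity of the coefficient matrix on the traceless part of an operator basis is exactly your conditional complete positivity $PC_LP\ge 0$. Their separation of the terms involving the identity element of the basis is your decomposition $C_L=PC_LP+|\Omega\rangle\langle\phi|+|\phi\rangle\langle\Omega|$.

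Lindblad's original proof is operator-algebraic: it uses complete dissipativity for norm-continuous semigroups on $B(\mathcal H)$. That is more general, since it allows infinite dimensions, but much less elementary. Your approach gives a short self-contained argument with some extra information: $r\le d^2-1$, and the jump operators can be taken traceless, because $\beta_j\perp\Omega$ means $\tr b_j=0$.

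Like the Fact as stated, you do not prove the converse, that every $L$ of this form generates completely positive trace-preserving maps. That is not part of the claim. The paper also does not seem to rely on it, since its Lindbladian semigroups arise as limits of channels through \cref{difflemma}.
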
 
The channel class will contain some unitary conjugation channels in particular those given  by $1\ten U \in G(S)$. We will substantially go beyond. 

We introduce a definition to isolate this class of extra unitaries achievable on the system via open system dynamics:

\begin{definition}[The Unitary Set]
\mbox{}
\begin{enumerate}
    \item The Lindbladian set of $\mathcal C_V(G)$, denoted by $\Lin_{\mc C_V(G)}$ is the set of Lindbladians that generate $\mc C_V(G)$: 
    $$ \Lin_{\mc C_V(G)} := \{ L: \exp(tL) \in C_V(G) \;\forall t \geq 0 \}$$

    \item The Hamiltonian set of $\mathcal C_V(G)$, denoted by $\Ham_{\mc C_V(G)}$ is the set of Hamiltonians $H$ which lie in $\Lin_{\mc C_V(G)}$ by adjoint action:
    $$ \Ham_{\mc C_V(G)} := \{ H \in \mc L_{sa}(A): H=H^*, \; i \ad_H \in \Lin_{\mc C_V(G)}\}$$

    \item The set of $\mathcal C_V(G)$, denoted by $\mc U_{\mc C_V(G)}$ is the set of unitaries whose conjugations are in the channel class:
    $$\mc U_{\mc C_V(G)} = \{U \in U(A): \Ad_U \in \mc C_V(G)  \}$$
\end{enumerate}
\end{definition}

To describe the properties of these three sets, we will need the following key lemma:

\begin{lemma}\label{difflemma}
    Let $\Phi_t \in \mc C_V(G)$ be a continuous path of channels with $\Phi_0 = \id$ which admits a  right derivative as
    \[ 
    L  := \lim_{t \to 0^+} \frac{\Phi_t - \id}{t} \pl. 
    \] 
Then we have 
    \[
    L \in \Lin_{\mc C_V(G)} \qquad \exp(sL) \in \mc C_V(G) \; \forall s \geq 0.
    \]
    Similarly, if $L$ exists and $L=0$, and the second right derivative $L_2$ exists, then $e^{tL_2}\in C_V(G)$.
\end{lemma}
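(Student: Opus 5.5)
The plan is to use the standard Euler/Chernoff product-formula argument, which relies only on two closure properties of $\mc C_V(G)$: it is closed under composition and topologically closed. Since $\Phi_t$ has right derivative $L$ at $0$, we may write
\[
\Phi_t \lel \id + tL + o(t), \qquad t\to 0^+ .
\]
For fixed $s\geq 0$ and $k\in\mathbb N$, the channel $\Psi_k := (\Phi_{s/k})^k$ lies in $\mc C_V(G)$ because the class is closed under composition. We will show that $\Psi_k\to \exp(sL)$ in norm. Every norm on $\mc L(\mc L(A))$ is equivalent, since it is finite dimensional, so this convergence also holds in the diamond norm. Closedness then gives $\exp(sL)\in\mc C_V(G)$, which is exactly $L\in \Lin_{\mc C_V(G)}$.

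The convergence is the usual telescoping estimate. Put $T:=\Phi_{s/k}$ and $S:=\exp(sL/k)$. Then
\[
T^k - S^k \lel \sum_{j=0}^{k-1} T^{j}(T-S)S^{k-1-j}.
\]
Here $\|T^j\|_\diamond\le 1$ because $T$ is a channel. We have $\|S^{m}\|\le e^{s\|L\|}$ in any fixed operator norm; alternatively one can note that $\exp(tL)$ is a limit of channels, so $S^m$ is itself a channel. We also have $\|T-S\| = o(1/k)$, because both $T$ and $S$ equal $\id + (s/k)L + o(1/k)$. Summing the $k$ terms gives $\|T^k-S^k\|\le k\cdot e^{s\|L\|}\, o(1/k)\to 0$.

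For the second statement, assume $L=0$ and that the second right derivative exists. We read this as
\[
\Phi_t \lel \id + \tfrac{t^2}{2}L_2 + o(t^2).
\]
The natural move is to reparametrize by setting $\tilde\Phi_u := \Phi_{\sqrt{2u}}$, or $\Phi_{\sqrt u}$ depending on how the $\tfrac12$ is normalized. This is again a continuous path in $\mc C_V(G)$ with $\tilde\Phi_0=\id$, and its first right derivative is $\lim_{u\to0^+}(\tilde\Phi_u-\id)/u = L_2$, up to the normalizing constant. The first part, applied to $\tilde\Phi_u$, then gives $e^{tL_2}\in\mc C_V(G)$ for all $t\ge 0$; any positive constant factor is absorbed into $t$. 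The product formula then reads $(\Phi_{\sqrt{t/k}})^k \to e^{tL_2}$ (with the same normalization), which is the zigzag mechanism mentioned after the definition of $\mf g(S)$.

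The step most likely to need care is interpreting "second right derivative" precisely enough that the expansion $\Phi_t=\id+\tfrac{t^2}{2}L_2+o(t^2)$ holds. If it is defined as the limit of $(\Phi_t-\id)/t^2$, the expansion is immediate. If it is defined instead as the derivative of the first derivative, we need a Taylor argument with remainder. This works if $t\mapsto\Phi_t$ is differentiable on a small interval $(0,\delta)$ and $\lim_{t\to0^+}\Phi_t'' = L_2$. In that case Taylor's theorem on $[0,t]$, combined with $\Phi_0'=L=0$, gives the $o(t^2)$ remainder. All the other steps are routine finite-dimensional estimates.
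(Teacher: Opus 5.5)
Your proof is correct and follows the paper's argument: a telescoping estimate shows $(\Phi_{s/k})^k \to e^{sL}$, and closure of $\mc C_V(G)$ under composition and limits finishes the proof. The differences are cosmetic: you compare against $e^{sL/k}$ rather than the Euler step $\id+\frac{s}{k}L$, and you get the second-order statement by the reparametrization $u\mapsto\Phi_{\sqrt{2u}}$ instead of rerunning the estimate with $\phi_{\sqrt{2/n}\,t}$ as the paper does (the paper likewise reads the second derivative as the expansion $\Phi_t=\id+\frac{t^2}{2}L_2+o(t^2)$).
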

\begin{proof}
    We have $$\Phi_t = \id + t L + R_t$$ with $R_t = o(t)$. We also have $$\exp(tL) = \lim_{n \to \infty} \lpara \id + \frac{t}{n} L\rpara^n.$$ Letting $\phi_t = \id + tL$ we write:
\begin{align*}
    \|\phi_t^n-\Phi_{tn}\|
    &\le \sum_{j=0}^{n-1} \|\phi_t^j(\phi_t-\Phi_t)\Phi_{t(n-j)}\| 
    \le \|\phi_t-\Phi_t\| \sum_{j=0}^{n-1} \|\phi_t\|^j \\
    &\le \|R_t\| \frac{(1+t\|L\|)^{n}-1}{t\|L\|}
    \le \|L\|^{-1} \frac{\|R_t\|}{t} e^{nt\|L\|} \pl .
\end{align*}
  For the first assertion we can assume $L \neq 0$. Thus   \[ \lim_n \|\phi_{s/n}^n-\Phi_s\| \lel 0 \pl .\] For the second assertion,we take  $R_t=\Phi_t-(\id+\frac{t^2}{2}L_2)$ which satisfies $R_t = o(t^2)$.  
We let $\phi_t := \id+\frac{t^2}{2}L_2$ which satisfies
 \[ \lim_n \phi_{\sqrt{\frac{2}{n}}t}^n \lel e^{t^2L_2} .\] 
We see
 \[ \|\phi_{\sqrt{\frac{2}{n}}t}^n-\Phi_{\sqrt{\frac{2}{n}}t}^n\| 
  \kl \frac{\|R_{\sqrt{\frac{2}{n}}t}\|}{\frac{2t^2}{n}} \pl \frac{e^{t^2\|L_2\|}}{2}  \]
converges to $0$. Now by the fact that $\mc C_V(G)$ is topologically closed and closed under composition, we see that the lemma holds.
\end{proof}

\begin{corollary}\label{lieham} If $\pm H_1, \pm H_2 \in \Ham_{\mc C_V(G)}$, then $\pm i[H_1,H_2]\in \Ham_{\mc C_V(G)}$.
In particular, if $\Ham_{\mc C_V(G)}$ is a real vector space it is a Lie algebra. 
\end{corollary}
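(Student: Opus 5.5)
We use the zigzag (group commutator) construction together with the second-order part of \cref{difflemma}. Since $\pm H_1,\pm H_2\in\Ham_{\mc C_V(G)}$, all four unitary channels $\Ad_{e^{\pm itH_1}}$ and $\Ad_{e^{\pm itH_2}}$ lie in $\mc C_V(G)$ for every $t\ge 0$. Here $\exp(t\, i\ad_{H})=\Ad_{e^{itH}}$ up to the sign convention for $\ad$, and that convention is harmless because both signs are available. Closure under composition then gives that
\[
\Phi_t \lel \Ad_{e^{itH_1}}\Ad_{e^{itH_2}}\Ad_{e^{-itH_1}}\Ad_{e^{-itH_2}} \lel \Ad_{W_t},\qquad W_t=e^{itH_1}e^{itH_2}e^{-itH_1}e^{-itH_2},
\]
is a continuous path in $\mc C_V(G)$ with $\Phi_0=\id$.

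Next we expand $W_t$. By the zigzag formula recalled after the definition of $G(S)$, $W_t=e^{-t^2[H_1,H_2]\cdot(i^2)+O(t^3)}$, that is, $W_t=\exp\big(-t^2[H_1,H_2]+O(t^3)\big)$ with $a=H_1,b=H_2$. Write $K=i[H_1,H_2]$, which is self-adjoint, so that $W_t=e^{it^2K+O(t^3)}$. Then
\[
\Phi_t \lel \id + t^2\, i\ad_K + O(t^3).
\]
Hence the first right derivative of $\Phi_t$ at $0$ exists and equals $0$, and the second right derivative exists and equals $L_2=2i\ad_K$ (the factor $\tfrac12$ in \cref{difflemma} absorbs this). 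The second assertion of \cref{difflemma} therefore gives $e^{s\,2i\ad_K}\in\mc C_V(G)$ for all $s\ge0$, so $K\in\Ham_{\mc C_V(G)}$ after rescaling time. For $-K$ we swap the roles of $H_1$ and $H_2$, since $[H_2,H_1]=-[H_1,H_2]$. This shows $\pm i[H_1,H_2]\in\Ham_{\mc C_V(G)}$.

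The one point needing care is the remainder bound. \Cref{difflemma} needs $\Phi_t-\id-\tfrac{t^2}{2}L_2=o(t^2)$ in the superoperator norm. We get this by expanding each of the four exponentials to third order, or by using Baker--Campbell--Hausdorff; in finite dimensions the remainder is uniformly $O(t^3)$, and $\Ad$ is smooth, so $\Ad_{W_t}$ inherits the same expansion.

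For the ``in particular'' part, suppose $\Ham_{\mc C_V(G)}$ is a real vector space. Then $H\in\Ham_{\mc C_V(G)}$ implies $-H\in\Ham_{\mc C_V(G)}$, so the hypothesis holds for every pair of elements. The bracket $(H_1,H_2)\mapsto i[H_1,H_2]$ therefore maps the space into itself. This bracket is bilinear, antisymmetric and satisfies the Jacobi identity, since it is the commutator on $i\Ham_{\mc C_V(G)}\subseteq\mf u(A)$ transported by the factor $i$. Hence $\Ham_{\mc C_V(G)}$ is a Lie algebra, equivalently $i\Ham_{\mc C_V(G)}$ is a Lie subalgebra of $\mf u(A)$.
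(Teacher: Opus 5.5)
Your proof is correct and is the paper's argument: you form the group commutator of the four unitary channels $\Ad_{e^{\pm itH_1}},\Ad_{e^{\pm itH_2}}$, expand it as $\Ad_{W_t}=\id+t^2\,i\ad_{i[H_1,H_2]}+O(t^3)$, and invoke the second-order part of \cref{difflemma}. You do this more explicitly than the paper, and you also obtain $-i[H_1,H_2]$ by swapping $H_1,H_2$ and check the ``in particular'' clause. The only blemish is cosmetic: the stray factor $(i^2)$ in your first expression for $W_t$ would flip the sign, but your following form $W_t=\exp(-t^2[H_1,H_2]+O(t^3))$ is the right one, and the sign is immaterial anyway since both signs are obtained.
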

\begin{proof}
    Let $H_1, H_2 \in \Ham_{\mc C_V(G)}$. We have $e^{\ad_{iH_1t}}e^{\ad_{iH_2t}} e^{\ad_{-iH_1t}} e^{\ad_{-iH_2t}} \in \mc C_V(G)$. Now we note that $e^{\ad_{iH_1t} }e^{\ad_{iH_2t}} e^{\ad_{-iH_1t}} e^{\ad_{-iH_2t}} = \Ad_{e^{iH_1t} e^{iH_2t} e^{-i H_1t} e^{-i H_2 t}}$. We have that $$\Ad_{e^{iH_1t} e^{iH_2t} e^{-i H_1t} e^{-i H_2 t}} = \Ad_{e^{[H_1,H_2]t^2 + O(t^3)}}.$$ And now by \cref{difflemma} we have that $i[H_1,H_2] \in \Ham_{\mc C_V(G)}$. 
\end{proof}
\begin{remark}
     If $G$ is a Lie group then $\mc U_{\mc C_V(G)}$ contains the Lie group generated by $\Ham_{\mc C_V(G)}$. In general $\mc U_{\mc C_V(G)}$ is a compact group.
\end{remark}

Note now that we have two different sets of unitaries, one set which is given to us as allowable operations $G(S) \subseteq U(EA)$ on the global space and another set $\mc U_{\mc C_V(G)} \subseteq U(A)$ which we only get after iteratively applying the open system technique of state preparation. To make this tractable it serves us to define a slicing map $(S,V) \hookrightarrow \hat S_V \subseteq \Ham_{\mc C_V(G)}.$ To define such an embedding we require two lemmas: 

\begin{proposition}\label{embed}
    Let $P_j \in \mc L_{sa}(E)$ and $Q_j \in \mc L_{sa}(A)$ for $1 \leq j \leq k$. Assume $\sum_j^k P_j \otimes Q_j \in S$ and $v \in V$ with $G = G(S)$. Then we have: \[ \om_v(T) \lel T_v \pl :=\pl   \sum_{i=1}^k \braket{v | P_j | v} Q_j \in \Ham_{\mc C_V(G)}.
    \]
\end{proposition}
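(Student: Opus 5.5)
We plan to build the path $\Phi_t := \Phi^v_{U_t}$ with $U_t = e^{itT} \in G(S)$, where $T=\sum_j P_j\otimes Q_j$, and then apply \cref{difflemma}. The steps are: first check that $U_t$ lies in $G=G(S)$; then compute the right derivative of $\Phi_t$ at $t=0$; finally read off from that derivative that it is a Hamiltonian generator.

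The membership is immediate, since $iT\in\mf g(S)$ gives $e^{itT}\in G(S)$. Strictly speaking $G(S)\subseteq SU(EA)$. If $T$ is not traceless, $e^{itT}$ differs from an element of $SU(EA)$ only by a global phase, and this phase cancels in $\Ad_{U_t}$, so $\Phi_t\in\mc C_V(G)$ anyway.

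Next we check that $\Phi_t$ is a continuous path of channels with $\Phi_0 = \tr_E(\ketbra{v}\otimes\,\cdot\,)=\id$, using that $v$ is a unit vector. We expand
\[ U_t(\ketbra{v}\otimes\rho)U_t^* = \ketbra{v}\otimes\rho + it[T,\ketbra{v}\otimes\rho] + O(t^2), \]
where the $O(t^2)$ holds uniformly in norm because everything is finite dimensional. Applying $\tr_E$ to the commutator gives
\[ \sum_j \tr(P_j\ketbra{v})\,Q_j\rho - \sum_j \tr(\ketbra{v}P_j)\,\rho Q_j = \sum_j \braket{v|P_j|v}\,[Q_j,\rho] = [T_v,\rho]. \]
Hence the right derivative exists and equals
\[ L=\lim_{t\to0^+}\frac{\Phi_t-\id}{t} = i\ad_{T_v}. \]
\cref{difflemma} then gives $e^{si\ad_{T_v}}\in\mc C_V(G)$ for all $s\ge0$, that is, $T_v\in\Ham_{\mc C_V(G)}$. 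The operator $T_v$ is self-adjoint because each $P_j$ is self-adjoint, so each $\braket{v|P_j|v}$ is real.

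We can also get $-T_v$ by running the same argument with $U_{-t}=e^{-itT}\in G$. This is worth recording because \cref{lieham} needs both signs. Alternatively, $e^{-si\ad_{T_v}}$ can be obtained from compactness: the closure of $\{e^{si\ad_{T_v}}:s\ge0\}$ is a compact group, so it contains the inverses.

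No step here is hard. The only real care points are the identity $\Phi_0=\id$, which uses the unit norm of $v$, and the phase issue between $U(EA)$ and $SU(EA)$. The rest is the first-order expansion together with \cref{difflemma}.
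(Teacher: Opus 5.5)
Your proposal is correct and follows the paper's proof. You use the same path $\Phi^v_{e^{itT}}$, the same first-order computation $L(\rho)=i\tr_E[T,\ketbra{v}\otimes\rho]=i\ad_{T_v}(\rho)$, and the same appeal to \cref{difflemma}; your extra remarks on the global phase, on $\Phi_0=\id$, and on obtaining $-T_v$ from $e^{-itT}$ are valid points the paper leaves implicit.
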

\begin{proof}
    Let $T = \sum_j^k P_j \otimes Q_j$ with $v \in V$. Take $\Phi_t = \Phi^v_{\exp(iTt)}=\tr_E e^{iTt}(\ket v\bra v\otimes (\cdot)) e^{-iTt}$. Writing $L = \lim_{t \to 0^+} \frac{1}{t} \Phi_{t}$ we calculate
    \begin{align*}
    L(\rho) &= i \tr_E[T, \ket v \bra v \otimes \rho ] \lel  \sum_{j=1}^k i \braket{v|P_j|v} [Q_j, \rho] 
    \end{align*}
    By \cref{difflemma} we deduce the assertion.
\end{proof}

\begin{remark}
Using \cref{embed} we can now define a map $(S,V) \to \hat S_V$ by taking any $\sum P_j \otimes Q_j \mapsto \braket{v | P_j | v} Q_j$. Note that this embedding does not get us all Hamiltonians in $\Ham_{\mc C_V(G)}.$ In general $\hat S_V$ generates a large subspace and large Lie sub-algebra of $\Ham_{\mc C_V(G)}.$
\end{remark}
We can picture the underlying relations of the above operations by the following diagram:

\begin{figure}[h]
\label{fig:channelchart}

\begin{tikzcd}
 & [-1 cm] EA \to EA   & \mathcal D(A) \to \mathcal D(A)  & [-1.3 cm]  \\
 [-0.5 cm]
\text{Operations} & G(S) \arrow[r, dashed, "{V:  \text{ state prep}}"] & \mathcal C_V(G) \arrow[r, "\supseteq", phantom] \arrow[d]    & \text{Ad}_{\mathcal U_{\mathcal C_V(G)}} \\
\text{Generators} & \mathfrak g(S) \arrow[u]                 & \text{Lin}_{\mathcal C_V(G)} \arrow[r, "\supseteq", phantom] & \ad_{i \Ham_{\mathcal C_V(G)}} \arrow[u]   \\ \text{Resources}
& S \arrow[u, hook] \arrow[rr, dashed, "{V: \text{ state prep}}"] & &    \ad_{i \hat S_V} \arrow[u, hook]   
\end{tikzcd}
\caption{How we go from $S$ to $\mc U_{\mc C_V(G)}$}
\end{figure}

\begin{figure}[t]
\label{fig:channeltable}

\begin{tblr}{|c|c|c|}
\hline
    Set & Definition & Purpose \\ \hline \hline
    $S$ & $S \subseteq \mc L_{sa}(EA)$ & Resource set\\ \hline
    $V$ & $V \subseteq E$ &  Preparable states \\ \hline
    $\mf g(S)$ & Smallest lie algebra containing $iS$ & Approximable generators \\ \hline
    $G(S)$ & Lie group of $\mf g(S)$ & Approximable Gates  \\ \hline
    $\mc C_V(G)$ &  $ \overline{ \{\prod_{i=1}^n \Phi^{v_i}_{U_i}: v_i\in V, U_i \in G, i \in \mathbb N\}}^{\diamond}$ &  Approximable Channels \\ \hline
    $\Lin_{\mc C_V(G)}$ &  $\{ L: \exp(tL) \in C_V(G) \; \forall t \geq 0\}$ & Generators of Channels \\ 
    \hline
    $\Ham_{\mc C_V(G)}$ & $\{ H \in \mc L_{sa}(A): i \ad_H \in \Lin_{\mc C_V(G)}\}$ &  Generators of Unitary Channels \\ 
    \hline $\mc U_{\mc C_V(G)}$ & $\{U \in U(A): \Ad_U \in \mc C_V(G) \}$ & Open System Unitaries \\ 
    \hline 
\end{tblr}
\caption{Table listing the various sets and their definitions}
\end{figure}

In the ideal case, the operations implementable by $G(S)$ will be as small as possible while $\Ham_{C_V(G)}$ will be whole Lie algebra $\mathfrak{su}(A)$, hence 
rich enough for state preparation.  We will see that with the right amount of state preparation it is in fact possible to have generators $S$ so that $\mf g(S) \sim poly(n)$ while $\mc U_{\mc C_V(G)} = SU(A)$. Often times we start with a general set of Pauli strings $S$ that is too small to achieve state-transitivity, and add a Hamiltonian $H$ so that $\mc C_V(G(S \cup \{H\}))$ is state-transitive while $G(S\cup\lset H\rset)$ still remains small. By analogy with magic T-gates \cite{bravyikitaev}, we call such an operator a \textit{magic Hamiltonian.}

\begin{definition}
    Let $S \subseteq \mc L_{sa}(EA)$ be a set of generators and $V \subseteq E$ be a set of preparable states. Let $H$ be a Hamiltonian and define $\tilde S = S \cup \{H\}$. Take $\tilde G = G(\tilde S).$ We call $H$ \textit{state magic} if $\mc C_V(\tilde G)$ is state transitive.
\end{definition}

Our general construction will start with a resource set $S$ containing no interaction between the system and environment. We will then add a magic interaction Hamiltonian $H$ which gets us state magic. The Lie group $G(S)$ is especially easy to calculate in the case where there are no interaction terms:
\begin{proposition}
    Let $S_A \in \mc L_{sa}(A)$ and $S_E \in \mc L_{sa}(E)$ be sets of self-adjoint generators. If $S = 1_E \otimes S_A \cup S_E \otimes 1_A$ with $G = G(S)$ then $\mc C_V(G)$ contains only channels given by unitary conjugation for any $V$. That is to say, $\mc C_V(G) = \Ad_{\mc U_{\mc C_V(G)}}=\Ad_{G(S_A)}$.
\end{proposition}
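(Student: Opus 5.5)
The plan is to show that when $S$ has no interaction terms, the group $G(S)$ factorizes as a product of an environment group and a system group, so every generating channel $\Phi_U^v$ is a single unitary conjugation on $A$.

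\emph{Step 1: the Lie algebra splits.} Since $S = (1_E\otimes S_A)\cup(S_E\otimes 1_A)$, every element of $i(1\otimes S_A)$ commutes with every element of $i(S_E\otimes 1)$. Hence any iterated commutator containing generators of both kinds vanishes. Commutators among elements of the form $1\otimes a$ stay of that form, and likewise for $b\otimes 1$. Therefore
\[
\mf g(S) \lel 1_E\otimes \mf g(S_A) \;+\; \mf g(S_E)\otimes 1_A ,
\]
a direct sum of commuting ideals. Exponentiating gives
\[
G(S) \lel \{ W\otimes U : W\in G(S_E),\, U\in G(S_A)\},
\]
which is already closed as a product of the two closed connected subgroups. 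One should also note a possible scalar ambiguity, for instance when a generator such as $1\otimes 1$ appears. This is harmless, since scalars act trivially under conjugation.

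\emph{Step 2: compute the generating channels.} For $U = W\otimes U'$ and any $v\in V$,
\[
\Phi_U^v(\rho) \lel \tr_E\big(W\ket v\bra v W^*\otimes U'\rho U'^*\big) \lel U'\rho U'^*,
\]
because $W\ket v$ is a unit vector. So every generator of $\mc C_V(G)$ equals $\Ad_{U'}$ with $U'\in G(S_A)$. Conversely, every $\Ad_{U'}$ with $U'\in G(S_A)$ arises in this way, by taking $W=1$ and any $v\in V$; here we assume $V$ is nonempty, which is implicit in the statement.

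\emph{Step 3: closure.} The set $\Ad_{G(S_A)}$ is closed under composition because $G(S_A)$ is a group. It is topologically closed because $G(S_A)$ is compact (a closed subgroup of $U(A)$), and $U\mapsto \Ad_U$ is continuous into the finite-dimensional space of superoperators, where all norms, including the diamond norm, are equivalent. Hence $\Ad_{G(S_A)}$ is the smallest closed composition-closed class containing the generators, so $\mc C_V(G)=\Ad_{G(S_A)}$.

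Finally, we identify $\mc U_{\mc C_V(G)}$. If $\Ad_U=\Ad_{U'}$, then $U^*U'$ commutes with all of $\mc L(A)$, so $U=\lambda U'$ for a phase $\lambda$. Thus $\mc U_{\mc C_V(G)}$ is $G(S_A)$ up to phases, which gives $\Ad_{\mc U_{\mc C_V(G)}}=\Ad_{G(S_A)}$.

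The only delicate point is Step 1, specifically checking that the group generated really is the closed product group and not something subtler. The commuting-ideals argument plus compactness of $G(S_A)$ and $G(S_E)$ handles this. If one insists on $G(S)\subseteq SU(EA)$, a determinant normalization only changes phases, which do not affect any conjugation channel.
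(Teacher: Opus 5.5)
Your Steps 1 and 2 are exactly the paper's argument. The paper splits $\mathfrak g(S)$ into the two commuting pieces, writes $U=(1\otimes U_A)(U_E\otimes 1)$, and observes that the partial trace over $E$ removes $U_E$, leaving $\Ad_{U_A}$; it never addresses closure. The closure step you add rests on a false claim.

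The Lie group generated by $\mathfrak g(S_A)$, i.e.\ the connected subgroup of $U(A)$ with that Lie algebra, need not be closed. Take $n\ge 2$ and $S_A=\{Z_1+\sqrt2\,Z_2\}$. Then $G(S_A)=\{e^{itZ_1}e^{i\sqrt2\,tZ_2}: t\in\mathbb R\}$ is an irrational winding, dense in the torus $T=\{e^{isZ_1}e^{iuZ_2}: s,u\in\mathbb R\}$. Since $T$ meets the scalars only in $\{\pm 1\}$, $\Ad_{G(S_A)}$ is a dense, non-closed one-parameter subgroup of the $2$-torus $\Ad_T$. So two of your claims fail in general: ``already closed as a product of the two closed connected subgroups'' in Step 1, and ``$G(S_A)$ is compact'' in Step 3. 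In this example $\mc C_V(G)=\Ad_T\neq \Ad_{G(S_A)}$.

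What your argument actually proves is $\mc C_V(G)=\overline{\Ad_{G(S_A)}}=\Ad_{\overline{G(S_A)}}$, with $\mc U_{\mc C_V(G)}$ equal to $\overline{G(S_A)}$ up to phases. This still gives the main assertion that $\mc C_V(G)$ contains only unitary conjugations, because $\Ad_{U(A)}$ is compact and hence contains the closure of $\Ad_{G(S_A)}$.

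The literal equality with $\Ad_{G(S_A)}$, which the paper's statement asserts without comment, requires $G(S_A)$ to be closed. This holds, for instance, when $\mathfrak g(S_A)$ is semisimple. It also holds when $S_A$ consists of Pauli strings, the setting of the rest of the paper: the center of $\mathfrak g(S_A)$ is then spanned by $iP$ for commuting Pauli strings $P$, whose exponentials are $2\pi$-periodic, so the group is a product of commuting compact subgroups. Replace your compactness assertion either by the closure statement or by such a hypothesis; the rest of the proposal, including the treatment of scalars and the $SU$ normalization, is fine.
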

\begin{proof}
    We first note that $1_E \otimes S_A$ commutes with $S_E \otimes 1_A$ so that $$\mf g(S) \subseteq i\mc L_{sa}(E) \otimes 1_A \cup i1_E \otimes \mc L_{sa}(A).$$
    Now we see that $\mf g(S)$ splits into two commuting Lie algebras $\mf g_A(S) \oplus \mf g_E(S)$. We then can write any $U \in G(S)$ as $U = (1 \otimes U_A)(U_E \otimes 1)$. We can then calculate 
    $$U \lpara \ket v \bra v \otimes \rho \rpara U^* = U_E (\ket v \bra v) U_E^* \otimes U_A \rho U_A^*.$$
    Hence  $$\tr_E( U (\ket v \bra v \otimes \rho )U^* ) = U_A \rho U_A^*.$$ which is indeed a unitary channel.
\end{proof}

\section{Conditions for state magic}

We continue working with channels and Lindbladians. We will first show sufficient criteria on $\mc \Lin_{\mc C_V(G)}$ in order to achieve state transitivity, and then we will translate those conditions to criteria on an interaction Hamiltonian.

\subsection{Channel Desiderata}

We now explore what criteria are sufficient for a channel class $\mc C_V(G)$ to be state magic. Compiling such sufficiency conditions will provide us with reasonable desiderata and intuitions needed to create small resource sets with magic Hamiltonians. We first recall the following facts about Lindbladians:
\begin{fact}\label{lind_identity}
\begin{enumerate}
    \item \label{uni_change}
    For any unitary $u\in\mc U( A),$ and some jump operator $a$,
    $$L_{uau^*}=\Ad_u\circ L_a\circ \Ad_{u^*}\qquad e^{tL_{uau^*}}=\Ad_u\circ e^{tL_a}\circ \Ad_{u^*}$$

    \item \label{average_change}
    For any matrix $a$ and $b$, 
    $$\frac{1}{2}(L_{a+b}+L_{a-b})=L_a+L_b$$

    \item \label{scalar_change} For any jump operator $a$ and $\gamma\in\mb C$, 
    $$L_{\lambda a}=|\lambda|^2L_a$$

    \item \label{identity_change} $L_I=0$
\end{enumerate}
\end{fact}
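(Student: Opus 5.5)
Each of the four identities in \cref{lind_identity} is a direct algebraic check from the definition $L_a(\rho)=2a\rho a^*-a^*a\rho-\rho a^*a$. I would verify them in the order listed, each by expanding both sides on an arbitrary $\rho\in\mc L(A)$.

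\textbf{Item (1), unitary covariance.} Compute
\[\Ad_u L_a \Ad_{u^*}(\rho)=u\bigl(2a u^*\rho u a^*-a^*a u^*\rho u-u^*\rho u a^*a\bigr)u^* .\]
Since $(uau^*)^*=ua^*u^*$, we have $(uau^*)^*(uau^*)=ua^*au^*$, and $u^*u=1$. The right-hand side is therefore exactly $2(uau^*)\rho(uau^*)^*-(uau^*)^*(uau^*)\rho-\rho(uau^*)^*(uau^*)=L_{uau^*}(\rho)$. The exponential identity follows because $\Ad_u$ and $\Ad_{u^*}$ are mutually inverse. Hence $(\Ad_u L_a\Ad_{u^*})^k=\Ad_u L_a^k\Ad_{u^*}$, and summing the exponential series gives $e^{tL_{uau^*}}=\Ad_u e^{tL_a}\Ad_{u^*}$.

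\textbf{Item (2), averaging.} $L_a$ is quadratic in the pair $(a,a^*)$. Expanding $L_{a\pm b}$, the cross terms $2(a\rho b^*+b\rho a^*)$ and $-(a^*b+b^*a)\rho-\rho(a^*b+b^*a)$ appear with sign $\pm$. They cancel in the average, so $\tfrac12(L_{a+b}+L_{a-b})=L_a+L_b$.

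\textbf{Item (3), scaling.} Every term of $L_{\lambda a}$ carries exactly one factor $\lambda$ and one factor $\bar\lambda$, so $L_{\lambda a}=|\lambda|^2L_a$. Note that the statement's "$\gamma\in\mb C$" should read $\lambda\in\mb C$.

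\textbf{Item (4).} $L_I(\rho)=2\rho-\rho-\rho=0$.

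There is no real obstacle here. The only point needing care is the passage from the generator identity to the semigroup identity in (1), which the power-series argument handles.
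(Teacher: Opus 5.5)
Your verification is correct. Each identity follows by expanding $L_a(\rho)=2a\rho a^*-a^*a\rho-\rho a^*a$, the step $(\Ad_u L_a\Ad_{u^*})^k=\Ad_u L_a^k\Ad_{u^*}$ gives the semigroup form of (1), and your remark that $\gamma$ should read $\lambda$ in (3) is right; the paper only recalls these facts without proof, so your direct expansion is exactly the routine check it leaves implicit.
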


Many of our examples will be built from the same scaffolding: A single non-selfadjoint jump operator, together with full Lie algebra $\mf{su} \in \Ham_{\mc C_V(G)}$, suffices for state transitivity. The following result is taken from \cite{cgj}. This result is predicted in \cite{bwsh16}. A more general variants studied recently in \cite{mved23}

\begin{theorem}[\cite{cgj} Thm 2.8, Ex 5.2]\label{ad_trans}
    Let $SU(2^n)\subseteq \mc U_{\mc C_V(G)}$. Take $a=\ket0\bra1\otimes I^{\otimes n-1}$. If $L_a \in \Lin_{\mc C_V(G)}$ then $\mc C_V(G)$ is state-transitive.
\end{theorem}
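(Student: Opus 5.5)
The plan is to show that every density can be steered into a fixed pure state and then out again to any target. Since $\mc C_V(G)$ is closed under composition, this gives a channel mapping any $\rho$ to any $\sigma$.

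\textbf{Step 1 (reset).} Write $a=\ket0\bra1\otimes I^{\otimes n-1}$. The semigroup $e^{tL_a}$ is the amplitude damping channel on the first qubit, tensored with the identity. As $t\to\infty$ it converges to $T_1(\rho)=\ket0\bra0\otimes\tr_1(\rho)$. Since $\mc C_V(G)$ is topologically closed, $T_1\in\mc C_V(G)$. Conjugating by a qubit swap $W\in SU(2^n)\subseteq \mc U_{\mc C_V(G)}$ gives, via \cref{lind_identity}(\ref{uni_change}), the same reset channel $T_k$ on every qubit $k$. Composing $T_n\circ\cdots\circ T_1$ sends every $\rho$ to $\ket{0\dots0}\bra{0\dots0}$. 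So the constant channel onto $\ket{0}^{\otimes n}$ lies in $\mc C_V(G)$. The phase issue is harmless: $U(A)$ and $SU(A)$ give the same $\Ad$ channels.

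\textbf{Step 2 (reaching arbitrary densities).} Pure targets are immediate: for any unit vector $\eta$, pick $U\in SU(2^n)$ with $U\ket{0\dots0}=\eta$ up to phase, and apply $\Ad_U$. For a general $\sigma=\sum_k\lambda_k\ket{e_k}\bra{e_k}$ we must prepare a mixture. Because $\sigma=\Ad_W(\text{diag}(\lambda))$ for a unitary $W$, it suffices to reach every diagonal density in the computational basis.

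I would build such mixtures from partial resets. Conjugating $e^{tL_a}$ by suitable unitaries gives partial amplitude damping between any two basis vectors $\ket i,\ket j$, moving weight $p(t)\in[0,1]$ from one to the other. Concretely, take a unitary $u$ with $u\ket1\ket x=\ket j$ and $u\ket0\ket x=\ket i$ for some $x$. The channel $\Ad_u e^{tL_a}\Ad_{u^*}$ acts only within blocks, and on diagonal states it transfers population from $\ket j$ to $\ket i$. It also transfers population between other paired basis vectors, but when those carry zero weight this does not matter.

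Starting from the point mass on $\ket{0\dots0}$, I would first permute the mass to the basis vector $\ket{N}$, the last one. Then, by a sequence of transfers $N\to N-1\to\dots$ with appropriately chosen fractions, I would distribute mass into any probability vector $\lambda$. Each transfer is chosen so that its paired, spurious blocks involve only states that currently have zero population. Permutation unitaries let me arrange this at each step. An alternative is the cleaner route of \cite{cgj}: show that the reachable set from $\ket0\bra0$ is convex-enough via these transfers and invariant under $\Ad_{SU}$.

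\textbf{Main obstacle.} The hard part is Step 2: making sure the auxiliary transfers in the other blocks never disturb mass already placed. Amplitude damping on one qubit always acts on $2^{n-1}$ pairs simultaneously. I expect to handle this by ordering the construction as a greedy "water-filling", applying the damping only to pairs of the form (occupied target, empty source) after a permutation. Failing that, I would invoke the stated result \cite{cgj} Thm~2.8 directly.
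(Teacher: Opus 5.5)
Your Step 1 is correct: $e^{tL_a}=e^{tL_{\ket0\bra1}}\otimes\id$ converges to $T_1$, closedness puts $T_1$ in $\mc C_V(G)$, and conjugating by swaps gives $T_2,\dots,T_n$. The overall plan (reset, diagonalize, move populations with permutation-conjugated damping) also matches the idea the paper attributes to \cite{cgj}. The gap is in Step 2: the rule you propose for the spurious pairs cannot be satisfied. For a permutation $u$, the map $\Ad_u\circ e^{tL_a}\circ\Ad_{u^*}$ acts on diagonal states along a perfect matching of all $2^n$ basis vectors into $2^{n-1}$ ordered pairs (target $u\ket{0x}$, source $u\ket{1x}$). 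It moves the same fraction $p=1-e^{-2t}$ of every source's mass to its target. If the spurious pairs must carry zero population, the support lies inside the active pair, so you never get past two populated states. Your relaxed rule (spurious pairs only of the form occupied target, empty source) is also too weak. Every populated state other than the active source must then occupy one of the $2^{n-1}$ target slots, so the support never exceeds $2^{n-1}+1<2^n$ for $n\ge2$. Full-rank densities, i.e.\ the generic targets, are therefore unreachable by your scheme. Falling back on \cite{cgj} Thm~2.8 just cites the statement you are asked to prove.

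The missing idea is to let the spurious pairs act and account for them, for instance by backward induction on the support. Let $\lambda$ be a target probability vector with support $S$, $|S|\ge2$. Pick $t_0\in S$ with the smallest entry and $s_0\in S$ with the largest, and set $r=\lambda_{t_0}/\lambda_{s_0}\in(0,1]$ and $p=r/(1+r)$. Choose any ordered perfect matching $\pi$ containing $(t_0,s_0)$ in which no pair has an empty target and a populated source; orient each mixed pair with the populated state as target. For $(t,s)\in\pi$ put $q_s=\lambda_s/(1-p)$ and $q_t=\lambda_t-r\lambda_s$. Within $S$ we have $\lambda_t\ge\lambda_{t_0}$ and $\lambda_s\le\lambda_{s_0}$, so $q\ge0$. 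Mass is preserved pair by pair. Since $q_{t_0}=0$, the support of $q$ is strictly smaller than $S$. Finally, damping along $\pi$ with fraction $p$ maps $q$ exactly to $\lambda$, because $p/(1-p)=r$. By induction, every diagonal $\lambda$ is reached from a point mass in at most $2^n-1$ such steps. Combined with Step 1 and $\Ad_{U(A)}\subseteq\mc C_V(G)$, this gives state transitivity.
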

An algorithmic proof is given by Example 5.2 of \cite{cgj}. The main idea is to use coherent control to reduce the problem to the evolution of diagonal states, thereby turning state transitivity into a problem of manipulating their eigenvalues. This is closely related to the viewpoint developed in \cite{mved23}.

We can extend the above result to arbitrary non-normal jump operators: 

\begin{theorem}\label{general_trans}
    Suppose that  $SU(2^n) \subseteq \mc U_{\mc C_V(G)}$ . Let $ a_k = I^{\otimes k}\otimes a\otimes I^{\otimes (n-k-1)}$ with $a \in\mb M_2$ is non-normal, i.e. $[a, a^*]\neq 0$. If $\mc L_{a_k} \in \Lin_{\mc C_V(G)}$ then $\mc C_V(G)$ is state transitive.
\end{theorem}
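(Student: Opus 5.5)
The plan is to reduce to \cref{ad_trans}. Starting from $L_{a_k}$, I would use the available unitaries to manufacture a Lindbladian that acts exactly like $L_{|0\ran\lan1|\otimes I^{\otimes n-1}}$ on the states relevant to that theorem. Three closure properties of $\Lin_{\mc C_V(G)}$ will be used throughout.

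\begin{enumerate}
\item \emph{Unitary conjugation.} By \cref{lind_identity}(\ref{uni_change}), if $L_b\in\Lin_{\mc C_V(G)}$ and $u\in SU(2^n)\subseteq\mc U_{\mc C_V(G)}$, then $L_{ubu^*}\in\Lin_{\mc C_V(G)}$. Since $\Ad_u$ is insensitive to a global phase, any $u\in U(2^n)$ can be used. In particular a qubit permutation moves $a$ to the first tensor factor, so I may assume $k=0$.
\item \emph{Sums.} If $L_1,L_2\in\Lin_{\mc C_V(G)}$, then the Lie--Trotter formula $e^{t(L_1+L_2)}=\lim_m(e^{tL_1/m}e^{tL_2/m})^m$ gives $L_1+L_2\in\Lin_{\mc C_V(G)}$, because the class is closed and closed under composition. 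Positive multiples are handled by rescaling time.
\item \emph{Hamiltonian terms.} Since $\Ham_{\mc C_V(G)}\supseteq i\,\su(2^n)$ with both signs, any term $i\ad_{H}$ can be added to or cancelled from a Lindbladian in the set. A multiple of the identity in $H$ contributes nothing to $\ad_H$.
\end{enumerate}

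First I would conjugate by a local unitary realizing a Schur decomposition, so that $a=\begin{pmatrix}\lambda_1&\beta\\0&\lambda_2\end{pmatrix}$. Non-normality forces $\beta\neq0$, because an upper-triangular normal matrix is diagonal. Write $a=c+d$ with $c=\diag(\lambda_1,\lambda_2)$ and $d=\beta\,\sigma$, where $\sigma=|0\ran\lan1|$. Conjugation by $Z$ sends $a$ to $c-d$. Combining with \cref{lind_identity}(\ref{average_change}),
\[ \tfrac12\big(L_{a}+L_{ZaZ}\big)=L_c+L_d=L_c+|\beta|^2L_{\sigma}\in\Lin_{\mc C_V(G)}.\]
Next write $c=\mu I+\nu Z$ and expand the sesquilinear form $L_{x+y}$. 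The $I$–$Z$ cross terms collapse to $(\mu\bar\nu-\bar\mu\nu)(\rho Z-Z\rho)$, which is a pure Hamiltonian term $i\ad_{rZ}$ for some real $r$. By $L_I=0$ and (3) above, this can be removed, leaving
\[ \mathcal L:=|\nu|^2L_Z+|\beta|^2L_{\sigma}\in\Lin_{\mc C_V(G)} \]
(tensored with identities on the other qubits).

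The main obstacle is the residual dephasing term $|\nu|^2L_Z$, which a diagonal-phase twirl cannot remove since $Z$ commutes with diagonal unitaries. Rather than eliminate it, I would use the structure of the proof of \cref{ad_trans} (Example 5.2 of \cite{cgj}). There, coherent control first diagonalizes the state, and the semigroup $e^{tL_\sigma}$ is only ever applied to states diagonal in the computational basis, where it acts by moving eigenvalue weight. On such states $L_Z(\rho)=2Z\rho Z-2\rho=0$ on the relevant factor. Moreover, both $L_Z$ and $L_\sigma$ preserve the diagonal subspace, so $e^{t\mathcal L}\rho=e^{t|\beta|^2L_\sigma}\rho$ for every diagonal $\rho$.

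Hence every step $e^{sL_\sigma}$ in the \cite{cgj} protocol can be replaced by $e^{(s/|\beta|^2)\mathcal L}\in\mc C_V(G)$, with the unitary steps unchanged. This yields, for any pair $\rho,\sigma'$, a channel in $\mc C_V(G)$ sending $\rho$ to $\sigma'$. The point to check carefully is that in that algorithm the dissipative steps really are applied only to diagonal states; if not, one would instead interleave a dephasing twirl over diagonal phase unitaries, which lies in the closed convex hull of the available channels, to restore diagonality before each dissipative step.
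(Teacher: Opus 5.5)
Your proposal is correct and follows the paper's proof essentially step for step. You permute to the first qubit, Schur-triangularize, average $L_a$ with $L_{ZaZ}$ to get $L_c+|\beta|^2L_{\sigma}$, and then use that on diagonal states the diagonal part acts trivially while $L_\sigma$ preserves the diagonal, so the amplitude-damping steps of the \cite{cgj} protocol can be replaced by this evolution. The further splitting $c=\mu I+\nu Z$ and cancellation of the Hamiltonian cross term is harmless but unnecessary, since $L_c$ itself already annihilates diagonal states, which is what the paper observes directly. The convex-hull fallback is likewise not needed, and it would require extra justification because $\mc C_V(G)$ is not assumed to be convex.
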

\begin{proof}[Proof of \cref{general_trans}]
    We may first move the dissipative resource to the first qubit. Thus, it suffices to consider jump operator $a \otimes I^{\otimes n-1}$. Indeed, a suitable permutation unitary $u\in  U(A)$ exchanges the $k$-th qubit with the first $\Ad_u(I^{\otimes k}\otimes a\otimes I^{\otimes (n-k-1)})=a\otimes I^{\otimes n-1}$, and \cref{lind_identity} (i) shows that the corresponding conjugated dissipation is accessible.

    We next reduce the jump operator to creating the amplitude damping channels for $A=\cz^2$. By Schur decomposition, there is a unitary $v\in U(2)$ so that $\Ad_v(a)=\begin{pmatrix}
        \alpha & \gamma\\
        0 & \beta
    \end{pmatrix}=d+b$
    where $d=\begin{pmatrix}
        \alpha & 0\\
        0 & \beta
    \end{pmatrix}$ and $b=\begin{pmatrix}
        0 & \gamma\\
        0 & 0
    \end{pmatrix}=\gamma\ket0\bra1$. Since $a$ is non-normal, $\gamma\neq 0$. Using \cref{lind_identity} (i), we may now assume our jump operator is $d+b$.
    
    Let $Z=\diag(1, -1)$ be the Pauli-$Z$ matrix. Then $\Ad_Z(d+b)=d-b$ so coherence control also give access to $d-b$. By \cref{lind_identity} (ii), 
    $$\frac{1}{2}(L_{b+d}+L_{b-d})=L_b+L_d.$$
    Hence, by the Lie-Trotter product formula, $$e^{t(L_{b}+L_{d})}=\lim_{m\rightarrow\infty}\lpara e^{\frac{t}{2m}L_{d+b}}e^{\frac{t}{2m}L_{d-b}}\rpara^m$$ belongs to the closure of the accessible channel class.

    Following the proof in \cite{cgj} it suffices to  observe that the evolution preserves the diagonal and how transform  diagonal states. Indeed, if a two dimensional $\rho$ is diagonal in the computation basis, then $L_d(\rho)=0$. Moreover,  $L_b$ preserves the diagonal sub-algebra. It follows that for every diagonal $\rho$
    $$e^{t(L_d+L_b)}(\rho)=e^{tL_b}(\rho).$$
    Since $b=\gamma\ket0\bra1$,  $L_b=|\gamma|^2L_{\ket0\bra1}$, which is precisely the amplitude-damping evolution of \cref{ad_trans}, up to a rescaling of time. Since full unitary control allows us to first diagonalize any state, the the proof of \cref{ad_trans} only requires the amplitude-damping evolution on the diagonal sub-algebra. Hence the accessible evolution $e^{t(L_b+L_d)}$ can be used in place of $e^{tL_b}$ in that argument.
\end{proof}

\begin{corollary}[Channel Class Conditions]\label{ccc}
    The following two conditions are sufficient in order for $\mc C_V(G)$ to be state transitive:
    \begin{enumerate}
        \item $\mf{su}(A) \subseteq i\Ham_{\mc C_V(G)}.$
        \item $L_{a_1} \in \mc \Lin_{\mc C_V(G)}$ where $a_1 = a \otimes I^{\otimes (n-1)}$, with $a \in \mb M_2(\mb C)$ an arbitrary non-normal operator.
    \end{enumerate}
\end{corollary}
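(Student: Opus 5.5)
The corollary follows from \cref{general_trans} once condition (1) is shown to produce the group hypothesis $SU(2^n) \subseteq \mc U_{\mc C_V(G)}$. Condition (2) is then exactly the remaining hypothesis of \cref{general_trans} with $k=0$. So the whole argument consists of passing from the Lie algebra statement $\mf{su}(A) \subseteq i\Ham_{\mc C_V(G)}$ to the group statement, and then invoking the theorem.

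For that passage I will argue as follows. Take $U \in SU(A)$. Since $SU(A)$ is compact and connected, the exponential map is surjective, so $U = e^{X}$ for some $X \in \mf{su}(A)$. Write $X = iH$ with $H = H^*$. By condition (1), $H \in \Ham_{\mc C_V(G)}$, which means $i\ad_H \in \Lin_{\mc C_V(G)}$. By definition of $\Lin_{\mc C_V(G)}$, $\exp(t\, i\ad_H) \in \mc C_V(G)$ for all $t \geq 0$. Taking $t=1$ (the sign convention, $e^{iH}$ versus $e^{-iH}$, is harmless because $-X \in \mf{su}(A)$ as well), we get
\[ \exp(i\ad_H) = \Ad_{e^{iH}} = \Ad_U \in \mc C_V(G), \]
so $U \in \mc U_{\mc C_V(G)}$. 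If surjectivity of the exponential map is not to be used, I would instead write $U$ as a finite product of exponentials $e^{X_j}$ with $X_j \in \mf{su}(A)$, which is possible because $SU(A)$ is connected and generated by any neighbourhood of the identity. Each $\Ad_{e^{X_j}}$ lies in $\mc C_V(G)$ by the argument above, and $\mc C_V(G)$ is closed under composition, so $\Ad_U$ lies in it as well.

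With $SU(2^n) \subseteq \mc U_{\mc C_V(G)}$ established, I apply \cref{general_trans} with $k=0$, the jump operator being $a_0 = a\otimes I^{\otimes(n-1)}$ and $a$ non-normal. This gives state transitivity of $\mc C_V(G)$.

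The only real point to check is the Lie-algebra-to-group step. The definition of $\Ham_{\mc C_V(G)}$ only provides $e^{t\,i\ad_H}$ for $t\ge 0$, so we need both $\pm H$ available, and this holds because $\mf{su}(A)$ is a vector space. A second detail is the identification $e^{i t\ad_H} = \Ad_{e^{itH}}$, which is standard. Everything else is a direct invocation of \cref{general_trans}.
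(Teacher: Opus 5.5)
Your proposal is correct and takes the paper's route: the paper treats this corollary as an immediate consequence of \cref{general_trans}, applied with the jump operator on the first qubit. The only unstated step is passing from $\mf{su}(A)\subseteq i\Ham_{\mc C_V(G)}$ to $SU(A)\subseteq\mc U_{\mc C_V(G)}$, which you supply correctly via $e^{t\,i\ad_H}=\Ad_{e^{itH}}$ and surjectivity of the exponential map (the paper's remark after \cref{lieham} says the same thing).
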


\subsection{From Channels Conditions to State Magic} We now have sufficient conditions on $\mc C_V(G)$ for state transitivity. However, our construction starts from a resource set $S$. In order to make use of the above conditions, we translate those sufficiency conditions to conditions on an interaction Hamiltonian $H \in \mc L_{sa}(EA).$ In the case of such a Hamiltonian, our group of operations can be expanded to $\tilde G = G(S \cup \{H\})$. 

\begin{remark}\label{strategy}
The first condition of \cref{ccc} on the Lie algebra can now be given for the expanded group $\mf{su}(A) \subseteq i\Ham_{\mc C_V(\tilde G)}.$ For the second condition of \cref{ccc} we want to take the continuous family of channels given by the interaction Hamiltonian: \[ \Phi^H_t(\rho) := \tr_E\lpara e^{iHt} \lpara\ketbra{v} \otimes \rho \rpara e^{-iHt} \rpara.\] We can use \cref{difflemma} to construct Lindbladians from such a path. If any of those Lindbladians satisfies the second condition of \cref{ccc} then we can conclude that the interaction Hamiltonian $H$ performs state magic. 
\end{remark}

To execute the strategy of \cref{strategy} we need to make use of the following channel approximation:
\begin{fact}
Suppose that $E=\mb C^2$. 
    Let us consider a Hamiltonian in the block matrix form 
    $H=\begin{pmatrix}
        \alpha & \gamma^*\\
        \gamma & \beta
    \end{pmatrix}$ in the computation basis. Then $$\Phi_t(\rho)=\tr_E(e^{it H}(\ket 0\bra 0\otimes\rho) e^{-itH})$$
    is of the form 
    \begin{equation}\label{channel_1}
        \Phi_t(\rho)=\Ad_{e^{it\alpha}}(\rho)+\frac{t^2}{2}L_{\gamma}(\rho)+O(t^3).
    \end{equation}
    Note that we can do the same approximation for \[ \Phi_t(\rho)=\tr_E \lpara  e^{it H}\lpara \ketbra{v} \otimes\rho \rpara e^{-itH} \rpara \] by writing $H$ in the $\left \{ \ket v, {\ket v}^{\perp}\right \} $ basis instead of the computation basis. 
\end{fact}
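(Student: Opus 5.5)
We need to prove the Fact: with $E=\mathbb C^2$, $H$ in block form $\begin{pmatrix}\alpha&\gamma^*\\ \gamma&\beta\end{pmatrix}$ with blocks in $\mc L(A)$, the reduced map satisfies $\Phi_t(\rho)=\Ad_{e^{it\alpha}}(\rho)+\frac{t^2}{2}L_\gamma(\rho)+O(t^3)$.

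The plan is to Taylor expand the unitary to second order and read off the reduced blocks. Write $P=\ket0\bra0$ and note that $\tr_E(X)=X_{00}+X_{11}$ for a block operator $X$, while $\ket0\bra0\otimes\rho$ is the block matrix with $\rho$ in the $(0,0)$ entry and zeros elsewhere. Set $U_t=e^{itH}$ and write its blocks as $U_t=\begin{pmatrix}u_{00}&u_{01}\\ u_{10}&u_{11}\end{pmatrix}$. A direct block computation gives
\[
\Phi_t(\rho)=u_{00}\rho u_{00}^*+u_{10}\rho u_{10}^* .
\]
This reduces everything to computing the $(0,0)$ and $(1,0)$ blocks of $e^{itH}$ to the appropriate order.

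From $e^{itH}=1+itH-\frac{t^2}{2}H^2+O(t^3)$ and $(H^2)_{00}=\alpha^2+\gamma^*\gamma$, $(H)_{10}=\gamma$, we get
\[
u_{00}=1+it\alpha-\tfrac{t^2}{2}(\alpha^2+\gamma^*\gamma)+O(t^3),\qquad u_{10}=it\gamma+O(t^2).
\]
The first expansion can be rewritten as $u_{00}=e^{it\alpha}-\frac{t^2}{2}\gamma^*\gamma+O(t^3)$, since $e^{it\alpha}=1+it\alpha-\frac{t^2}{2}\alpha^2+O(t^3)$.

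Substituting these into the two terms gives the result. For the first term,
\[
u_{00}\rho u_{00}^*=e^{it\alpha}\rho e^{-it\alpha}-\tfrac{t^2}{2}(\gamma^*\gamma\rho+\rho\gamma^*\gamma)+O(t^3),
\]
where the cross terms of order $t^3$ are absorbed into the remainder. For the second term, $u_{10}\rho u_{10}^*=t^2\gamma\rho\gamma^*+O(t^3)$. Because the $O(t^2)$ error in $u_{10}$ multiplies an $O(t)$ factor, it only contributes at order $t^3$. Adding the two terms gives
\[
\Ad_{e^{it\alpha}}(\rho)+\tfrac{t^2}{2}\bigl(2\gamma\rho\gamma^*-\gamma^*\gamma\rho-\rho\gamma^*\gamma\bigr)+O(t^3)=\Ad_{e^{it\alpha}}(\rho)+\tfrac{t^2}{2}L_\gamma(\rho)+O(t^3).
\]
All remainders are uniform in $\rho$ in trace norm, because $H$ is bounded and the series converges absolutely.

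For a general preparable vector $v$, apply a unitary $W$ on $E$ with $W\ket0=\ket v$. Then
\[
\tr_E\bigl(e^{itH}(\ketbra v\otimes\rho)e^{-itH}\bigr)=\tr_E\bigl(e^{itH'}(\ketbra0\otimes\rho)e^{-itH'}\bigr),
\]
where $H'=(W^*\otimes1)H(W\otimes1)$; this uses the invariance of $\tr_E$ under conjugation by unitaries on $E$. The blocks of $H'$ are exactly the blocks of $H$ in the basis $\{v,v^\perp\}$, so the same formula holds there.

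None of these steps is a real obstacle. The only point requiring care is the bookkeeping of the second-order term in $u_{00}$, so that it combines with $\alpha^2$ into $e^{it\alpha}$ and leaves exactly the $-\frac{t^2}{2}\gamma^*\gamma$ anticommutator part of $L_\gamma$.
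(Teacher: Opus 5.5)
Your proof is correct and complete. The paper records this statement as a Fact and gives no proof, so there is no competing argument to compare against. Your route (the two-Kraus form $\Phi_t(\rho)=u_{00}\rho u_{00}^*+u_{10}\rho u_{10}^*$ followed by the second-order expansion of $e^{itH}$) is the natural verification. It is equivalent to computing $\tr_E\bigl(it[H,\ketbra{0}\otimes\rho]-\tfrac{t^2}{2}[H,[H,\ketbra{0}\otimes\rho]]\bigr)$, which is the second-order analogue of the first-order computation in the proof of \cref{embed}. Your reduction of a general $\ket v$ to $\ket 0$ via a unitary $W$ on $E$ also correctly justifies the final sentence of the Fact.

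Your Kraus form makes one point explicit that is worth keeping in mind when the Fact is applied: the jump operator is always the lower-left block $H_{10}$. Where the paper places $\gamma$ in the upper-right corner (as in \cref{densitymagic}, or with $a$ in \cref{anymagic}), the dissipator actually produced is $L_{\gamma^*}$ rather than $L_\gamma$. This is harmless there, since $\gamma$ is non-normal if and only if $\gamma^*$ is.
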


It is often useful to translate the Hamiltonian between block-matrix form and Pauli-vector form:

\begin{definition}
    Let $E = \mb C^2$ and $H \in \mc L_{sa}(EA)$ the \textit{Pauli vector} associated to $H$ is \[
    \vec{H}: = \lpara H_I, H_X,H_Y, H_Z \rpara.
    \]
    Given by 
    \[
    H = I \otimes H_I + X \otimes H_X + Y \otimes H_Y + Z \otimes H_Z.
    \]
\end{definition}

We know that it is paramount for us to have a Lindbladian $L_a \in \Lin_{\mc C_V(G)}$ with non-normal $a$. Using the above results, we can find an interaction Hamiltonian $H \in \mc L(EA)$ which can create the generator $L_a$ through open system dynamics. 

\begin{proposition}\label{jumplemma}
    Take $E \cong \mb C^2.$ Let $a\in \mc L(A)$ be a fixed jump operator and $H \in \mc L_{sa}(EA)$ be an interaction Hamiltonian. Take $\ket 0 \in V.$ Let $iH \in \mf g(S)$ and $G = G(S).$
    The following statements are equivalent:
    \begin{enumerate}
        \item $H$ is a Hamiltonian lift of the jump operator $a$ with respect to $\ket 0$, in the sense that $\Phi_t^H=\id+\frac{t^2}{2}L_a+O(t^3)$.
        \item $H$ is of the block-matrix form $$H=\begin{pmatrix}
            \lambda I & e^{-i\theta} a^*\\
            e^{i\theta} a & K
         \end{pmatrix}$$
        for some $\lambda, \theta\in\mb R$ and some arbitrary self-adjoint block $K=K^*$.
        
        \item \[\vec{H} = \lpara \frac{\lambda I+K}{2}, \frac{e^{i\theta}a+e^{-i\theta}a^*}{2} , \frac{e^{i\theta}a-e^{-i\theta}a^*}{2i}, \frac{\lambda I-K}{2} \rpara \]
    \end{enumerate}
    If $H$ satisfies one of the above equivalent conditions, then $L_a \in \Lin_{\mc C_V(G)}.$
\end{proposition}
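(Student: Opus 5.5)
The plan is to expand the channel $\Phi^H_t(\rho)=\tr_E\big(e^{iHt}(\ketbra{0}\otimes\rho)e^{-iHt}\big)$ to second order in $t$ for a general block form, and read off (1)$\Leftrightarrow$(2) from the coefficients. Write $H=\begin{pmatrix}\alpha&\gamma^*\\ \gamma&\beta\end{pmatrix}$ in the basis $\{\ket0,\ket1\}$ of $E$, and put $P=\ketbra{0}\otimes\rho$. Expanding $e^{iHt}Pe^{-iHt}=P+it[H,P]-\frac{t^2}{2}(H^2P+PH^2-2HPH)+O(t^3)$ and taking $\tr_E$ uses three block identities:
\[
\tr_E([H,P])=[\alpha,\rho],\qquad \tr_E(HPH)=\alpha\rho\alpha+\gamma\rho\gamma^*,\qquad (H^2)_{00}=\alpha^2+\gamma^*\gamma .
\]
They give
\[
\Phi^H_t=\id+it\,\ad_\alpha+\tfrac{t^2}{2}\big(-\ad_\alpha^2+L_\gamma\big)+O(t^3).
\]
This is the expansion in \eqref{channel_1}, now written out explicitly.

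For (2)$\Rightarrow$(1), I substitute $\alpha=\lambda I$ and $\gamma=e^{i\theta}a$. Then $\ad_\alpha=0$, and by \cref{lind_identity}(iii) $L_{e^{i\theta}a}=L_a$, which yields $\Phi^H_t=\id+\frac{t^2}{2}L_a+O(t^3)$.

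For (1)$\Rightarrow$(2), I compare orders.
\begin{itemize}
\item The first-order term must vanish, so $[\alpha,\rho]=0$ for all $\rho$. Hence $\alpha=\lambda I$, with $\lambda$ real since $\alpha$ is self-adjoint.
\item The second-order term then forces $L_\gamma=L_a$.
\item To conclude $\gamma=e^{i\theta}a$, I use the GKS/Lindblad uniqueness of the dissipative part. Compare the completely positive parts $\rho\mapsto 2\gamma\rho\gamma^*$ and $\rho\mapsto 2a\rho a^*$ modulo maps of the form $\rho\mapsto x\rho+\rho x^*$, for example through the Choi matrix projected off the span of $\ket{\Omega}$.
\item This gives $\gamma=e^{i\theta}a+cI$. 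The computation $L_{a+c}=L_a+[\bar c a-c a^*,\cdot\,]$ then shows that equality requires $\bar c a-ca^*$ to be a scalar.
\end{itemize}

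This last step is the main obstacle. In the degenerate case (for instance $a$ self-adjoint and $c$ real) a scalar shift of $\gamma$ is genuinely invisible to $L_\gamma$, so (2) does not follow literally. I would handle it in one of two ways: assume $a$ is normalized so that $\tr(a)=0$, which is the standard GKS normalization and makes the representation unique up to phase; or state that in (2) the off-diagonal block is determined up to such a shift. The case of interest, non-normal $a$ as in \cref{ccc}, is covered after fixing the trace normalization.

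The equivalence (2)$\Leftrightarrow$(3) is pure bookkeeping. From $H=I\otimes H_I+X\otimes H_X+Y\otimes H_Y+Z\otimes H_Z$ the blocks are
\[
H_{00}=H_I+H_Z,\qquad H_{11}=H_I-H_Z,\qquad H_{10}=H_X+iH_Y,\qquad H_{01}=H_X-iH_Y .
\]
Setting these equal to $\lambda I$, $K$, $e^{i\theta}a$ and $e^{-i\theta}a^*$ respectively and solving the linear system gives exactly the Pauli vector in (3), and the process reverses.

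For the final assertion, $iH\in\mf g(S)$ gives $e^{iHt}\in G$, so $\Phi^H_t=\Phi^{0}_{e^{iHt}}\in\mc C_V(G)$ is a continuous path with $\Phi^H_0=\id$. By (1) its first right derivative is $0$ and its second right derivative is $L_a$. The second part of \cref{difflemma} then gives $e^{tL_a}\in\mc C_V(G)$ for all $t\ge0$, that is, $L_a\in\Lin_{\mc C_V(G)}$.
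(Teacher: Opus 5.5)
You follow the paper's route: expand $\Phi^H_t$ as in \eqref{channel_1}, use the vanishing first-order term to get $\alpha=\lambda I$, match the second-order terms, and do the Pauli bookkeeping for (2)$\Leftrightarrow$(3). At the key step you are more careful than the paper. Its proof passes from $L_{e^{i\theta}a}=L_a$ directly to $H_{10}=e^{i\theta}a$, which is a non sequitur. You are right that (1)$\Rightarrow$(2) is false as literally stated: for $a=Z$ and off-diagonal block $\gamma=Z+I$ one has $L_{Z+I}(\rho)=2Z\rho Z-2\rho=L_Z(\rho)$, yet $\gamma\neq e^{i\theta}Z$ for every $\theta$.

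The gap is in your repair. Normalizing $\tr(a)=0$ does not restore uniqueness. GKS uniqueness up to a phase needs \emph{both} jump operators to be traceless, and $\gamma$ is dictated by $H$, not chosen by you. The example above has $\tr Z=0$ and lies in your own degenerate class ($a$ self-adjoint, $c$ real). So your claim that the non-normal case is covered ``after fixing the trace normalization'' is unsupported as written.

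What actually settles the non-normal case is your own shift formula. Write $\gamma=e^{i\theta}(a+c')$ with $c'=e^{-i\theta}c$, so that $L_\gamma=L_{a+c'}=L_a+[\bar c'a-c'a^*,\cdot\,]$. Suppose $c'=|c'|e^{i\phi}\neq 0$ and $\bar c'a-c'a^*\in\mb C I$. Then $b=e^{-i\phi}a$ satisfies $b-b^*\in\mb C I$, so $b$ is self-adjoint plus a scalar, and hence $a$ is normal. Thus for non-normal $a$ you get $c'=0$ and $\gamma=e^{i\theta}a$ with no normalization at all. Alternatively, you may impose $\tr\gamma=0$ as a condition on $H$.

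With this replacement, or with your second option of stating (2) only up to such invisible scalar shifts, the rest of your argument is sound. That includes the appeal to the second part of \cref{difflemma} for the final assertion, which the paper leaves implicit.
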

\begin{proof}
    By \cref{channel_1}, we know that $\Phi_t(\rho)=\Ad_{e^{it\alpha}}+\frac{t^2}{2}L_\gamma+O(t^3)$. This is equal to $\id+\frac{t^2}{2}L_\gamma+O(t^3)$ if and only if $e^{it\alpha}=\omega_t I$, where $\omega_t\in\mb C$, $|\omega_t|=1$. Hence $\alpha=\lambda I$ for some $\lambda\in\mb R$. By \cref{lind_identity} (iii), $L_{e^{i\theta}a}=L_a$ and hence $H_{10}=e^{i\theta}a$. There is no restriction for $\beta$ so it can be an arbitrary self-adjoint block $K$. Thus (ii) follows. It is clear that (ii) is equivalent to (iii) by expanding $H$ in the Pauli basis.
\end{proof}
\begin{remark}
    If the environment is prepared in a different unit vector $\ket v\in V$ instead of $\ket 0$, the same statement hold after writing $H$ in the orthonormal basis $\{ \ket v, {\ket v}^{\perp}\}$ of $E$. 
\end{remark}
From \cref{jumplemma} we can already describe a quintessential example of a state magic Hamiltonian:
\begin{example}[Amplitude damping]
Let $E \cong \mb C^2$ and $A \cong \mb C^2$. Take $V = \{ \ket{0}_E \}$. Take
    \[H=\frac{1}{2}(Y_E\otimes X_A-X_E\otimes Y_A) \quad \mbox{or} \quad  H \lel \frac{1}{2}(X_E\ten X_A+Y_E\ten Y_A)   \pl  .\] 
Let $\tilde G = G(S \cup \{H\}).$ If $\mf{su}(A) \subseteq \Ham_{\mc C_V(\tilde G)}$ then $H$ performs state magic.
\end{example}
\begin{proof}
    From $\cref{jumplemma}$ we have that $L_a \in \Lin_{\mc C_V(G)}$ with $a = \ket{0}\bra{1}$ By \cref{ad_trans} we have that $H$ is state magic.
\end{proof}

The form of the Hamiltonian in \cref{jumplemma} can be restrictive. It is useful for us to state more general conditions under which an interaction Hamiltonian can perform state magic:

\begin{theorem}[The $\gamma$ magic theorem]\label{densitymagic}
We take $\ket 0 \in V.$ Assume $E \cong \mb C^2$ Take an interaction Hamiltonian \[H=\begin{pmatrix}
        \alpha & \gamma\\
        \gamma^* & \beta
    \end{pmatrix}.\] Take $\tilde G = G(S \cup \{H\}).$ If $\mf{su}(A) \subseteq i\Ham_{\mc C_V(\tilde G)}$ and $\gamma$ is non-normal then $H$ performs state magic.
\end{theorem}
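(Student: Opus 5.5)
The plan is to manufacture, from the interaction Hamiltonian $H$ and the preparation of $\ket 0$, a purely dissipative generator $L_{\gamma^*}$ in $\Lin_{\mc C_V(\tilde G)}$. Afterwards I would run the state-transitivity argument of \cref{ad_trans} and \cref{general_trans}.

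\emph{Step 1: expansion.} First I fix the jump operator. In the statement $\gamma$ sits in the upper right block, so in the notation of \cref{channel_1} the lower-left block is $\gamma^*$. The expansion of $\Phi_t(\rho)=\tr_E(e^{itH}(\ket0\bra0\otimes\rho)e^{-itH})$ then reads
\[\Phi_t=\Ad_{e^{it\alpha}}+\tfrac{t^2}{2}L_{\gamma^*}+O(t^3).\]
Since $\gamma$ is non-normal if and only if $\gamma^*$ is, it suffices to work with $L_{\gamma^*}$.

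\emph{Step 2: cancel the coherent first-order term.} The obstruction to applying \cref{jumplemma} is that $\alpha$ need not be a multiple of $I$. Write $\alpha=\alpha_0+\frac{\tr\alpha}{\dim A}I$ with $i\alpha_0\in\su(A)$. The scalar part does not contribute to $\Ad$. By hypothesis $\mf{su}(A)\subseteq i\Ham_{\mc C_V(\tilde G)}$, so $-\alpha_0\in\Ham_{\mc C_V(\tilde G)}$, hence $\Ad_{e^{-it\alpha}}\in\mc C_V(\tilde G)$. Set
\[\Psi_t:=\Ad_{e^{-it\alpha}}\circ\Phi_t=\id+\tfrac{t^2}{2}\Ad_{e^{-it\alpha}}L_{\gamma^*}+O(t^3)=\id+\tfrac{t^2}{2}L_{\gamma^*}+O(t^3).\]
Here $\Psi_t\in\mc C_V(\tilde G)$ by closure under composition, it is continuous, and $\Psi_0=\id$. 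Its first right derivative vanishes and its second equals $L_{\gamma^*}$. The second part of \cref{difflemma} then gives $e^{sL_{\gamma^*}}\in\mc C_V(\tilde G)$ for all $s\ge0$.

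\emph{Step 3: pass from a global non-normal jump to transitivity.} This is the step I expect to be the main obstacle, because \cref{general_trans} only treats single-qubit jumps $a\otimes I$, whereas $\gamma^*$ is an arbitrary non-normal operator on $A$.

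I would mimic the proof of \cref{general_trans}. By Schur decomposition and \cref{lind_identity}(i), I may assume $\gamma^*=d+b$, with $d$ diagonal and $b$ strictly upper triangular. Non-normality forces $b\neq0$.

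Next I conjugate by diagonal unitaries $D_\phi=\diag(e^{i\phi_k})$, which gives $d+\sum_{i<j}e^{i(\phi_i-\phi_j)}b_{ij}e_{ij}$. I then combine these conjugated Lindbladians with Lie--Trotter products, which realize convex combinations of generators. Iterating \cref{lind_identity}(ii)--(iii), as with $Z$ in \cref{general_trans}, I average over finitely many phase choices. Choosing the $\phi_k$ so that the differences $\phi_i-\phi_j$ are pairwise distinct, this decouples the cross terms and yields the accessible generator
\[L_d+\sum_{i<j}|b_{ij}|^2L_{e_{ij}}.\]

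On diagonal states $L_d$ vanishes and each $L_{e_{ij}}$ preserves the diagonal. The generator therefore restricts to a classical Markov generator with at least one nonzero transfer rate $j\to i$.

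\emph{Step 4: conclude.} Using full unitary control, I diagonalize any input state and permute basis vectors. Then the single-rate transfer $e_{ij}$ can be moved onto any ordered pair of basis states. Following the eigenvalue-majorization argument of \cite{cgj} that underlies \cref{ad_trans}, these transfers drive any diagonal state to $\ket0\bra0$. From the pure state, unitaries together with partial transfers reach any spectrum and hence any density. This gives state transitivity of $\mc C_V(\tilde G)$, i.e.\ $H$ is state magic. If the cross-term decoupling in Step 3 proves delicate, my fallback is to check directly that the diagonal restriction of $L_{d+b}$ is already a nonzero classical transfer generator, which is all the argument of \cite{cgj} actually uses.
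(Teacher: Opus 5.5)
Your Steps 1 and 2 are exactly the paper's proof. The paper also forms $\Psi_t=\Ad_{e^{-it\alpha}}\circ\Phi_t=\id+\frac{t^2}{2}L+O(t^3)$ and rescales $t\mapsto\sqrt{2t/n}$ to obtain $e^{tL}$. You are right that the jump operator is the lower-left block $\gamma^*$; the paper writes $L_\gamma$, which is harmless since non-normality is adjoint-invariant. You also justify $\Ad_{e^{-it\alpha}}\in\mc C_V(\tilde G)$, which the paper leaves implicit. The paper then simply cites \cref{general_trans}, but that result is stated only for jumps $a\otimes I$ with $a\in\mb M_2$, so your concern in Step 3 is legitimate. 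Your averaging is sound: already the sign matrices $\diag(\pm1)$ kill every cross term and leave $L_d+\sum_{i<j}|b_{ij}|^2L_{e_{ij}}$. On diagonal states this is the classical generator with rate $2|b_{ij}|^2$ from level $j$ to level $i$.

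The gap is in Step 4. A permutation, or indeed any unitary, conjugates the whole generator, and Lie--Trotter combinations only superimpose rate patterns. So in general you cannot isolate one transfer $e_{ij}$ and ``move it onto any ordered pair''. What you control is the entire pattern $\{j\to i: b_{ij}\neq 0\}$, acting simultaneously on every occupied level. For a full Jordan block, for instance, it is the chain $N\to N-1\to\cdots\to 1$ (with $N=\dim A$), which has a single absorbing level. The amplitude-damping case behind \cref{ad_trans} is tractable because the transfers $\ket{1x}\to\ket{0x}$ of $\ket0\bra1\otimes I$ are disjoint and land in an absorbing half, so occupied levels can be parked. That structure is absent here, so ``following \cite{cgj}'' is not yet an argument.

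The purification half can be repaired directly. Let $j^*$ be the smallest index with $b_{ij^*}\neq0$ for some $i$; then all levels below $j^*$ are absorbing. Put the largest eigenvalue at such an $i$ and the second largest at $j^*$, and run for a long time. Then $1-p_{\max}$ is multiplied by at most $1-c/(N-1)$, up to an error that vanishes as $t\to\infty$, where $c=|b_{ij^*}|^2/\sum_k|b_{kj^*}|^2$. Closedness and compactness of the channel class then give $\ketbra{0}$ exactly.

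The reverse claim, that from a pure state ``partial transfers reach any spectrum'', is the real content of transitivity, and it is only asserted. With simultaneous transfers you cannot split one eigenvalue while holding the others fixed. You need a genuine controllability argument for the eigenvalue dynamics. One possible ingredient: for every $U$, the phase-averaged $L_{U^*\gamma^*U}$ supplies rates $2|(U^*\gamma^*U)_{kj}|^2$ in both directions, which fits the reduced-system viewpoint of \cite{mved23}. Your fallback does not avoid this, because the obstacle is the simultaneous action of all transfers, not diagonal invariance. In addition, $L_{d+b}$ itself does not keep diagonal states diagonal, which is why the averaging was needed in the first place.
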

\begin{proof}
    By \cref{channel_1}, $$\Phi_t(\rho)=\Ad_{e^{it\alpha}}+\frac{t^2}{2}L_{\gamma}+O(t^3).$$
    Now consider the channel
    $$\Psi_t:= \Ad_{e^{-it\alpha}}\circ\Phi_t
        = \Ad_{e^{-it\alpha}}\circ\lpara\Ad_{e^{it\alpha}}+\frac{t^2}{2}L_{\gamma}+O(t^3)\rpara=\id+\frac{t^2}{2}L_{\gamma}+O(t^3)$$
    because $\Ad_{e^{it\alpha}}=\id+O(t)$.
    Then after the square-root rescaling, 
    $\Psi_{\sqrt{\frac{2t}{n}}}=\id+\frac{t}{n}L_{\gamma}+O((\frac{t}{n})^{\frac{3}{2}})$ and thus 
    $$(\Psi_{\sqrt{\frac{2t}{n}}})^n\rightarrow e^{tL_{\gamma}}.$$
    Now \cref{general_trans} implies the assertion.
\end{proof}
\begin{remark}
    In the setting \cref{densitymagic} instead of taking $\ket 0 \in V$ we can take $\ket v \in V$. In this case we use the $\{ \ket v, {\ket v}^{\perp}\} $ basis and write \[H=\begin{pmatrix}
        \alpha & \gamma\\
        \gamma^* & \beta
    \end{pmatrix}\] in that basis. The results of $\cref{densitymagic}$ still hold.  
\end{remark}
We have similar result for a Hamiltonian written in the Pauli basis:
\begin{corollary}\label{ham_lift}
    Assume $E \cong \mb C^2$ and $\ket 0 \in V$. Take an interaction Hamiltonian of the form \[\vec{H} = (H_I,H_X,H_Y,H_Z).\] Take $\tilde G = G(S \cup \{H\}).$ If $\mf{su}(A) \subseteq i\Ham_{\mc C_V(\tilde G)}$ and $[H_X, H_Y]\neq 0$, then $H$ performs state magic. 
\end{corollary}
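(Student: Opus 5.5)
The plan is to reduce the statement directly to \cref{densitymagic} by rewriting the Pauli vector $\vec H$ in block-matrix form with respect to the computational basis $\{\ket 0,\ket 1\}$ of $E$. Then we check that the non-normality hypothesis on the off-diagonal block is exactly the condition $[H_X,H_Y]\neq 0$. No new analytic input should be needed: the Lindbladian extraction, the square-root rescaling and the transitivity argument are all contained in \cref{densitymagic}, which in turn rests on \cref{general_trans}.

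First, since $H$ is self-adjoint and the Pauli matrices $I,X,Y,Z$ are self-adjoint and form a basis of $\mb M_2$, the components $H_I,H_X,H_Y,H_Z\in\mc L(A)$ are uniquely determined and self-adjoint. Writing $X=\ket0\bra1+\ket1\bra0$, $Y=-i\ket0\bra1+i\ket1\bra0$ and $Z=\ket0\bra0-\ket1\bra1$, we obtain
\[
H=\begin{pmatrix} H_I+H_Z & H_X-iH_Y\\ H_X+iH_Y & H_I-H_Z\end{pmatrix}.
\]
This is of the form required in \cref{densitymagic} with $\alpha=H_I+H_Z$, $\beta=H_I-H_Z$ and $\gamma=H_X-iH_Y$, so that $\gamma^*=H_X+iH_Y$ is indeed the lower-left block. (Whether one calls the upper or the lower block $\gamma$ does not matter, since $\gamma$ is normal iff $\gamma^*$ is.)

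Second, compute the self-commutator of $\gamma$ using the self-adjointness of $H_X,H_Y$:
\[
\gamma\gamma^*=H_X^2+H_Y^2+i[H_X,H_Y],\qquad \gamma^*\gamma=H_X^2+H_Y^2-i[H_X,H_Y],
\]
hence $[\gamma,\gamma^*]=2i[H_X,H_Y]$. Therefore $\gamma$ is non-normal precisely when $[H_X,H_Y]\neq0$.

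Under the hypotheses $\mf{su}(A)\subseteq i\Ham_{\mc C_V(\tilde G)}$ and $\ket 0\in V$, \cref{densitymagic} then applies verbatim and yields that $H$ performs state magic.

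The only point requiring care is bookkeeping: the placement of $\gamma$ versus $\gamma^*$ in the block form, and the sign conventions in the expansion of $Y$. The identity $[\gamma,\gamma^*]=2i[H_X,H_Y]$ is insensitive to these choices up to sign, so I expect no genuine obstacle.
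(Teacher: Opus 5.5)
Your proposal is correct and is the paper's argument: the paper's proof simply notes that $[H_X,H_Y]\neq 0$ makes the off-diagonal block non-normal and invokes \cref{densitymagic}, and your block decomposition with $\gamma=H_X-iH_Y$ together with $[\gamma,\gamma^*]=2i[H_X,H_Y]$ supplies the computation it leaves implicit. Your remark that it does not matter whether $\gamma$ or $\gamma^*$ is treated as the jump operator is well placed, since the block that actually produces the Lindbladian after preparing $\ket 0$ is the lower-left one, $H_X+iH_Y$.
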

\begin{proof}
    If $[H_X,H_Y] \neq 0$ we have $\gamma$ is non-normal for the hypothesis of \cref{densitymagic}.
\end{proof}

It is interesting to note that the results of \cref{densitymagic} do not depend on the size of $G$. In fact, if we take a resource set $S$ with interaction Hamiltonian $H$, it is possible for both $\mf g(S)$ and $\mf g(S \cup \{H\})$ to be $\text{poly}(n)$. We will see such an example in \cref{spinex}. 

Moreover, it is in fact possible for $S$ to be empty, and achieve state magic with only a single interaction Hamiltonian and no resources. In this case we will need two state preparations, $\ket{0}_E, \ket{1}_E \in V$ to satisfy the transitivity desiderata in \cref{ccc}. We will see such an example in \cref{2prep}.

\section{Spin example with few  qubits and small Lie algebra}\label{spinex}

In this section we aim for an example with a polynomial Lie algebra for the `control' group $G\subset U(EA)$. Let us recall the notations 
 \[ \mathcal{Z}(1)\lel \{Z_1,...,Z_n\}\pl ,\pl \mathcal{X}(2) \lel \{X_1X_2,...,X_{n-1}X_n\} \pl,\] 
 \[ \mathcal{Y}(2) \lel \{Y_1Y_2,...,Y_{n-1}Y_n\}  
 \pl,\pl \mathcal{Z}(2) \lel \{Z_1Z_2,...,Z_{n-1}Z_n\}  
 \pl .\] 
We will identify a Lie algebra with interaction Hamiltonians with three additional qubits as follows 
  \[ S_{tot} \lel S_1 \cup S_2 \cup\{Z_{-1}Z_1Z_2\} \pl. \] 
 \begin{align*} 
  S_1  &\lel  \{X_{-2}X_{-1}X_0X_1,Y_1Y_2,X_{-2}X_{-1}X_2X_3, Y_3Y_4,...,Y_{n-1}Y_n\}  \\ 
   S_2  &\lel  \{Y_{-2}Y_{-1}Y_0Y_1,X_1X_2,Y_{-2}Y_{-1}Y_2Y_3, X_3X_4,...,X_{n-1}X_n\}  \\ 
 \end{align*}
We have adapted the well-known commuting sets (see  \cite{jpv})
  \begin{align*}
   \tilde{S}_1 & \lel  \{X_0X_1,Y_1Y_2,X_2X_3, Y_3Y_4, \dots \} \\
    \tilde{S}_2 & \lel \{Y_0Y_1,X_1X_2,Y_2Y_3, X_3X_4, \dots \} 
    \end{align*}
 by modifying the first and the third entry so they still commute with $\{Z_{-1}Z_1Z_2\}$. Indeed, adding to Pauli of the same flavor in unrelated registers does not change the  commutation relation and hence does not change the Lie algebra, see also the definition of PauLie maps in  \cite{jpv}. However, the elements in the set $S_1$ itself form a chain: Every element only anti-commutes with its neighbors, and commutes with all other elements. The Lie algebra of such a chain is known. It consists exactly of alternating produces 
  \[ X_{j}X_{j+1}Y_{j+1}Y_{j+2}X_{j+2}X_{j+3}\cdots X_{k}X_{k+1}
  \lel X_jZ_{j+1}Z_{j+2}... Z_kX_{k+1} \pl .\] 
Of course, depending on parity, the start or end can change. In case we have an odd number elements in the chain $l=2k+1$, this can be modeled on $k+1$ qubits and generates the Spin group $Spin(2k+2)$. We refer to \cite{lawson} for more information on Spin groups and the double cover 
 \[ 1 \rightarrow \zz_2 \rightarrow Spin(m) \rightarrow SO(m) \rightarrow 1 \pl .\]  
Using the model qubits from \cite{jpv} and \cite{eisert}, allows to prove the relation between $l$ and $k$. For even $n$, the chains $S_1$, $S_2$  has $l=2k$ many elements and hence the Lie groups satisfies
 \begin{equation} \label{evv} 
   G(S_1) \lel  Spin(n+2) \pl ,\pl G(S_2) \lel Spin(n+2) \pl .
    \end{equation} 
For odd $n$, we find at least
 \begin{equation} \label{odd} 
   G(S_1) \subset Spin(n+3 ) \pl, \pl G(S_2) \subset Spin(n+3 )  \pl. 
   \end{equation} 
Since the dimension of the Lie algebra $\mf{spin}(m)=\so(m)$ is $\frac{m(m-1)}{2}$, we can collect these facts in the following lemma:   

\begin{lemma}\label{sgroup} The dimension of Lie algebra $lie(S_{tot})$ is at most $(n+3)(n+2)+1$. The Lie group is contained in 
 \[ G(S_{tot})\subset \mathbb{T} \times Spin(n+3)\times Spin(n+3) \pl .\] 
\end{lemma}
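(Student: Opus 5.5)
The plan is to show that $\lie(S_{tot})$ splits as a direct sum of three mutually commuting ideals: the span of $iZ_{-1}Z_1Z_2$, the algebra $\mf g(S_1)$, and the algebra $\mf g(S_2)$. Once this splitting is in place, the group statement follows by exponentiating, since commuting Lie subalgebras generate commuting subgroups. The dimension bound then reduces to adding dimensions.

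\textbf{Step 1: commutation relations between the three pieces.} Two Pauli strings commute exactly when they anticommute on an even number of sites, so every check is a site count.
\begin{itemize}
\item An element of $S_1$ against an element of $S_2$. On the register $\{-2,-1\}$ the strings $X_{-2}X_{-1}$ and $Y_{-2}Y_{-1}$ anticommute on two sites. On the system qubits the modification was designed to reproduce the commutation pattern of $\tilde S_1$ against $\tilde S_2$.
\item An element of $S_1$ or $S_2$ against $Z_{-1}Z_1Z_2$. The paper asserts these commute by construction, and I verify it site by site. For example, $X_{-2}X_{-1}X_0X_1$ meets $Z_{-1}Z_1Z_2$ at sites $-1$ and $1$, giving two anticommutations, so they commute. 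Similarly $Y_1Y_2$ meets it at sites $1$ and $2$, again two.
\end{itemize}

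If $S_1$ and $S_2$ fail to commute elementwise, the fallback is to check that they commute at the level of generated algebras. The natural route is that $\mf g(S_1)$ is spanned by the alternating products $X_jZ\cdots ZX_{k+1}$-type strings (with the ancilla decorations), and each such string commutes with each generator of $S_2$ by an even site count. I would organize this via the Jordan--Wigner / Majorana picture cited from \cite{jpv}: the two chains correspond to the two commuting copies of $\so$ inside the parity-graded algebra.

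\textbf{Step 2: identify each chain algebra.} I use the cited fact that a chain of $l$ Pauli strings in which only neighbours anticommute generates $\mf{spin}$ of the corresponding size. This gives \eqref{evv} for even $n$ and \eqref{odd} for odd $n$, so in all cases $\mf g(S_i)\subseteq \so(n+3)$.

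\textbf{Step 3: assemble the algebra and the group.} Since the three pieces commute pairwise,
\[\lie(S_{tot})=\rz\, iZ_{-1}Z_1Z_2\oplus \mf g(S_1)\oplus\mf g(S_2),\]
with commuting summands. This gives the bound
\[\dim\lie(S_{tot})\le 1+2\cdot\frac{(n+3)(n+2)}{2}=(n+3)(n+2)+1.\]
The group is then the product of the three commuting connected subgroups:
\begin{itemize}
\item a circle $\mathbb{T}$ from $e^{itZ_{-1}Z_1Z_2}$;
\item two images of $Spin(n+3)$, one from each chain.
\end{itemize}
This is an image of $\mathbb{T}\times Spin(n+3)\times Spin(n+3)$, which I read as the intended meaning of ``contained in''.

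\textbf{Main obstacle.} I expect the hard part to be the commutation of $\mf g(S_1)$ with $\mf g(S_2)$, or even of the individual generators. The site counting near the ancilla register, between strings like $X_{-2}X_{-1}X_0X_1$ and $X_1X_2$ or $Y_{-2}Y_{-1}Y_0Y_1$, needs care. If the generators do not literally commute, the direct-sum claim has to be argued at the level of the generated algebras through the Majorana-string description.
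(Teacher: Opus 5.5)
Your proposal follows the same route as the paper: the modified chains $S_1$ and $S_2$ generate images of spin groups inside $Spin(n+3)$, and all generators in $S_1$, $S_2$ and $\{Z_{-1}Z_1Z_2\}$ pairwise commute, so $\dim\lie(S_{tot})\le 1+2\cdot\frac{(n+3)(n+2)}{2}$. The hedge in your obstacle paragraph is unnecessary, because the generators of $S_1$ and $S_2$ commute elementwise by direct site count: for instance, $X_{-2}X_{-1}X_0X_1$ anticommutes with $X_1X_2$ on no site and with $Y_{-2}Y_{-1}Y_0Y_1$ on four sites, and in general the $\{-2,-1\}$ decoration only adds an even number of anticommuting sites to the commuting pattern of $\tilde S_1$ against $\tilde S_2$.
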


\begin{lemma} Let $\xi$ be a vector such that 
 \[ \lan \xi,X\xi \ran \neq 0\pl  , \pl \lan\xi,Y\xi\ran  \neq 0 \pl ,\pl \lan \xi,Z\xi\ran\neq 0\pl \] 
and $V=\{|\xi\xi\xi\ran ,|000\ran\}$. Let $G=G(S_{tot})$. Then $C_V(G)$ contains $\Ad_{U(A)}$ and acts transitively on densities.  
\end{lemma}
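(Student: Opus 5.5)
The plan is to verify the two conditions of \cref{ccc} for $\mc C_V(G)$ with $G=G(S_{tot})$. First I would use slicing (\cref{embed}) to produce enough system Hamiltonians to generate $\mf{su}(A)$, and then use the second-order expansion \eqref{channel_1} to produce a non-normal jump operator on qubit $1$. Throughout I use two facts. The group $G$ contains $e^{\pm itT}$ for every $iT\in\mf g(S_{tot})$, so slicing gives both $\pm\om_v(T)\in\Ham_{\mc C_V(G)}$. Moreover, $\Ham_{\mc C_V(G)}$ is closed under positive sums (by Lie--Trotter, since $\mc C_V(G)$ is closed under composition and limits), so it contains the real span of the sliced elements. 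By \cref{lieham} it then contains the real Lie algebra they generate.

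\textbf{Step 1: slicing.} I would slice each generator against the two preparations. Slicing with $|000\ran$ gives the following.
\begin{itemize}
\item Every generator acting trivially on $E$ is kept unchanged: $X_jX_{j+1}$ and $Y_jY_{j+1}$ for $j\ge 1$.
\item $Z_{-1}Z_1Z_2$ becomes $Z_1Z_2$.
\item Strings carrying an $X$ or $Y$ on the environment slice to $0$.
\end{itemize}
Slicing with $|\xi\xi\xi\ran$, and writing $x=\lan\xi,X\xi\ran$, $y=\lan\xi,Y\xi\ran$, $z=\lan \xi,Z\xi\ran$ (all nonzero and real), gives:
\begin{itemize}
\item $X_{-2}X_{-1}X_0X_1\mapsto x^3X_1$ and $Y_{-2}Y_{-1}Y_0Y_1\mapsto y^3 Y_1$;
\item $X_{-2}X_{-1}X_{j}X_{j+1}\mapsto x^2X_jX_{j+1}$ and similarly for the $Y$ strings;
\item $Z_{-1}Z_1Z_2\mapsto zZ_1Z_2$.
\end{itemize}
Hence $\Ham_{\mc C_V(G)}$ contains $\pm X_1,\pm Y_1$ and $\pm X_jX_{j+1},\pm Y_jY_{j+1}$ for all $1\le j<n$.

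\textbf{Step 2: $\mf{su}(A)$.} The bracket $[X_1,Y_1]$ gives $Z_1$, so we have $\mf{su}(2)$ on qubit $1$. Conjugating $X_1X_2$ and $Y_1Y_2$ by qubit-$1$ rotations yields $P_1X_2$ and $P_1Y_2$ for every Pauli $P$. Since $X_1X_2$ and $X_1Y_2$ anticommute, their commutator is $\propto Z_2$, and similarly one obtains $X_2$ and $Y_2$. This gives local $\mf{su}(2)$ on qubit $2$ together with an entangling term, hence $\mf{su}(4)$ on qubits $1,2$. Inducting along the chain gives $\mf{su}(4)$ on each neighbouring pair, and these overlapping copies generate $\mf{su}(2^n)$. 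So $\mf{su}(A)\subseteq i\Ham_{\mc C_V(G)}$ and $\Ad_{U(A)}\subseteq \mc C_V(G)$ by the remark after \cref{lieham}; this is the first claim.

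\textbf{Step 3: a non-normal jump (main obstacle).} Take $T=X_{-2}X_{-1}X_0X_1+Y_{-2}Y_{-1}Y_0Y_1$, so that $iT\in\mf g(S_{tot})$, and prepare $v=|\xi\xi\xi\ran$. The off-diagonal part of $T$ relative to $v$ sends $v$ to the complement $v^\perp$ in the $8$-dimensional $E$. Decomposing it along an orthonormal basis $\{w_k\}$ of $v^\perp$ gives
\[
\gamma_k=\lan w_k|XXX|v\ran X_1+\lan w_k|YYY|v\ran Y_1 .
\]
The analogue of \eqref{channel_1} for larger $E$ should read $\Phi_t=\Ad_{e^{it\alpha}}+\frac{t^2}{2}\sum_k L_{\gamma_k}+O(t^3)$. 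Undoing $\Ad_{e^{it\alpha}}$ as in the proof of \cref{densitymagic} and applying \cref{difflemma} then puts $\sum_kL_{\gamma_k}$ in $\Lin_{\mc C_V(G)}$.

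The difficulty is that \cref{general_trans} is stated for a single jump operator. An operator $c_1X+c_2Y$ is non-normal exactly when $c_1\bar c_2\notin\rz$. I would try to choose $w_1$ to be the normalized projection of $YYY v$ onto $v^\perp$ together with a suitable complex phase combination, so that $\gamma_1$ is non-normal while the other $\gamma_k$ are small or normal. I would then adapt the diagonal argument of \cref{general_trans}: after Schur-triangularizing, a sum of Lindbladians on qubit $1$ still preserves the diagonal and strictly pushes population in one direction whenever some summand is non-normal. That is all the proof of \cref{ad_trans} uses.

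If this adaptation fails, I would instead use other elements of $\mf g$, such as $[T_X,T_Y]$ or combinations involving $Z_{-1}Z_1Z_2$, to obtain an effectively single off-diagonal block. Note that $\lan\xi,X\xi\ran,\lan\xi,Y\xi\ran\neq0$ is exactly what makes both $X_1$ and $Y_1$ appear with nonzero complex weights. Once this step is in place, \cref{ccc} yields state transitivity.
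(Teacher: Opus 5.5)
\textbf{Gap in Step 3.} Steps 1 and 2 are essentially the paper's argument: slice with $|\xi\xi\xi\rangle$ and $|000\rangle$, then generate $\mathfrak{su}(A)$. Step 3, however, has a genuine gap, and it comes from using the wrong preparation for the jump.

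With $v=|\xi\xi\xi\rangle$, the second-order term $\frac{t^2}{2}\sum_k L_{\gamma_k}$ does not depend on the choice of basis $\{w_k\}$ of $v^\perp$. It depends only on $a=P_{v^\perp}(X\xi)^{\otimes 3}$ and $b=P_{v^\perp}(Y\xi)^{\otimes 3}$, so there is no way to isolate $\gamma_1$.

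Computing gives $\|a\|^2=1-x^6$, $\|b\|^2=1-y^6$ and $\langle a|b\rangle=-iz^3-x^3y^3$. Hence $\|a\|^2\|b\|^2-|\langle a|b\rangle|^2=1-x^6-y^6-z^6>0$ whenever $x,y,z\neq0$, so the jumps genuinely span $\mathrm{span}\{X_1,Y_1\}$.

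On the diagonal of qubit $1$ this dissipator moves population $|1\rangle\to|0\rangle$ at rate $\|a-ib\|^2=2-x^6-y^6-2z^3$. It moves population $|0\rangle\to|1\rangle$ at rate $\|a+ib\|^2=2-x^6-y^6+2z^3$. Both rates are strictly positive, so this is a finite-temperature dissipator. The argument behind \cref{ad_trans} needs amplitude damping, i.e.\ zero back-rate, not merely a bias.

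The principle you invoke is also false as stated: $L_{|0\rangle\langle 1|}+L_{|1\rangle\langle 0|}$ is a sum of non-normal terms, yet it is unital.

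Worse, your resource provably cannot purify. Since $X_1$ and $Y_1$ have no common eigenvector, $\langle\Psi|\mathcal L(|\Psi\rangle\langle\Psi|)|\Psi\rangle=-2\sum_k\|(\gamma_k-\langle\gamma_k\rangle_\Psi)\Psi\|^2<0$ for every unit $\Psi$. A barrier argument then shows that, for small $\eta>0$, the set $\{\rho:\lambda_{\max}(\rho)\le1-\eta\}$ is invariant under this dissipator and under $\Ad_{U(A)}$. So these resources never produce a pure state from $I/2^n$.

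Your fallbacks do not help either:
\begin{itemize}
\item $[T_X,T_Y]=0$, because the two strings anticommute at four sites.
\item $Z_{-1}Z_1Z_2$ commutes with every element of $S_1\cup S_2$. It is central in $\mathfrak g(S_{tot})$, which is where the factor $\mathbb T$ in \cref{sgroup} comes from.
\end{itemize}

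\textbf{The missing idea.} Use the other preparable vector, $|000\rangle$, for the dissipative part. The vectors $X_{-2}X_{-1}X_0|000\rangle=|111\rangle$ and $Y_{-2}Y_{-1}Y_0|000\rangle=-i|111\rangle$ are parallel. Therefore $\langle000|T|000\rangle=0$, and the off-diagonal block is the single operator $X_1-iY_1=2|1\rangle\langle0|$. In the formulas above, $x=y=0$ and $z=1$ kill the rate $\|a-ib\|^2$.

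Hence $\Phi_t=\id+2t^2L_{|1\rangle\langle0|}+O(t^3)$. The second part of \cref{difflemma} gives $e^{sL_{|1\rangle\langle0|}}\in\mc C_V(G)$. Combined with your Step 2, \cref{general_trans} then yields transitivity; this is exactly the paper's argument. The two preparations play complementary roles: $|\xi\xi\xi\rangle$, which needs $x,y,z\neq0$, supplies coherent control, while $|000\rangle$ supplies the jump.

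\textbf{A minor point on Step 2.} The induction must carry full control on qubits $1,\dots,j$, not just local $\mathfrak{su}(2)$ on qubit $j$. The reason is that $\mathfrak{su}(2)_j$ together with $X_jX_{j+1}$ and $Y_jY_{j+1}$ only closes to a $10$-dimensional algebra, and no $Z_jZ_{j+1}$ is available for $j\ge2$. If you carry $\mathfrak{su}(2^j)$ on qubits $1,\dots,j$, irreducibility of the adjoint action gives all $M\otimes X_{j+1}$ and $M\otimes Y_{j+1}$, and the step goes through.
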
  

\begin{proof} Let $H=A_{-2}B_{-1}C_0D$ such that $A,B,C$ are single Paulis and $D$ is a Hamiltonian.  Then 
 \[ tr_{E} [H,\si_{-2}\ten \tau_{-1}\ten \gamma_0 \ten \rho]
 \lel  tr(A\si)tr(B\tau)tr(C\gamma) [D,\rho] \pl .\] 
By our choice of $\xi=\frac{1}{\sqrt{6+2\sqrt{3}}}\begin{pmatrix}
    \sqrt{3}+1\\1+i
\end{pmatrix}$, we may `trace out' all the single Paulis and find new Hamiltonians 
 \[ \{Z_1Z_2, X_1,Y_1,X_2X_3, Y_2Y_3\}\subset Ham_{C_v(G)} \] 
which can be added to the existing Hamiltonian. Therefore  
 \[ \{X_1,Y_1,Z_1,Z_1Z_2\} \cup \mathcal{X}(2)\cup \mathcal{Y}(2) 
 \subset Ham_{C_V(G)} \pl .\] 
Using a standard trick 
 \[ [X_jX_{j+1},Z_j] \lel -2i  Y_jX_{j+1} \pl ,\pl [Y_jX_{j+1},Y_jY_{j+1}] \lel 2i Z_{j+1} \] 
we deduce by induction and $\mathcal{Z}(1)\subset \Ham_{C_V(G)}$ as well. We refer to \cite{jpv} for the fact that this implies full unitary control.

With the environment initialized in $\ket{000}$, the only non-zero off-diagonal matrix element of $X_{-2}X_{-1}X_0X_1+Y_{-2}Y_{-1}Y_0Y_1$ from the $\ket{000}$-sector is proportional to $X_1-iY_1=2\ket1\bra0$. Hence the second-order reduction yields $L_{\ket1\bra0}\in\Lin_{C_V(G)}$ up to a positive scalar. Together with full unitary control, \cref{general_trans} and \cref{ccc} gives state transitivity.
\end{proof}
             
 \begin{rem} \label{op}{\rm With only one extra qubit and two extra vectors, we can also achieve state-transitivity and full unitary control. Indeed, let us consider 
\begin{align*}
    T_1 &=  \{X_0X_1,Y_1Y_2,X_2X_3,...,X_{n-1}X_n,Y_{n-1}Y_n\} \pl ,\pl \\ 
    T_2 &=  \{Y_0Y_1,X_1X_2,Y_2Y_3,...,Y_{n-2}Y_{n-1},X_{n-1}X_n\} \pl. 
\end{align*}
We can add the element  $Y_2Y_3Y_4Y_5$ to $T_2$ and find a Lie group 
$G(T_2\cup \{Y_2Y_3Y_4Y_5\}) \subset SU(2^{n/2+2})$.  This leads to a control group of dimension roughly $\sqrt{4^n}$. Then after trace out with $|\xi\ran$ we have full unitary control. Trace out with $|0\ran$ produces state transitivity. This indicates the subtle competition between size of the group in $SU(EA)$ and expressitivity. } \end{rem}

\section{A Magic Hamiltonian with an empty resource set}\label{2prep}
In the previous examples we start with a resource set $S$ which generates a polynomial sized lie algebra. It turns out if we allow for multiple state preparations $\ket{0}_E, \ket{1}_E \in V$, we can have a magic Hamiltonian $H$ in the case where $S$ is empty! That is to say, a single Hamiltonian $H$ suffices to for state transitivity. In this case the group $G$ of operations is given by the one-parameter unitary subgroup $G_H := \{e^{iHt}:t \in \mb R\}$.  Moreover we will see in \cref{channelmagic} that this one-dimensional unitary group can be used to approximate any channel $\Phi: \mc D(A) \to \mc D(A).$

\begin{theorem}[The $\alpha \beta \gamma$ magic]\label{abcmagic}
Take $E = \mb C^2$, and an empty resource set $S = \varnothing.$ We take an interaction Hamiltonian $H \in \mc L_{sa}(EA)$ written in the computation basis as
\[ H = \begin{pmatrix} \alpha & \gamma^* \\ \gamma & \beta \end{pmatrix}.\]
Take $V = \{\ket{0}_E, \ket{1}_E \}$ If $\mf{su}(A)\subseteq \mf g(\{\alpha,\beta\})$ and $\gamma$ is non-normal, then $H$ performs state magic.
\end{theorem}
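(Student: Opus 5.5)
The plan is to verify the two conditions of \cref{ccc} for the channel class $\mc C_V(G_H)$ with $G_H=\{e^{iHt}:t\in\rz\}$. Each condition comes from differentiating the family $\Phi_t^v(\rho)=\tr_E(e^{itH}(\ketbra{v}\ten\rho)e^{-itH})$ at $t=0$, for $v=\ket 0$ and for $v=\ket 1$.

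\emph{Step 1 (Hamiltonians).} The first right derivative of $\Phi^{\ket 0}_t$ is $i\ad_\alpha$, exactly as in the proof of \cref{embed}. Likewise, $\Phi^{\ket 1}_t$ has derivative $i\ad_\beta$. By \cref{difflemma}, $\alpha,\beta\in\Ham_{\mc C_V(G_H)}$. To apply \cref{lieham} we also need $-\alpha$ and $-\beta$. These come from running $H$ backwards: $G_H$ is a group, so $e^{-iHt}\in G_H$ and the path $t\mapsto\Phi^{\ket0}_{-t}$ has right derivative $-i\ad_\alpha$. \Cref{lieham} then gives closure of $\pm\alpha,\pm\beta$ under commutators. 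Real linear combinations come from Lie--Trotter products of the semigroups $e^{ts\,i\ad_\alpha}$ and $e^{ts'\,i\ad_\beta}$ with $s,s'\ge 0$, together with \cref{difflemma}. So $i\Ham_{\mc C_V(G_H)}$ contains the real Lie algebra generated by $i\alpha$ and $i\beta$, which is $\mf g(\{\alpha,\beta\})\supseteq\mf{su}(A)$. This gives condition (1), and by the subsequent remark $SU(A)\subseteq\mc U_{\mc C_V(G_H)}$.

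\emph{Step 2 (dissipation).} Use \cref{channel_1} with $v=\ket0$. The lower-left block of $H$ is $\gamma$, so $\Phi^{\ket0}_t=\Ad_{e^{it\alpha}}+\frac{t^2}{2}L_\gamma+O(t^3)$. I then follow the proof of \cref{densitymagic}. Since $\alpha\in\Ham$ with both signs, $\Ad_{e^{-it\alpha}}\in\mc C_V(G_H)$, so $\Psi_t=\Ad_{e^{-it\alpha}}\circ\Phi^{\ket0}_t=\id+\frac{t^2}{2}L_\gamma+O(t^3)$ lies in the class. The second-order part of \cref{difflemma}, applied through the square-root rescaling, gives $e^{tL_\gamma}\in\mc C_V(G_H)$.

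\emph{Step 3 (transitivity).} Here I expect the main difficulty. \Cref{general_trans} and \cref{ccc} assume the jump operator is a non-normal single-qubit operator, whereas $\gamma$ is an arbitrary non-normal operator on $A$. I would first try to generalize the proof of \cref{general_trans}.
\begin{enumerate}
\item Use full unitary control to put $\gamma$ in Schur (upper-triangular) form, $\gamma=d+b$ with $d$ diagonal and $b$ strictly upper triangular, $b\neq0$ by non-normality.
\item Conjugate by a diagonal unitary $D$ that flips the sign of $b$ but fixes $d$ (possible via phases, e.g.\ signs $\pm1$ chosen so the offending entries change sign), giving access to $L_{d-b}$.
\item Average using \cref{lind_identity}(ii) to obtain $L_d+L_b$.
\item Observe that $L_d$ annihilates diagonal states, so on the diagonal sub-algebra only $L_b$ acts.
\end{enumerate}
The delicate point is that $L_b$ need not preserve diagonal states when $b$ has several off-diagonal entries. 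If it does not, I would use further phase twirling: averaging over diagonal unitaries $\Ad_D$ decomposes $b$ into single matrix units $E_{jk}$, since $\frac12(L_{b'+b''}+L_{b'-b''})=L_{b'}+L_{b''}$. This yields a positive combination $\sum c_{jk}L_{E_{jk}}$ with some $c_{jk}>0$, which is a classical Markov generator on diagonals. The goal is to reduce to $L_{E_{jk}}$, i.e.\ amplitude damping between two basis states. A permutation unitary then moves this to $\ket0\bra1\ten I$-type damping, or one runs the eigenvalue-majorization argument of \cite{cgj} directly with a single transfer $\ket j\to\ket k$, and \cref{ad_trans} finishes. The fallback is to cite the general non-normal criterion of \cite{cgj,mved23}: full unitary control plus one non-normal jump yields state transitivity.
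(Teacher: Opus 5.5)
Your Steps 1 and 2 are the paper's proof. The paper obtains $\alpha,\beta\in\Ham_{\mc C_V(G_H)}$ from the first-order expansions via \cref{difflemma}, closes under brackets with \cref{lieham}, and then invokes \cref{densitymagic}. The proof of \cref{densitymagic} is exactly your Step 2 followed by a one-line appeal to \cref{general_trans}; your treatment of $-\alpha,-\beta$ and of real linear combinations is more careful than the paper's. Your diagnosis in Step 3 is also correct, and the paper passes over it. \cref{general_trans} and \cref{ad_trans} only treat jumps $I^{\otimes k}\ten a\ten I^{\otimes(n-k-1)}$ with $a\in\mb M_2$, so for an arbitrary non-normal $\gamma\in\mc L(A)$ the paper's chain has the same hole. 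The hole is harmless for the $n$-qubit Hamiltonian $h$ of \cref{2prep}, where $\gamma=\frac{X_1-iY_1}{2}$ acts on one qubit.

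Your repair, however, does not close the hole.
\begin{itemize}
\item The sign flip in your item 2 is impossible in general. Conjugation by a diagonal unitary multiplies $b_{jk}$ by $e^{i(\theta_j-\theta_k)}$, and these phases cannot all equal $-1$ on a triangle such as $\{(1,2),(2,3),(1,3)\}$.
\item The torus twirl you fall back on is fine and gives $L_d+\sum_{j<k}|b_{jk}|^2L_{E_{jk}}$. But the step ``reduce to a single $L_{E_{jk}}$'' has no mechanism, and in fact it fails. On diagonal states, $L_x$ acts (up to off-diagonal terms; Hamiltonian parts contribute nothing to the diagonal) as the Markov generator with rates $|x_{ik}|^2$. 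Nonnegative combinations only take unions of supports. So a generator acting on diagonals as a single transfer $k\to j$ can be assembled only if some $U\gamma U^*$ equals $D+cE_{jk}$ with $D$ diagonal.
\item This fails for the nilpotent shift $\gamma=\sum_j E_{j,j+1}$ on $A$ with $n\ge 2$. Nilpotency forces $D=0$, so $U\gamma U^*$ would have rank at most one, contradicting rank $2^n-1$.
\item Separately, a permutation cannot carry the rank-one $E_{jk}$ to $\ket0\bra1\ten I^{\otimes(n-1)}$, which has rank $2^{n-1}$.
\end{itemize}

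So the amplitude-damping argument behind \cref{ad_trans} cannot be reused for general $\gamma$. What Step 3 actually needs is a reachability theorem for full unitary control plus one arbitrary non-normal jump operator. That means an argument on the eigenvalue simplex with multi-edge, one-directional Markov generators. Cooling is easy with your twirled generator: sum permuted copies that fix a sink having an incoming edge. The real difficulty is driving a pure state to an arbitrary spectrum while the other occupied levels keep flowing. As written, your fallback citation carries the entire weight of Step 3. You would need to state precisely which result of \cite{cgj} or \cite{mved23} covers arbitrary non-normal $\gamma$, or prove it.
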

\begin{proof}
    Both $\Phi_{e^{itH}}^{\ket{0}} = \Ad_{e^{it\alpha}} + O(t^2)$ and $\Phi_{e^{itH}}^{\ket{1}} = \Ad_{e^{it\beta}} + O(t^2)$ are in the channel class $C_V(G_H).$ By \cref{difflemma} we have that $\alpha, \beta \in \Ham_{C_V(G_H)}.$ By the assumption that $\mf g(\alpha,\beta) = \mf{su}(A)$ and \cref{lieham} we have that $\mf{su}(A) \subseteq \Ham_{C_V(G_H)}$. And now through \cref{densitymagic} we see that $H$ performs state magic.
\end{proof}
\begin{remark}
    \cref{abcmagic} can be modified to take $V = \{  \ket{v}, \ket{v}^\perp \}$ by writing $H$ in the basis according to the set $V$. 
\end{remark}

\cref{abcmagic} shows us that with just a single Hamiltonian we can act transitively on the entire state space, if we find two Hamiltonians which together are bracket generating (see also forthcoming \cite{cgj2}). 

In this case rather than starting with any resources in the resource set $S$, our resources can instead 'live inside' the magic Hamiltonian $H$. The two different state preparations $\rho \mapsto \ketbra{0} \otimes \rho$ and $\rho \mapsto \ketbra{1} \otimes \rho$ allows us to extract the resources $\alpha, \beta$ that live on the diagonal of the Hamiltonian $H.$ We can see a concrete example of this below:

\begin{example}\label{1d+2prep}
    Let $E \cong \mb C^2$ and $A \cong \mb C^2$. Let $S = \varnothing$ and $V = \{ \ket{0}_E, \ket{1}_E \}.$ Write \[H=\begin{pmatrix}
    0 & 1 & 0 & 0 \\
    1 & 0 & 1 & 0 \\
    0 & 1 & 1 & 0 \\
    0 & 0 & 0 & -1  
    \end{pmatrix}.\] By \cref{abcmagic} we have $X,Z \in \Ham_{C_V(G_H)}$ and $L_a \in \Lin_{\mc C_V(G_H)}$ where $a = \ket{0}\bra{1}.$ Now $\mf g(X,Z) = \mf{su}(A)$ and $\gamma=\ket0\bra1$ is non-normal, so we have state magic using the algorithm from \cref{ad_trans}. Remarkably, iterated applications of the two channels $\Phi_{e^{itH}}^{\ket{0}}(\rho) = \tr_E \lpara e^{iHt} \lpara \ketbra{0} \otimes \rho \rpara e^{-iHt} \rpara $ and $\Phi_{e^{itH}}^{\ket{1}}(\rho) = \tr_E \lpara e^{iHt} \lpara \ketbra{1} \otimes \rho \rpara e^{-iHt} \rpara $, together with limits, suffices to send any state $\rho$ to any other state $\sigma$.
\end{example}
We can also construct examples that work for arbitrary numbers of qubits:
\begin{example}\label{anymagic}
    Take $A \cong (\mb C^2)^{\otimes n}$ and $E \cong \mb C^2$. Take $S = \varnothing$ and $V= \{ \ket{0}_E, \ket{1}_E \}.$ Write
    $$p =\sum_{k=1}^{2^n} k \ketbra{k} - \frac{2^n+1}{2}I, \qquad
        q = \sum_{j \neq k} \ket{j}\bra{k},\; \text{ and }\; 
        a = \ket{0}\bra{1}$$
    where $p,q,a \in \mc L(A).$ Note that $p$ and $q$ are traceless and $\lie(ip, iq)=\su(2^n)$, and $a$ is non-normal. Now we write the Hamiltonian \[ H = \begin{pmatrix}p & a \\ a^* & q \end{pmatrix}.\] By \cref{abcmagic} we have that $C_V(G_H)$ is state transitive and $H$ performs state magic.
\end{example}

We can also identify an open system magic Hamiltonian for $n$ qubits.

\begin{theorem}  Let 
 \[ h\lel  \kla \begin{array}{cc} Z_1&  \frac{X_1+iY_1}{2} \\
              \frac{X_1-iY_1}{2}& X_1+\sum_j X_{j}X_{j+1}+Y_jY_{j+1}+Z_jZ_{j+1} \end{array}\mer \pl ,\]
$G=\{e^{ith}|t\in \rz\}$ and $V=\{\ket0,\ket1\}$. Then $C_V(G)$ performs density magic.              
\end{theorem}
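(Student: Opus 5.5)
The plan is to apply \cref{abcmagic} directly to $h$, written in the computational basis of $E=\cz^2$. In the notation of that theorem we read off
\[ \alpha = Z_1,\qquad \beta = X_1+H_{\mathrm{Heis}},\qquad \gamma=\tfrac{X_1-iY_1}{2}=|1\ran\lan 0|_1\ten I^{\ten(n-1)},\]
where $H_{\mathrm{Heis}}=\sum_j (X_jX_{j+1}+Y_jY_{j+1}+Z_jZ_{j+1})$. Here $\gamma$ sits in the lower-left block, consistent with the convention $H=\begin{pmatrix}\alpha&\gamma^*\\ \gamma&\beta\end{pmatrix}$. Two hypotheses then need to be checked.

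The first is non-normality of $\gamma$, which is immediate. For $b=|1\ran\lan0|$ one has $[b,b^*]=|1\ran\lan1|-|0\ran\lan0|=-Z\neq0$, so $[\gamma,\gamma^*]=-Z_1\neq 0$. Once both hypotheses hold, the proof of \cref{abcmagic} applies verbatim. Preparing $|0\ran$ gives $Z_1\in \Ham_{C_V(G)}$ and preparing $|1\ran$ gives $\beta\in\Ham_{C_V(G)}$, both via \cref{difflemma}. By \cref{lieham} these generate $\mf g(\{\alpha,\beta\})$. Finally \cref{densitymagic}, fed by \cref{general_trans} with the non-normal $\gamma$, yields state transitivity.

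The second hypothesis, $\su(A)\subseteq \mf g(\{Z_1,\beta\})$, is the main obstacle: we must separate $X_1$ from the Heisenberg chain using brackets with $Z_1$ only. First, the double commutator $\ad_{iZ_1}^2(i\beta)$ kills everything commuting with $Z_1$. What survives is proportional to $i(X_1+X_1X_2+Y_1Y_2)=:iK$. Hence $i(\beta-K)=i\bigl(Z_1Z_2+\sum_{j\ge2}(X_jX_{j+1}+Y_jY_{j+1}+Z_jZ_{j+1})\bigr)$ lies in the algebra. Conjugating $K$ by $e^{isZ_1}$ (i.e.\ taking $\ad_{iZ_1}K$) also produces $K'=Y_1+Y_1X_2-X_1Y_2$.

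Next I would bracket $K$, $K'$ and $\beta-K$ against each other and against $Z_1$. The goal is to use locality and degree counting: single-site terms on qubit 1 should separate from two-site terms because their commutators with $Z_1Z_2$ behave differently. For instance, $[Z_1Z_2,X_1]\propto Y_1Z_2$ while $[Z_1Z_2,X_1X_2+Y_1Y_2]\propto (Y_1X_2-X_1Y_2)$ up to signs and cross terms, and further $Z_1$-brackets distinguish these. The target is to isolate $X_1$ itself, and hence $Y_1$.

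Once $X_1$ is isolated, $H_{\mathrm{Heis}}=\beta-X_1$ lies in the algebra together with $X_1$ and $Z_1$. This is exactly example (i) of the introduction, a Heisenberg chain with local control on its end qubit, which is known to be magic \cite{Burgarth2009,jpv}, so $\mf g(\{Z_1,\beta\})=\su(A)$ up to the trace part.

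If isolating $X_1$ by explicit brackets proves messy, I would fall back on a structural argument. The algebra generated by $Z_1$ and $\beta$ contains $\mathrm{span}\{Z_1,X_1,Y_1\}\ten I$ plus the spread of two-site terms along the chain. One could then invoke the PauLie classification of \cite{jpv}, verifying that the commutation graph of the Pauli strings obtained so far is connected and not of free-fermion (chain) type. In that case it generates $\su(2^n)$.
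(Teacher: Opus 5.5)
\textbf{Verdict: genuine gap.} Your reduction is the same as the paper's, and those parts are fine: reading off $\alpha=Z_1$, $\beta=X_1+H$, $\gamma=\ket{1}\bra{0}\otimes I^{\otimes(n-1)}$, checking $[\gamma,\gamma^*]=-Z_1\neq 0$, and invoking \cref{abcmagic}.

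The gap is the step you yourself call the main obstacle: nothing in the proposal proves $\mathfrak g(\{Z_1,\beta\})=\mathfrak{su}(2^n)$. ``Bracket $K$, $K'$, $\beta-K$ and isolate $X_1$'' is a goal, not an argument. The heuristic you offer is also off, since $[Z_1Z_2,X_1X_2+Y_1Y_2]=0$. The PauLie fallback is circular: that classification is about algebras generated by individual Pauli strings, and extracting individual strings such as $X_1,Y_1$ from the sums $\beta,K,K'$ is exactly what is missing.

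Nor is this step routine. The operator $\Gamma=X_2-Z_1Z_2$ commutes with both $Z_1$ and $K=X_1+X_1X_2+Y_1Y_2$. In fact $\mathfrak g(\{Z_1,K\})$, which contains $K'$, is only the six-dimensional algebra spanned (up to the factor $i$) by $Z_1$, $X_1X_2$, $Y_1X_2$, $Y_1-X_1Y_2$, $X_1+Y_1Y_2$, $Z_1X_2-Z_2$. Since $X_1$, $Y_1$, $X_1Y_2$ do not commute with $\Gamma$, no amount of bracketing $Z_1,K,K'$ can isolate them. The $Z_1Z_2$ contained in $\beta-K$ has to be used in an essential way.

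You also cannot simply defer to the paper here. The paper settles this step by citing \cref{two}. That computation brackets only $Z_1$, $X_1+a_1$ and $\alpha=Y_1+Y_1X_2-X_1Y_2$ (your $K'$) before extracting $X_1Y_2$ and then $Y_1$, which the obstruction above rules out. The slip is in $[\alpha,X_1+a_1]$: the term $[X_1Y_2,Y_1Y_2]$ is a multiple of $Z_1$, not of $Z_2$, and the corrected chain yields $Y_1X_2$ rather than $X_1Y_2$. As a sanity check, for $n=2$ and $\delta=0$ one has $X_1+H=K$ exactly, so that $\delta$-independent argument would prove a false statement.

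For $n=2$ your plan can be completed. There $\beta-K=Z_1Z_2$, and
$[Z_1Z_2,Y_1-X_1Y_2]=-2iX_1Z_2$, $[X_1Z_2,X_1X_2]=2iY_2$, and $[Y_2,Z_1X_2-Z_2]=-2i(Z_1Z_2+X_2)$.
These give $X_2,Y_2,Z_2$, hence $Z_1X_2$. Then $[Z_1X_2,X_1X_2]=2iY_1$ gives full control of qubit $1$, and from there all of $\mathfrak{su}(4)$.

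For $n\ge 3$, however, $\beta-K=Z_1Z_2+\sum_{j\ge 2}(X_jX_{j+1}+Y_jY_{j+1}+Z_jZ_{j+1})$. You still need a genuine argument, e.g.\ an induction along the chain, that separates the boundary terms from the tail. Only then can the Heisenberg-chain-with-end-control result \cite{Burgarth2009} be applied as you intend.
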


\begin{proof} We see that off diagonal part $\gamma$ is exactly the jump operator for the dephasing channel. Thanks to Theorem \ref{abcmagic}, it suffices to observe that $\{Z_1,X_1+H\}$ are bracket generating for the $XY$ model $H=\sum_j X_{j}X_{j+1}+Y_jY_{j+1}+Z_jZ_{j+1}$, closely related to the result in \cite{Burgarth2009}, see \cref{two}. 
\end{proof} 
We see that with just a single Hamiltonian we can achieve full state transitivity by iterating state preparation to two orthogonal vectors. One Hamiltonian suffices, but only using a single Hamiltonian may be costly in terms of iteration length. We will explore the cost and complexity of state magic in \cref{lowerbounds}.

\section{Lower bounds for approximation length}\label{lowerbounds}

In this section we prove that certain states or certain densities may be considered expensive in this new model of quantum computation. The relevant definition is the approximation length  
 \[ L_{\eps}(\rho|G,V) \lel \inf\{m\}\]  
for a given density $\rho \in \mathcal{D}(A)$. The infimum is taken over all $U_1,...,U_m\in G$ and $v(1),...,v(m)\in V$  such that 
 \[ \|\Phi_{U_1}^{v(1)}\cdots \Phi_{U_m}^{v(m)}(|0\ran\lan 0|^A)-\rho\|_1<\eps \pl .\] 
We use
 \[ l_{\eps}(G,V) \lel \sup_{\rho=|v\ran\lan v|} L_{\eps}(\rho|G,V) \] 
for the worst case approximation of pure states and 
 \[ L_{\eps}(G,V) \lel \sup_{\rho\gl 0,tr(\rho)=1} L_{\eps}(\rho|G,V) \] 
for  approximation of densities. Our key technical tool are covering numbers 
 \[ N(K,\eps) \lel \inf\lset m: \exists {x_1,...,x_m},  K\subset \bigcup_{j=1}^m B_{\eps}(x_j)\rset \pl .\] 
To be more precise, we have to specify the metric $d$ for which the Balls $B_{\eps}(x)$ are defined. For unitary groups $G\subset \mz_m$ we will use the $\infty$ norm. Let us start with a well-known fact from Banach space theory, see \cite{pisier}.  For every real Banach space $X$ of real dimension $d$ 
 \begin{equation} \label{cov}
  (\frac{1}{\eps})^d \kl   N(B_X,\eps) \kl(1+\frac{2}{\eps})^d 
 \end{equation}
Here $B_X=\{x| \|x\|_{X}\le 1\}$  is the unit ball and the distance defined by the norm  on $X$.  
For the set of densities we use the space of self-adjoint matrices equipped with the trace norm $\|x\|=tr(|x|)$ in $Sh_1$. Here $Sh_1$ is the Banach space $L(A)$ equipped with the trace norm $\|x\|_1=tr(|x|)$ with unit ball $B_{Sh_1}$.  Let $D=\mathcal{D}(A)$ be the set of positive elements of trace $1$. Then we see that 
 \[ B_{Sh_1} \subset {\rm conv}(D,-D) \subset 2 B_{Sh_1} \pl .\]    

\begin{lemma}\label{dens} Let $H$ be an $n$ dimensional Hilbert space and $D=\mathcal{D}(H)$. 
 \[ \lpara\frac{1}{8\eps}\rpara^{\frac{n^2-1}{2}}\kl  N(D,\eps) \kl \lpara1+\frac{4}{\eps}\rpara^{n^2-1}\pl .\] 
Here the distance is taken with respect to the trace norm. 
   
\end{lemma}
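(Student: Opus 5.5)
The plan is to reduce both bounds to the Banach space estimate \eqref{cov}, applied in the real space $X$ of traceless self-adjoint $n\times n$ matrices with the trace norm. This space has real dimension $n^2-1$. Fix a reference point, say $\rho_0=\frac{1}{n}I$, so that $D-\rho_0\subset X$ and the two sets have the same covering numbers. Everything then comes down to comparing $D-\rho_0$ with multiples of the unit ball $B_X=B_{Sh_1}\cap X$.

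\textbf{Upper bound.} For $\rho\in D$ we have $\|\rho-\rho_0\|_1\le 2$, so $D-\rho_0\subset 2B_X$. Covering $2B_X$ by $\eps$-balls is the same as covering $B_X$ by $\eps/2$-balls, so \eqref{cov} gives at most $(1+\frac{4}{\eps})^{n^2-1}$ balls. The centers may lie outside $D$. If the definition of $N$ requires centers in $D$, I would either note that the statement is insensitive to this, or pass to centers in $D$ at the cost of doubling the radius, which the constant $4$ can absorb after a small adjustment. This handles the upper estimate.

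\textbf{Lower bound.} I would use a volume argument in $X$. If $D\subset\bigcup_{j=1}^N B_\eps(x_j)$, then $\mathrm{vol}(D)\le N\eps^{d}\,\mathrm{vol}(B_X)$ with $d=n^2-1$. So I need a lower bound on $\mathrm{vol}(D-\rho_0)$ relative to $\mathrm{vol}(B_X)$.

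The relation $B_{Sh_1}\subset\mathrm{conv}(D,-D)$ stated just before the lemma is the intended tool. Restricted to traceless elements, it gives $B_X\subset \mathrm{conv}(D-\rho_0,\rho_0-D)$, after an adjustment for the trace shift (possibly a factor $2$). The hard step is then to bound the volume of a convex body $K=D-\rho_0$ from below by the volume of $\mathrm{conv}(K,-K)$. I would use the Rogers–Shephard type inequality
\[
\mathrm{vol}(\mathrm{conv}(K\cup -K))\le 2^d\,\mathrm{vol}(K)
\]
for $0\in K$, or its cruder form $\mathrm{vol}(K-K)\le\binom{2d}{d}\mathrm{vol}(K)\le 4^d\mathrm{vol}(K)$.

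Combining these gives $\mathrm{vol}(B_X)\le 8^d\,\mathrm{vol}(D-\rho_0)$, where the factors come from the trace shift and the symmetrization. Hence $N\ge(8\eps)^{-d}$, which is even better than the claim.

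The exponent $\frac{n^2-1}{2}$ in the statement suggests the authors use something weaker, for example a Hilbert–Schmidt comparison or a square-root loss. Since $(8\eps)^{-d}\ge(8\eps)^{-d/2}$ whenever $8\eps\le 1$, the stated bound follows in that regime. For $8\eps>1$ the left-hand side is less than $1$, so the bound is trivial.

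\textbf{Main obstacle.} I expect the symmetrization and volume comparison to be the one nontrivial step. If Rogers–Shephard feels too heavy to cite, the fallback is more concrete. Note that $D$ contains the trace-norm ball of radius $1/n$ around $\rho_0$ intersected with $X$, but this loses a factor of $n$. To avoid that loss, I would instead exhibit a large $\eps$-separated set of pure states and count it via the covering of the unit sphere. The real dimension of the pure states is $2n-2$, which alone is too small, so the volume route is the preferred one.
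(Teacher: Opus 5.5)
Your proposal is correct. The upper bound is the paper's argument: translate by $\frac1n I$ into the traceless self-adjoint matrices, note $D-\frac1n I\subset 2B_X$, and apply the ball estimate. For the lower bound you take a genuinely different route.

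\textbf{How the paper argues.} The paper never computes a volume of $D$. It writes every self-adjoint $a$ with $\|a\|_1\le 1$ as $\lambda D_1-(1-\lambda)D_2$ with $D_1,D_2\in D$. An $\eps$-net of $D$, combined with a net of $[0,1]$, then yields a $4\eps$-net of the full self-adjoint trace-norm ball, so $N(B_{Sh_1},4\eps)\le N(D,\eps)^2\cdot\frac{2}{\eps}$. It then uses the volumetric lower bound $(4\eps)^{-n^2}$ for that symmetric ball of real dimension $n^2$. This is completely elementary: only the triangle inequality and the standard estimate for symmetric balls. But the square $N(D,\eps)^2$, coming from the two densities needed per point, is exactly what halves the exponent to $\frac{n^2-1}{2}$.

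\textbf{What your route buys.} That lossy step is in effect the bound $N(K-K,2\eps)\le N(K,\eps)^2$. You replace it by its volumetric counterpart, Rogers--Shephard, which has no squaring loss. This gives $N(D,\eps)\ge (c\eps)^{-(n^2-1)}$, matching the exponent of the upper bound. The price is citing a nontrivial Brunn--Minkowski-type inequality. Since the lemma is only used in Proposition~\ref{lww}, where constants do not matter, either version suffices there.

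\textbf{Details to tighten.}
\begin{enumerate}
\item The inclusion $B_X\subset \mathrm{conv}(K\cup -K)$, with $K=D-\rho_0$, needs no extra factor. For traceless $a$ with $\|a\|_1\le 1$, write $a=s(\sigma_1-\sigma_2)$ with $s=\frac{\|a\|_1}{2}\le\frac12$ and $\sigma_1,\sigma_2\in D$, and use $0\in K$.
\item For the crude route, $0\in K$ gives $\mathrm{conv}(K\cup -K)\subset K-K$. Hence $\mathrm{vol}(B_X)\le\binom{2d}{d}\mathrm{vol}(K)\le 4^d\,\mathrm{vol}(K)$ with $d=n^2-1$.
\item The paper's covering numbers allow arbitrary centers. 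So in the count $\mathrm{vol}(K)\le N\eps^d\,\mathrm{vol}(B_X)$ you should intersect each $B_\eps(x_j)$ with the affine hyperplane of trace-one self-adjoint matrices. That intersection lies in a translate of $2\eps B_X$, which costs only a factor $2$ and leaves $N(D,\eps)\ge (8\eps)^{-(n^2-1)}$. From there your final comparison with the stated exponent goes through.
\item In the upper bound, doubling the radius would give $(1+\frac{8}{\eps})^{n^2-1}$, which the constant $4$ does not absorb. If you want centers in $D$, take a maximal $\eps$-separated subset of $D$ instead. The disjoint balls of radius $\frac{\eps}{2}$ around it lie in $\rho_0+(2+\frac{\eps}{2})B_X$, giving exactly $(1+\frac{4}{\eps})^{n^2-1}$.
\end{enumerate}
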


\begin{proof} Let $N(D,\eps)$ be a covering $D\subset \bigcup_j B_{\eps}(d_j)$. Let $t_1,...,t_N$ be an $\eps/2$-net of $[0,1]$ with $M\leq 2/\eps$.
For $a\in B_{Sh_1}$ we may write 
 \[ a \lel \la D_1-(1-\la)D_2 \pl. \] 
We choose $\|d_j-D_1\|<\eps$, $\|d_j'-D_2\|<\eps$, and $|\la-t_j|<\eps$. This implies
 \begin{align*}
  \|a-t_jd_j-(1-t_j)d_j'\|_1
  & \le    \|\la D_1-t_jd_j\|_1+\|(1-\la)D_2-(1-t_j)d_j'\|_1 \\
  &\le  2|\la-t_j| + \|D_1-d_j\|_1+\|D_2-d_j'\| \kl 4\eps \pl .
   \end{align*}
This implies for $d=n^2$ that 
 \[  (\frac{1}{4\eps})^d \kl 
 N(B_{Sh_1},4\eps)\kl N(D,\eps)^2 \frac{2}{\eps} \pl .\] 
Now, we prove the upper estimate by sending $d\in D$ to $d-\frac{1}{n}$ (see also \cite{Aubrun} and \cite{Sz}). This maps preserves the $\|\cdot \|_1$ distance. In the definition of the covering numbers one may assume that the elements are in the set, $K_1\subset K_2$ implies $N(K_1,\eps)\le N(K_2,\frac{\eps}{2})$.  Now the upper estimate for the Banach space of self-adjoint matrices with trace $0$ of real dimension $d=n^2-1$ implies the assertion. 
\end{proof}

\begin{lemma}\label{vect} Let $P\subset D\subset \mz_n$ be the set of pure states. Then 
 \[ (\frac{c}{\eps})^{2n-2} \kl N(P,\eps) \kl (\frac{C}{\eps})^{2n} \pl .\]  
Here the distance is taken with respect to the trace norm. 
\end{lemma}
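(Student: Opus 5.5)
Both bounds come from identifying the pure states with the unit sphere of $\cz^n$ modulo phases, i.e.\ with $\mathbb{CP}^{n-1}$, a real manifold of dimension $2n-2$. The trace distance and the vector distance compare via
\[
\| |x\ran\lan x| - |y\ran\lan y| \|_1 \lel 2\sqrt{1-|\lan x,y\ran|^2} \kl 2\min_{\theta}\|x-e^{i\theta}y\| .
\]
We then apply \eqref{cov} once from above (in dimension $2n$) and once from below (in dimension $2n-2$, on a chart).

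\emph{Upper bound.} Regard $\cz^n$ as a real Banach space of dimension $2n$ with the Euclidean norm. By \eqref{cov}, the unit ball has an $\eps/4$-net of cardinality at most $(1+8/\eps)^{2n}$. Passing from a net of the ball to one of the sphere $S$ (with net points in $S$) doubles the radius, giving an $\eps/2$-net of $S$ of the same cardinality. The map $x\mapsto |x\ran\lan x|$ is $2$-Lipschitz from the Euclidean norm to the trace norm, since
\[
\| |x\ran\lan x|-|y\ran\lan y|\|_1 \kl \|x-y\|(\|x\|+\|y\|).
\]
So the image is an $\eps$-net of $P$, and $N(P,\eps)\le (1+8/\eps)^{2n}\le (C/\eps)^{2n}$ for $\eps\le 1$.

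\emph{Lower bound.} Use a volume argument on a chart. Consider vectors
\[
x(z) \lel \frac{e_1+z}{\sqrt{1+\|z\|^2}}, \qquad z\in \cz^{n-1}\cong \rz^{2n-2}, \pl z\perp e_1, \pl \|z\|\le \tfrac12 .
\]
We need that the trace distance is bounded below by a constant multiple of $\|z-w\|$. By the identity above it suffices to bound $1-|\lan x(z),x(w)\ran|^2$ from below by $c\|z-w\|^2$. Here
\[
\lan x(z),x(w)\ran \lel \frac{1+\lan z,w\ran}{\sqrt{(1+\|z\|^2)(1+\|w\|^2)}} ,
\]
and
\[
1-|\lan x(z),x(w)\ran|^2 \lel \frac{\|z-w\|^2+\|z\|^2\|w\|^2-|\lan z,w\ran|^2}{(1+\|z\|^2)(1+\|w\|^2)} \gl \frac{16}{25}\|z-w\|^2,
\]
using Cauchy--Schwarz. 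This is the one computation to check carefully, and it is the main obstacle. Hence the map $z\mapsto |x(z)\ran\lan x(z)|$ satisfies $\|\cdot\|_1 \ge c_0\|z-w\|$ on the ball of radius $1/2$.

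Now finish by a volume count. Any trace-norm $\eps$-ball meets the image of this ball in a set whose preimage has diameter at most $2\eps/c_0$. The preimage therefore lies in a Euclidean ball of radius $2\eps/c_0$ in $\rz^{2n-2}$. Comparing volumes with the ball of radius $1/2$ gives
\[
N(P,\eps) \gl \lpara\frac{c_0}{4\eps}\rpara^{2n-2}.
\]
Equivalently, one can invoke the lower bound in \eqref{cov} for the $(2n-2)$-dimensional space after rescaling. This yields $(c/\eps)^{2n-2}\le N(P,\eps)$.
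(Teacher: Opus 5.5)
Your proof is correct and follows the paper's strategy. The upper bound is the paper's argument: $x\mapsto |x\rangle\langle x|$ is $2$-Lipschitz from the unit sphere of $\mathbb{C}^n\cong\mathbb{R}^{2n}$ into the trace norm. For the lower bound, the paper compares with the distance from $\xi$ to the line through $\eta$, which equals $\sqrt{1-|\langle\xi,\eta\rangle|^2}$, i.e.\ exactly half of your trace-distance identity.

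Where you go further is the final step of the lower bound. The paper only appeals to the quotient map $S^{2n-1}\to\mathbb{CP}^{n-1}$ and the real dimension $2n-2$. Your affine chart on the ball of radius $\tfrac12$ has a uniform lower Lipschitz constant $8/5$; your formula for $1-|\langle x(z),x(w)\rangle|^2$ and the bound $\ge \tfrac{16}{25}\|z-w\|^2$ both check out. Together with the volume comparison, this makes the step rigorous with explicit constants, $C=9$ and $c=2/5$. It also shows that $c$ does not depend on $n$, which matters because the lemma is later applied in dimension $2^n$.
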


\begin{proof} For two pure state $\xi$ and $\eta$ we find
 \[ \||\xi\ran\lan\xi|-|\eta\ran\lan \eta| \|_1
 \kl  \||\xi\ran \lan \xi-\eta| \|_1+ \||\xi-\eta\ran\lan \eta| \|_1 
  \kl 2 \|\xi-\eta\| \pl .\] 
Conversely, 
 \[ \||\xi\ran\lan\xi|-|\eta\ran\lan \eta| \|_1
 \geq \||\xi\ran\lan\xi|-|\eta\ran\lan \eta|\xi\ran\lan \eta|\|_1
 \lel \| \xi-\lan \xi,\eta\ran \eta\| \pl .\] 
Hence the results for the complex Hilbert spaces apply. This is an upper bound for the distance in the projective space $\mb{CP}^{n-1}\simeq S^{2n-1}/S^1$. The lower bound follows from the quotient map $S^{2n-1}\rightarrow\mb{CP}^{n-1}$ by the $S^1$-action. Since $\mb{CP}^{n-1}$ has real dimension $2n-2$, we obtain the lower bound.
\end{proof}

We will now the general state a result on the lower bound of the approximation length, a variation of the \cite{jpv}.

\begin{prop} \label{lww} Let $|A|=2^n$ and 
$G\subset U(EA)$ be a group such that 
 \[ N(G,\eps) \kl (1+\frac{a}{\eps})^d \pl .\] 
and $|V|=k\geq 2$. For $\eps\le \eps_0$ the following estimates hold  
 \[ L_{\eps}(G,V) \gl \min(c(\eps)\frac{2^{2n}}{dn  \ln (1+a)},\frac{2^{2n}}{6\ln |V|}) \pl \] 
and 
 \[ l_{\eps}(G,V) \gl \min(c(\eps)\frac{2^n}{dn\ln (1+a)},\frac{2^n}{6\ln |V|}) \pl \]  
\end{prop}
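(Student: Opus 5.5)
This is a counting argument: compare the number of distinguishable outputs of length-$m$ protocols with the covering numbers of \cref{dens} and \cref{vect}. Fix $m$ and assume every density (respectively every pure state) is $\eps$-approximated by some $\Phi_{U_1}^{v(1)}\cdots\Phi_{U_m}^{v(m)}(|0\ran\lan 0|)$. We may assume exactly $m$ steps: fewer steps can be padded by $U=1$, $v$ arbitrary, which acts trivially on $A$. The set of reachable outputs is then covered by at most $|V|^m$ families. Each family is indexed by a word $(v(1),\dots,v(m))\in V^m$ and by $(U_1,\dots,U_m)\in G^m$.

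The key continuity estimate follows from the Stinespring form. For unitaries $U,U'$ and a fixed vector $v$, the diamond distance is controlled by the operator norm:
\[ \|\Phi_U^v-\Phi_{U'}^v\|_\diamond \kl 2\|U-U'\|_\infty . \]
Channels are contractive in trace norm, so a telescoping sum gives
\[ \|\Phi_{U_1}^{v(1)}\cdots\Phi_{U_m}^{v(m)}(\rho)-\Phi_{U_1'}^{v(1)}\cdots\Phi_{U_m'}^{v(m)}(\rho)\|_1 \kl 2\sum_j\|U_j-U_j'\|_\infty . \]
Take an $\delta$-net of $G$ with $\delta=\eps/(2m)$; it has at most $(1+2am/\eps)^d$ points. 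The reachable set is then $\eps$-covered by
\[ |V|^m(1+2am/\eps)^{dm} \]
points. Hence every density is within $2\eps$ of one of these points, and the triangle inequality gives
\[ N(D,2\eps)\kl |V|^m(1+2am/\eps)^{dm}, \]
and the same bound holds for $N(P,2\eps)$.

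Now insert the lower bounds. \cref{dens} gives $N(D,2\eps)\ge(16\eps)^{-(4^n-1)/2}$, so taking logarithms,
\[ \tfrac{4^n-1}{2}\ln\tfrac{1}{16\eps}\kl m\ln|V|+dm\ln(1+2am/\eps). \]
This forces one of two cases. Either $m\ln|V|\ge\frac13\cdot\frac{4^n-1}{2}\ln\frac1{16\eps}$, which for $\eps\le\eps_0$ small yields $m\ge 4^n/(6\ln|V|)$. Or $dm\ln(1+2am/\eps)$ carries at least the remaining fraction of the left-hand side.

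The main obstacle is this second case, because the logarithm depends on $m$. Handle it as follows. If $m\ge 4^n$, there is nothing to prove. Otherwise $m<4^n$, so
\[ \ln(1+2am/\eps)\kl \ln(1+a)+\ln(2/\eps)+n\ln 4 \kl C(\eps)\,n\ln(1+a). \]
This uses $n\ge1$ and needs some care when $a$ is small: we can assume $a\ge1$ after enlarging $a$, and absorb that into $c(\eps)$. The result is $m\ge c(\eps)4^n/(dn\ln(1+a))$.

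The pure-state bound follows identically, using \cref{vect} with exponent $2^{n+1}-2$ in place of $4^n-1$; this gives $2^n$ in both terms.
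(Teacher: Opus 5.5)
Your argument is correct and is essentially the paper's proof. It uses an $\eps/m$-net of $G$, the estimate $\|\Phi_U^v-\Phi_{U'}^v\|_\diamond\le 2\|U-U'\|$ with telescoping, and the count $|V|^m|S|^m$ compared against the lower covering bounds of \cref{dens} and \cref{vect}. The only variation is how you control the $\ln m$ term: you split into $m\ge 4^n$ and $m<4^n$, whereas the paper writes $\ln(1+4Cam/\eps)\le 2\ln m+\ln(1+4Ca/\eps)$ and inverts $m\ln m\gtrsim 4^n/d$ via \cite[Lemma 5.12]{jpv}. Your reduction to $a\ge 1$ cannot be absorbed into $c(\eps)$ uniformly as $a\to 0$. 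However, the paper's proof makes the same implicit assumption that $a$ is bounded below, and this holds in both applications ($a=4$ and $a=C(n+3)$).
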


\begin{proof} Fix $\eps>0$. Let $m=L_{\eps/2}(G, V)$ be the maximal approximation length. For a given density $\rho$ let $(v(1),U_1),...,(v(m),U_m)$ be choices in  the approximation for $\frac{\eps}{2}$. Let $S$ be an $\frac{\eps}{4m}$ net for $G$. Then we note that 
 \[ \|\Phi_{U_j}^{v_j}-\Phi_{U}^{v_j}\|_\diamond
 \kl \|ad_{U_j}-ad_{U}\|_\diamond \kl 2 \|U-U_j\|_{op} \pl .\]
This shows that we find $V_1,...,V_m\in S$ such that  
 \[ \|\Phi_{U_1}^{v(1)}\cdots \Phi_{U_m}^{v(m)}-\Phi_{V_1}^{v(1)}\cdots \Phi_{V_m}^{v(m)}\|_\diamond
  \kl \sum_{j=1}^m \|\Phi_{U_j}^{v(j)}-\Phi_{V_j}^{v(j)}\|_\diamond \kl \frac{\eps}{2} \pl .\]
In other words the set 
 \[ \Delta \lel \{\Phi_{V_1}^{v(1)}\cdots \Phi_{V_m}^{v(m)}(|0\ran\lan 0|^A)| V_1,...,V_m\in S, v(1),...,v(m)\in V\} \] 
is an $\eps$-net for either all densities or all rank one densities. In the first case we deduce that 
 \begin{align*}
  \frac{2^{2n}-1}{2}\ln\frac{1}{8\eps}
  & \kl \ln |\Delta| 
  \kl m (\ln|S|+\ln |V|) \\
  &\kl m \big (d\ln (1+\frac{4Cam}{\eps})+\ln |V|\big) \\
  &\kl 2d m\ln m+ dm \ln (1+\frac{4Ca}{\eps})+ m\ln |V|  \\
  &\kl 3 \max\{2dm \ln m,dm \ln(1+\frac{4Ca}{\eps}),m\ln |V|\} \pl .
  \end{align*}
We assume that $8\eps \le \frac{1}{e}$. Then we deduce that either
\[ 2^{2n}\kl \frac{8}{\ln \frac{1}{8\eps}} 2dm \ln m \pl ,  \mbox{ or } 2^{2n}\kl  
  \frac{\ln(1+\frac{4Ca}{\eps})}{\ln \frac{1}{8 \eps}} 8dm\pl,  \mbox{ or } 2^{2n} 
  \kl 8m \ln |V| \pl .\]
We deduce the assertion from Lemma \cite[Lemma 5.12]{jpv} in this cases. For pure states we use the lower bound $\frac{2^n}{4}\ln \frac{c}{\eps}$ instead. Replacing $\eps$ by $2\eps$ givess the states estimate.
\end{proof} 

It remains to find upper bounds for the groups in consideration. 

\begin{lemma} Let $h$ be the open system magic Hamitonian and $G=\{e^{ith}|t\in \rz\}$. Then 
 \[ N(G,\eps) \kl (1+\frac{4}{\eps})^{2(2^n+1)} \pl. \]
\end{lemma}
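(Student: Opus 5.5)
The plan is to replace the one-parameter group by the torus it sits in, and cover that torus coordinatewise. Since $h$ is self-adjoint on $EA\cong\cz^{2}\ten(\cz^2)^{\ten n}$, it has real dimension $2^{n+1}$. Diagonalize $h=W\diag(\la_1,\dots,\la_{2^{n+1}})W^*$ with $W$ unitary. Then every element of $G$ has the form
\[ e^{ith} \lel W\diag(e^{it\la_1},\dots,e^{it\la_{2^{n+1}}})W^* , \]
so $G\subset W\,\mathbb{T}^{2^{n+1}}\,W^*$, where $\mathbb{T}^{r}$ denotes the diagonal unitaries. Conjugation by $W$ is an isometry for the operator norm, so it suffices to cover $\mathbb{T}^{r}$ with $r=2^{n+1}$. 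I will not try to compute the spectrum of $h$ or its number of distinct eigenvalues. The crude count $r=2^{n+1}\le 2(2^n+1)$ already lands inside the stated exponent, which is what makes the claimed bound reachable without further analysis.

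\textbf{Step 1 (metric reduction).} For diagonal unitaries, $\|\diag(z_j)-\diag(w_j)\|_{\infty}=\max_j|z_j-w_j|$. A product of $\eps$-nets of the unit circle $S^1\subset\cz$, one in each coordinate, is therefore an $\eps$-net of $\mathbb{T}^r$ in operator norm. This gives
\[ N(\mathbb{T}^r,\eps)\kl N(S^1,\eps)^r . \]

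\textbf{Step 2 (covering the circle).} I will show $N(S^1,\eps)\le 1+\frac{4}{\eps}$ using equally spaced points $e^{2\pi i k/M}$. The chord distance satisfies $|e^{i\theta}-e^{i\theta'}|\le|\theta-\theta'|$. Hence every point of $S^1$ lies within chord distance $\pi/M$ of some grid point, and taking $M=\lceil \pi/\eps\rceil$ suffices. Then $M\le 1+\pi/\eps\le 1+4/\eps$. If $\eps\ge 2$, a single point covers $S^1$, so the bound is trivial there.

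\textbf{Step 3 (passing to $G$).} Since $G$ is a subset of $W\mathbb{T}^rW^*$ rather than the whole torus, I need centers lying in $G$. If the covering number is defined with centers in the ambient space, Steps 1--2 give $N(G,\eps)\le(1+4/\eps)^{2^{n+1}}\le(1+4/\eps)^{2(2^n+1)}$ directly, using $1+4/\eps\ge1$. If centers are required in $G$, I will use the monotonicity remark from the proof of \cref{dens}: $K_1\subset K_2$ implies $N(K_1,\eps)\le N(K_2,\eps/2)$. Covering the torus at scale $\eps/2$ with $M=\lceil 2\pi/\eps\rceil$ points per circle, I must check that the per-circle count still fits within the stated base and exponent. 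The slack between $2^{n+1}$ and $2(2^n+1)$, namely two extra factors, is where this is absorbed. This bookkeeping of constants is the only delicate point, and I expect it to be the main obstacle: I will first try the inequality $\lceil 2\pi/\eps\rceil\le 1+4/\eps$ for $\eps$ below a fixed threshold $\eps_0$, as in \cref{lww}, and otherwise accept a constant $C$ in place of $4$, which is harmless for the application in \cref{lww}.
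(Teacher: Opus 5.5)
Your argument is correct, and it takes a different and sharper route than the paper.

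\textbf{The paper's route.} The paper diagonalizes $h$ and places $G$ in the unit ball of the complex diagonal algebra $\ell_\infty^N$. It then passes to scale $\eps/2$ via the remark $N(K_1,\eps)\le N(K_2,\eps/2)$, so that centers may be taken in the set. Finally it applies the volumetric bound $(1+2/\delta)^{\dim_{\mathbb R}}$ with real dimension $2N$. This yields the base $1+4/\eps$ and an exponent equal to twice the complex dimension.

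\textbf{Your route.} You use that the eigenvalues of $e^{ith}$ have modulus exactly one. So $G$ lies in the torus $\mathbb{T}^N$ rather than the full polydisk, and one real parameter per eigenvalue suffices. A product of circle nets then gives $N(G,\eps)\le(1+\pi/\eps)^{N}$ with $N=2^{n+1}$, which halves the exponent.

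This halving matters. The paper writes $\ell_\infty^{2^n+1}$, but $h$ is a $2\times 2$ block matrix with $2^n\times 2^n$ blocks, so it acts on $EA$ of dimension $2^{n+1}$. The paper's route therefore gives exponent $2\cdot 2^{n+1}$, and the stated $2(2^n+1)$ looks like a typo for $2(2^{n+1})$. Under the paper's definition of $N(K,\eps)$, where centers are unrestricted, your torus argument is the one that fits under the literal exponent $2^{n+1}+2$.

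\textbf{One correction to Step 3.} This concerns the branch where centers must lie in $G$.
\begin{itemize}
\item The inequality $\lceil 2\pi/\eps\rceil\le 1+4/\eps$ holds only for $\eps\ge 2\pi-4$. It therefore fails in exactly the small-$\eps$ regime you intended.
\item The two spare factors in the exponent cannot absorb the loss. The deficit $\bigl((1+2\pi/\eps)/(1+4/\eps)\bigr)^{2^{n+1}}$ has base strictly greater than $1$ and grows without bound in $n$; it is close to $(\pi/2)^{2^{n+1}}$ for small $\eps$.
\item The clean fix is the inequality $1+2\pi/\eps\le(1+4/\eps)^2$. It gives $(1+4/\eps)^{2\cdot 2^{n+1}}$ under that convention, which is exactly what the paper's argument yields once the dimension is corrected.
\end{itemize}
Either bound is all that Proposition \ref{lww} needs, since only $d=O(2^n)$ and $a=O(1)$ enter there.
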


\begin{proof}
After diagonalizing $h$, the group $G_h=\lset e^{ith}:t\in \mb R\rset$ is contained in the unit ball of the diagonal algebra $\ell_{\infty}^{2^n+1}$. Hence
 \[ N(G,\eps) \kl N(B_{\infty},\frac{\eps}{2}) \kl (1+\frac{4}{\eps})^{2(2^n+1)} \] 
using the formula for the real dimension of this complex Hilbert space. 
\end{proof}
\begin{rem} If we could control the size of the spectrum of $h$ this estimate could possibly be improved considerably.
\end{rem} 

\begin{cor}
    There are densities with approximation length $\gl c(\eps)\frac{2^n}{n}$ for the open system magic Hamiltonian. 
\end{cor}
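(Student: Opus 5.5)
This corollary should follow by feeding the covering estimate of the preceding lemma into \cref{lww}. The lemma gives $N(G,\eps)\le (1+\frac{4}{\eps})^{2(2^n+1)}$ for $G=\{e^{ith}\}$. This is of the form required in \cref{lww} with $a=4$ and $d=2(2^n+1)$. Here $G$ acts on $EA$ with $E=\cz^2$, so $G$ lives in $U(2^{n+1})$. The diagonalized group sits in the torus of dimension $2^{n+1}$, and the exponent $2(2^n+1)$ is already recorded by the lemma. The set of preparable states is $V=\{\ket0,\ket1\}$, so $|V|=2$.

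The density bound of \cref{lww} then gives
\[ L_\eps(G,V)\ \ge\ \min\Big(c(\eps)\frac{2^{2n}}{2(2^n+1)\,n\ln 5},\ \frac{2^{2n}}{6\ln 2}\Big). \]
Since $2(2^n+1)\le 4\cdot 2^n$, the first term is at least $c'(\eps)\,\frac{2^n}{n}$. The second term is exponentially larger, so the minimum is the first term. Because $L_\eps(G,V)$ is a supremum over densities $\rho$, this produces a density $\rho$ with $L_\eps(\rho|G,V)\ge c(\eps)\frac{2^n}{n}$, after renaming the constant. We need $\eps\le\eps_0$, as in the proposition. For larger $\eps$ the statement is weaker, or we simply absorb this into $c(\eps)$.

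The main point to check is that the length definition in \cref{lww} matches the one used here, that is, sequences $\Phi^{v(1)}_{U_1}\cdots\Phi^{v(m)}_{U_m}$ applied to $|0\ran\lan0|$. It does, since the proposition is stated for arbitrary groups $G\subset U(EA)$ and finite $V$. The only subtlety is the dependence on dimension. The density covering lower bound in \lref{dens} has exponent $\frac{2^{2n}-1}{2}$. Dividing it by the group dimension $d\approx 2^{n+1}$ costs one factor of $2^n$, which leaves $2^n/n$ rather than $2^{2n}$. This loss is exactly why the estimate is weaker than in model (2), in line with the remark that controlling the spectrum of $h$ could improve it. The estimate is not sensitive to the extra logarithmic factor: \cref{lww} carries only an $n$ in the denominator, coming from $\ln m$ via the lemma of \cite{jpv}, and since $d\sim 2^n$ this yields $2^n/n$ as claimed.
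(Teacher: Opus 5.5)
Your proposal is correct and is the argument the paper intends: the corollary is stated without a separate proof, as the combination of the covering bound for $G=\{e^{ith}\}$ with the density estimate of \cref{lww} ($a=4$, $d=O(2^n)$, $|V|=2$), where the first term of the minimum gives $c(\eps)2^n/n$. Two small remarks: since $h$ acts on $EA\cong\cz^{2^{n+1}}$, the real dimension in that lemma should be $2\cdot 2^{n+1}$ rather than $2(2^n+1)$, which only changes constants; and for $\eps>\eps_0$ the statement is not ``weaker'' (lengths are nonincreasing in $\eps$, so a lower bound at larger $\eps$ is the stronger claim), it is just taken with $c(\eps)=0$.
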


We will now address state preparation for the Spin example  of section 3. The group in question is a product group $G\subseteq G_1\times G_2\times G_3$. Clearly 
 \[ N(G_1\times G_2\times G_3,\eps) \kl N(G_1,\frac{\eps}{3})  N(G_2,\frac{\eps}{3}) N(G_3,\frac{\eps}{3}) \]
follows from 
 \[ \| u_1v_1w_1-u_2v_2w_2\|  \kl  \|(u_1-u_2)v_1w_1\| + \|u_1(v_1-v_2)w_1\| + \|u_1v_1(w_1-w_2)\| \]
and the fact that unitaries are contractions. For the Spin group representation as Clifford matrices we recall from Corollary 5.8 of \cite{jpv} that
 \[ N(Spin(m),\eps) \kl (\frac{Cm}{\eps})^{2m(2m-1)}  \pl . \]
This means we can apply Proposition \ref{lww} for
 \[  a=C(n+3)\le 2Cn \quad \mbox{and} \quad  d=2(2m(2m-1))+1\le 8\times 16 n^2 \pl .\]  
\begin{theorem} Let $G=G(S_{tot})$ be  the group from \cref{spinex}. Then $C_{V}(G)$ acts transitively on the space of densities and contains $ad_{U(A)}$. There are densities with 
  \[ L_{\eps}(\rho|G,V) \gl c(\eps) \frac{2^{2n}}{n^3\ln(1+n)}  \]
and pure states such that 
 \[  L_{\eps}(\ket\eta\bra\eta|G,V) \gl c(\eps) \frac{2^{n}}{n^3\ln(1+n)}\pl . 
 \]
\end{theorem}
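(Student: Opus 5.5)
The theorem bundles two claims, and I would prove them separately.

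The first claim is that $C_V(G)$ is transitive on densities and contains $\Ad_{U(A)}$. This is exactly the content of the unnamed lemma in \cref{spinex}, taken with $V=\{|\xi\xi\xi\ran,|000\ran\}$, so I would simply cite it. Preparing $|\xi\xi\xi\ran$ and slicing via \cref{embed} puts $\{X_1,Y_1,Z_1Z_2\}\cup\mathcal X(2)\cup\mathcal Y(2)$ into $\Ham_{C_V(G)}$. Then \cref{lieham} together with the standard commutator trick gives $\mathfrak{su}(A)$. The $|000\ran$ preparation, through the second-order reduction of \cref{difflemma}, yields $L_{|1\ran\lan 0|}$. \cref{ccc} then closes the argument.

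The second claim, the lower bounds, would follow from \cref{lww}. For that I need a covering estimate of the form $N(G,\eps)\le(1+a/\eps)^d$ with $a=O(n)$ and $d=O(n^2)$. I would proceed in three steps.

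First, \cref{sgroup} gives $G\subset\mathbb T\times Spin(n+3)\times Spin(n+3)$. The product covering estimate stated just before the theorem gives
\[ N(G,\eps)\le N(\mathbb T,\tfrac{\eps}{3})\,N(Spin(n+3),\tfrac{\eps}{3})^2 .\]

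Second, I would use $N(\mathbb T,\eps)\le C/\eps$ and Corollary 5.8 of \cite{jpv}, namely $N(Spin(m),\eps)\le (Cm/\eps)^{2m(2m-1)}$ with $m=n+3$. This produces an estimate of the required form with $a\le C'n$ and $d\le C''n^2$, after absorbing constants, since $(Cm/\eps)\le (1+Cm/\eps)$.

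Third, I would plug into \cref{lww} with $|V|=2$. This gives
\[ L_\eps(G,V)\ge\min\Big(c(\eps)\frac{2^{2n}}{n^3\ln(1+n)},\,\frac{2^{2n}}{6\ln 2}\Big), \]
and the first term is the minimum for large $n$. Finitely many small $n$ are absorbed into $c(\eps)$. The pure-state bound with $2^n$ is obtained identically from the $l_\eps$ estimate of \cref{lww}. Since the supremum is attained up to $\eps$, there exist explicit $\rho$ and $\eta$ achieving the bounds.

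I expect the main obstacle to be the bookkeeping that the covering bound really has the form required by \cref{lww}. In particular the $Spin$ groups sit inside $U(EA)$ via a Clifford representation, and the operator-norm metric used there must match the one in Corollary 5.8 of \cite{jpv}. The inclusion also only holds up to the $\mathbb T$ factor and the odd-$n$ embedding in \cref{odd}. I would handle these by noting that restricting the representation does not increase distances, and that $K_1\subset K_2$ implies $N(K_1,\eps)\le N(K_2,\eps/2)$. The remaining step is the elementary inversion $m\ln m\gtrsim 2^{2n}/d$, which gives $m\gtrsim 2^{2n}/(d\,n)$ (this is \cite[Lemma 5.12]{jpv}). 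That inversion is where the extra factor of $n$, and hence $n^3$ overall, enters.
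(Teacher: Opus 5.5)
Your proposal is correct and follows essentially the same route as the paper. Transitivity and $\Ad_{U(A)}\subseteq C_V(G)$ come from the lemma of \cref{spinex}, and the lower bounds come from feeding the product covering estimate for $\mathbb{T}\times Spin(n+3)\times Spin(n+3)$ (via \cref{sgroup} and Corollary 5.8 of \cite{jpv}, giving $a=O(n)$, $d=O(n^2)$) into \cref{lww} with $|V|=2$. Your remark that $*$-homomorphisms are contractive, so the Clifford-matrix covering bound transfers to the realization of $G$ inside $U(EA)$, fills in a point the paper leaves implicit.
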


\section{Channel Universality}\label{channelmagic}
In the previous sections we were considered with \textit{state magic}, which gets us a class of channels transitive on the space of densities $\mc D(A)$. In this section we will be considering \textit{channel magic.} In this case we will have a single Hamiltonian $H$ which can generate a channel class which contains all channels $\mc D(A) \to \mc D(A)$. 

\subsection{From State Magic to Channel Magic}

To generate all channels we will need to use Stinespring dilation:
\begin{fact}[Stinespring dilation \cite{stinespring}]
    Let $A \cong (\mb C^2)^{\otimes n}$ and $F \cong (\mb C^2)^{\otimes 2n}.$ For any channel \( \Phi:\mc D(A) \to \mc D(A)\) there is a unitary $U \in SU(FA)$ so that \[\Phi(\rho) = \tr_F \lpara U\lpara \ket{0}_F \bra{0}_F \otimes \rho \rpara U^* \rpara.\]
\end{fact}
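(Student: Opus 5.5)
The plan is the standard Kraus-to-isometry construction, followed by extending the isometry to a unitary of the larger space. Throughout, write $d = 2^n = \dim A$, so $\dim F = 2^{2n} = d^2$.

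First, I will obtain a Kraus decomposition with few terms. Let $\Phi$ be a channel on $\mc L(A)$ and take its Choi matrix
\[ J(\Phi) = \sum_{i,j} \ket{i}\bra{j} \otimes \Phi(\ket{i}\bra{j}) \in \mc L(A\otimes A). \]
By complete positivity $J(\Phi) \ge 0$. Its rank $r$ therefore satisfies $r \le d^2$. A spectral decomposition $J(\Phi) = \sum_{k=1}^{r} \ket{w_k}\bra{w_k}$, with each $w_k$ reshaped into an operator $K_k \in \mc L(A)$, gives $\Phi(\rho) = \sum_{k=1}^r K_k \rho K_k^*$. Trace preservation gives $\sum_k K_k^* K_k = I$. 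Padding with zero operators, I may index the Kraus operators by an orthonormal basis $\{\ket{k}\}_{k=1}^{d^2}$ of $F$. The only place I need care is this step: the bound $r \le d^2 = \dim F$ is exactly what makes $F$ with $2n$ qubits large enough.

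Second, I will build an isometry and extend it. Define
\[ W : A \to F\otimes A, \qquad W\psi = \sum_k \ket{k}\otimes K_k \psi . \]
Then $W^*W = \sum_k K_k^* K_k = I$, so $W$ is an isometry. Define $U_0$ on the subspace $\ket{0}_F \otimes A$ by $U_0(\ket{0}\otimes\psi) = W\psi$. This is an isometry from a $d$-dimensional subspace of $FA$ into $FA$. Extend an orthonormal basis of its image to an orthonormal basis of $FA$, and map the orthogonal complement of $\ket{0}\otimes A$ onto the orthogonal complement of $W(A)$. This produces a unitary $U \in U(FA)$. Since $\dim FA = d^3 > d$, that complement is nonzero. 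Composing with a diagonal phase on one basis vector of the complement therefore adjusts $\det U$ to $1$ without changing $U$ on $\ket{0}\otimes A$. Hence we may take $U \in SU(FA)$.

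Finally, I will verify the dilation identity:
\[ \tr_F\bigl(U(\ket{0}\bra{0}\otimes\rho)U^*\bigr) = \tr_F(W\rho W^*) = \sum_{k,l} \braket{l|k}\, K_k\rho K_l^* = \sum_k K_k \rho K_k^* = \Phi(\rho). \]
This is a routine computation. The only genuinely dimension-sensitive point is the Choi-rank bound in the first step; everything else is linear algebra.
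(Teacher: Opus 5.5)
Your proof is correct. The paper gives no proof of this statement: it records it as a Fact and cites Stinespring. Stinespring's theorem in its original form is abstract and operator-algebraic. It writes a completely positive map as $x \mapsto V^*\pi(x)V$ for a $*$-representation $\pi$. Getting the exact form used here from it still needs two further observations:
\begin{itemize}
\item the minimal dilation space has dimension at most $d^2$, so $2n$ ancilla qubits suffice;
\item the resulting isometry can be completed to a unitary of determinant one.
\end{itemize}
Your Choi-matrix/Kraus route supplies both directly, using only finite-dimensional linear algebra. The bound $\operatorname{rank} J(\Phi) \le d^2$ gives the ancilla size. Since $\dim FA = d^3 > d$, there is room to fix the determinant with a phase acting off $\ket{0}_F \otimes A$. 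A global phase $e^{-i\phi/d^3}$ would also work, since it leaves $\Ad_U$ unchanged. What your approach buys is a self-contained proof of exactly the dimension-specific, $SU(FA)$-valued statement the paper takes for granted. The citation alone does not spell out either the ancilla count $|F| = |A|^2$ or the determinant normalization.
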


Throughout the rest of this section we will use $F$ to represent the Stinespring ancilla. We will continue to use $E$ to represent the usual preparation environment, and will combine the two.

Our general strategy will be to generate all unitaries in $SU(FA)$ so that we can use Stinespring dilation to get all channels $\Phi: \mc D(A) \to \mc D(A).$ To generate these unitaries we will use an extra environment $E$. We recall the definition of channel class, but now using an extra environment $E$ on top of the space $FA$.

\begin{definition}[Channel Class]
\mbox{}
\begin{enumerate}
    \item Take a system $A \cong (\mb C^2)^{\otimes n}$, Stinespring ancilla $F \cong (\mb C^2)^{\otimes 2n}$, and environment $E \cong (\mb C^2)^{\otimes m}.$ Let $V \subseteq E$ be a set of unit vectors in $E$. Furthermore let $G \subseteq U(EFA)$ be a group of untaries on the total space $EFA$. For any $\ket v \in V$ and $U \in G_{EFA}$ we can write the channel \[\Psi^v_U :\rho \mapsto \tr_E \lpara U\lpara \ket{v}\bra{v} \otimes \rho \rpara U^*\rpara: \mc D(FA) \to \mc D(FA).\]
    We define the channel class $\mc C^{FA}_V(G)$ as the smallest set of channels which contains the set $\{ \Psi^v_U: v \in V, \, U \in G\}$, is topologically closed under the diamond norm, and closed under composition.

    \item The Lindbladian set is defined as: \[\Lin_{\mc C_V^{FA}(G)} := \left \{ L:e^{tL} \in \mc C_V^{FA}(G) \right\}.\] Note that here we have $L: \mc L(FA) \to \mc L(FA).$

    \item The Hamiltonian set is defined as: 
    \[\Ham_{\mc C_V^{FA}}(G):= \left\{ H \in \mc L_{sa}(FA): H = H^*, \, i \ad_H \in \Lin_{\mc C_V^{FA}(G)}\right\}.\]

    \item The unitary set is defined as: 
    \[\mc U_{\mc C_V^{FA}(G)} := \left\{U \in U(FA): \Ad_{U} \in \mc C^{FA}_V(G)\right \}\]
\end{enumerate}
\end{definition}

If the set of Hamiltonians is the Lie algebra $\mf{su}(FA)$ then we can generate all channels $\mc D(A) \to \mc D(A)$:

\begin{lemma}\label{chlemma}
    Take a system $A \cong (\mb C^2)^{\otimes n}$, Stinespring ancilla $F \cong (\mb C^2)^{\otimes 2n}$, and environment $E \cong (\mb C^2)^{\otimes m}.$ Let $V \subseteq E$ be a set of vectors in $E$. If $G \subseteq U(EFA)$ is such that $\mf{su}(FA) \subseteq i\Ham_{\mc C_V^{FA}(G)}$, then we can write any channel $\Phi: \mc D(A) \to \mc D(A)$ as \[ \Phi(\rho) = \tr_F \lpara \Psi(\ket{0}_F\bra{0}_F \otimes \rho)\rpara, \quad \Psi \in \mc C_V^{FA}(G). \] 
\end{lemma}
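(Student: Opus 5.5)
The plan is to combine the Stinespring dilation with the closure properties of $\mc C_V^{FA}(G)$. Fix a channel $\Phi:\mc D(A)\to\mc D(A)$. The Stinespring fact recalled above gives a unitary $U\in SU(FA)$ with $\Phi(\rho)=\tr_F\lpara U(\ket{0}_F\bra{0}_F\otimes\rho)U^*\rpara$. It therefore suffices to show that $\Ad_U\in\mc C_V^{FA}(G)$, i.e.\ $U\in\mc U_{\mc C_V^{FA}(G)}$. We then take $\Psi=\Ad_U$.

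To produce $\Ad_U$, I use the hypothesis $\mf{su}(FA)\subseteq i\Ham_{\mc C_V^{FA}(G)}$. For every $X\in\mf{su}(FA)$ we have $X=iH$ with $H$ self-adjoint and traceless, and $H$ lies in $\Ham$. Since $\mf{su}(FA)$ is a real vector space, $-H$ also lies in $\Ham$. Hence $\Ad_{e^{itH}}=e^{t\,i\ad_H}\in\mc C_V^{FA}(G)$ for all real $t$, not only for $t\ge 0$. Because $SU(FA)$ is connected and compact, the exponential map is surjective, so $U=e^{iH}$ for some traceless self-adjoint $H$. Taking $t=1$ gives $\Ad_U\in\mc C_V^{FA}(G)$ directly.

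If one prefers not to invoke surjectivity of the exponential map, there is an alternative. Every element of $SU(FA)$ is a finite product of exponentials $e^{iH_j}$ with $iH_j\in\mf{su}(FA)$, and the channel class is closed under composition, so $\Ad_U=\prod_j\Ad_{e^{iH_j}}$ lies in the class. This is the analogue of the Remark after \cref{lieham}, stating that $\mc U_{\mc C_V(G)}$ contains the Lie group generated by $\Ham_{\mc C_V(G)}$, transplanted to the $FA$ setting.

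Finally, $\tr_F\lpara\Ad_U(\ket 0\bra 0\otimes\rho)\rpara=\Phi(\rho)$ by construction, which proves the lemma. The only point requiring care is the one just addressed: the definition of $\Lin$ only asserts $e^{tL}$ in the class for $t\ge 0$. Reaching $\Ad_U$ for arbitrary $U$ therefore needs either the symmetry $\pm H\in\Ham$ coming from the linear structure of $\mf{su}(FA)$, or the observation that $e^{-iH}=e^{i(2\pi k-H')}$ can be written through positive times. I take the first route, since the hypothesis gives the whole linear space. I also note that a global phase of $U$ is irrelevant for $\Ad_U$, so working in $SU$ rather than $U$ loses nothing.
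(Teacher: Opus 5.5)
Your proposal is correct and follows the paper's route. The paper just says the lemma follows from Stinespring dilation, and you supply the implicit step: $U=e^{iH}$ with $iH\in\mf{su}(FA)$, so $\Ad_U=e^{i\ad_H}\in\mc C_V^{FA}(G)$ (the $\pm H$ symmetry is not actually needed, since this uses only the positive time $t=1$).
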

\begin{proof}
    This proposition follows directly from Stinespring dilation.
\end{proof}

Now we can show that a single Hamiltonian suffices to generate all channels!

\begin{theorem}[Channel Magic] \label{chmagic}
    Take a system $A \cong (\mb C^2)^{\otimes n}$, Stinespring ancilla $F \cong (\mb C^2)^{\otimes 2n}$, environment $E \cong (\mb C^2)$ and preparable vectors $V = \{ \ket{0}_E, \ket{1}_E \}.$ Take the Hamiltonian $H \in \mc L_{sa}(EFA)$ written as \[H = \begin{pmatrix} \alpha & \gamma \\ \gamma^* & \beta \end{pmatrix}\] where $\alpha,\beta,\gamma \in \mc L(FA).$ 
    If $\mf{g}(\{\alpha, \beta\}) = \mf{su}(FA)$ then we can write any channel $\Phi: \mc D(A) \to \mc D(A)$ as \[ \Phi(\rho) = \tr_F \lpara \Psi(\ket{0}_F\bra{0}_F \otimes \rho)\rpara, \quad \Psi \in \mc C_V^{FA}(G_H). \]  
\end{theorem}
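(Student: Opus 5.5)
The plan is to reduce \cref{chmagic} to \cref{chlemma}, with the whole space $FA$ playing the role that $A$ played in \cref{abcmagic}. So the task is to show $\mf{su}(FA) \subseteq i\Ham_{\mc C_V^{FA}(G_H)}$ for the one-parameter group $G_H=\{e^{itH}: t\in\mb R\}\subseteq U(EFA)$. Once that inclusion holds, \cref{chlemma} (Stinespring dilation with $|F|=|A|^2$) gives every channel $\Phi$ as $\tr_F \Psi(\ket 0_F\bra 0_F\otimes \cdot)$ with $\Psi\in\mc C_V^{FA}(G_H)$. Note that the dissipative part (non-normality of $\gamma$) is not needed here, because we only need unitary control on $FA$.

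The first step extracts $\alpha$ and $\beta$ as Hamiltonians. I compute the right derivative at $t=0$ of the two paths
\[\Psi_t^{0}=\tr_E\bigl(e^{itH}(\ket0\bra0\otimes\cdot)e^{-itH}\bigr),\qquad \Psi_t^{1}=\tr_E\bigl(e^{itH}(\ket1\bra1\otimes\cdot)e^{-itH}\bigr).\]
As in the proof of \cref{embed}, the first-order term is $i\tr_E[H,\ket v\bra v\otimes\rho]$. This equals $i[\alpha,\rho]$ for $v=0$ and $i[\beta,\rho]$ for $v=1$, since the off-diagonal blocks $\gamma,\gamma^*$ drop out after the partial trace. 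So $\Psi_t^{0}=\Ad_{e^{it\alpha}}+O(t^2)$ and $\Psi_t^{1}=\Ad_{e^{it\beta}}+O(t^2)$, which is exactly \cref{channel_1} with $FA$ in place of $A$. The key lemma \cref{difflemma}, whose proof is verbatim for $\mc C_V^{FA}$ (closedness and compositions are all it uses), then gives $\alpha,\beta\in\Ham_{\mc C_V^{FA}(G_H)}$.

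The second step handles signs and brackets. \Cref{lieham} needs $\pm\alpha,\pm\beta$, but $G_H$ contains $e^{-itH}$ as well, and the same computation with $t\mapsto -t$ yields $-\alpha,-\beta$. Iterating \cref{lieham}, every iterated commutator (times the appropriate power of $i$) of $\alpha,\beta$ lies in $\Ham$ with both signs.

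To close under real linear combinations, I use Trotter products: $e^{t\,i\ad_{H_1+H_2}}=\lim_n (e^{\frac tn i\ad_{H_1}}e^{\frac tn i\ad_{H_2}})^n$, together with the rescaling $e^{s\,i\ad_{cH}}=e^{(cs)\,i\ad_H}$ for $c>0$. Both limits stay in the class because $\mc C_V^{FA}$ is closed and composition-closed. Hence the real span of the Lie words is contained in $\Ham$, and therefore $i\Ham\supseteq\mf g(\{\alpha,\beta\})=\mf{su}(FA)$.

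The main obstacle is this last closure step. \Cref{lieham} only says that $\Ham$ is a Lie algebra \emph{if} it is a vector space, so the Trotter argument must be made explicit. One should also check that \cref{difflemma} and \cref{lieham} transfer unchanged to the enlarged system $FA$. Everything else is a direct transcription of the proof of \cref{abcmagic}, followed by \cref{chlemma}.
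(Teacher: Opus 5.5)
Your proposal is correct and follows the paper's own proof. You extract $\alpha$ and $\beta$ as Hamiltonians from the first-order expansion of the two preparation channels via \cref{difflemma}, generate $\mf{su}(FA)$ with \cref{lieham}, and finish with \cref{chlemma} and Stinespring dilation; your explicit treatment of the signs (using $e^{-itH}\in G_H$) and of real linear combinations (via the Trotter product formula) fills in steps the paper leaves implicit, and you are right that non-normality of $\gamma$ plays no role here.
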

\begin{proof}
    By the first-order argument in the proof of \cref{abcmagic}, $\alpha,\beta\in\Ham_{C_V^{FA}(G_H)}$, where $G_H=\lset e^{itH}:t\in\mb R \rset\subseteq U(EFA)$. Since $\mf g(\lset \alpha, \beta\rset)=\su(FA)$, \cref{lieham} gives $\su(FA)\subseteq i\Ham_{\mc C_V^{FA}(G_H)}$. Hence $\Ad_{U(FA)}\subseteq \mc C_V^{FA}(G_H)$. The conclusion now follows from \cref{chlemma} and Stinespring dilation.
\end{proof}
\begin{remark}\label{approxremark}
    We define two channels, $\Psi^{\ket{0}}_t, \Psi^{\ket{1}}_t: \mc D(FA) \to \mc D(FA)$, by
    \begin{align*}
        \Psi^{\ket{0}}_t(\rho) &= \tr_E \lpara e^{iHt}\lpara \ket{0}_E\bra{0}_E \otimes \rho \rpara e^{-iHt}\rpara\\
        \Psi^{\ket{1}}_t(\rho) &=  \tr_E \lpara e^{iHt} \lpara \ket{1}_E\bra{1}_E \otimes \rho \rpara e^{-iHt}\rpara
    \end{align*}
    Recall that any channel $\Psi \in \mc C_V^{FA}(G_H)$ can be approximated in the sense that $\forall \eps> 0, \, \exists k \in \mb N$, such that \[ \norm{\Psi - \Psi^{\ket{i_1}}_{t_1} \Psi^{\ket{i_2}}_{t_2}\dots \Psi^{\ket{i_k}}_{t_k}}_{\diamond}  \leq \eps. \]
\end{remark}
Combining \cref{chmagic} and \cref{approxremark}, the protocol has a particularly simple form. The Stinespring ancilla $F$ is initialized once in $\ket0_F$, while before each application of $e^{itH}$, the preparation environment is initialized in either $\ket0_E$ or $\ket1_E$. After the resulting evolution on $FA$, the ancilla $F$ is traced out. These operations suffice to approximate every channel on $A$. One Hamiltonian suffices!

\begin{example}
    Let $A=\mb C^2$, $F=\mb C^4$, $E = \mb C^2$.
    Take, as in \cref{anymagic}
    $$p = \sum_{k=1}^{8} k \ket{k}\bra{k} - \frac{9}{2}I, \qquad
        q = \sum_{k \neq j}^{k,j \leq 8} \ket{k} \bra{j}, \quad\mbox{and}\quad
        a = \ket 0 \bra 1$$
    where $p,q,a \in \mc L(FA).$
Now if we write a Hamiltonian $H \in \mc L(EFA)$ as \[H = \begin{pmatrix} p & a \\ a^* & q \end{pmatrix}\] then by \cref{chmagic} we have that any channel $\Phi: \mc D(A) \to \mc D(A)$ can be written as \[ \Phi(\rho) = \tr_F \lpara \Psi(\ket{0}_F\bra{0}_F \otimes \rho)\rpara, \quad \Psi \in \mc C_V^{FA}(G_H).\]
\end{example}
 
 \subsection{Cost Estimates for Channel Magic} Just as state magic can be a costly operation, so can the approximation of channels. If we want to measure the cost of this approximation, we will need a length function on the channel class. To define a length function as in \cref{lowerbounds} we need to create a channel class that is topologically closed and closed under compositions:
 
\begin{definition}
     Take a system $A \cong (\mb C^2)^{\otimes n}$, Stinespring ancilla $F \cong (\mb C^2)^{\otimes 2n}$ and environment $E \cong (\mb C^2)$. We take a set of preparable vectors $V \subseteq E$ and $W \subseteq F.$ We also fix a group of untaries $G \subseteq U(EFA).$ To any $\Psi \in \mc C_V^{FA}(G)$ and $w \in W \subseteq F$ we can define the channel $\Phi_{\Psi}^w: \mc D(A) \to \mc D(A)$ by \[\Phi_{\Psi}^w(\rho) = \tr_F\lpara \Psi( \ket w \bra w \otimes \rho) \rpara .\]
     
     We define the extended channel class $\widehat{\mc C}_{W,V}^A(G)$ as the smallest set of channels $\mc D(A) \to \mc D(A)$ which contains the set $\{ \Phi^w_\Psi: w \in W, \, \Psi \in \mc C_V^{FA}(G) \},$ is topologically closed and closed under composition.
 \end{definition}
 \begin{remark}
     We have that the unitaries in $U(FA)$ form a special case of our extended channel class as:
     \[\mc C^A_W\lpara\mc U_{\mc C^{FA}_V(G)}\rpara \subseteq \widehat{\mc C}^A_{V,W}.\] Moreover if $\mf{su}(FA) \subseteq i\Ham_{\mc C^{FA}_V(G)}$ then $\mc C^A_W\lpara\mc U_{\mc C^{FA}_V(G)}\rpara$ contains all channels $\Phi: \mc D(A) \to \mc D(A)$ by \cref{chlemma}.
 \end{remark}
\begin{remark}
    We see that any $\Phi \in \widehat{\mc C}_{V,W}^A(G)$ can be approximated by repeated application of one of the following operations:
    \begin{enumerate} 
 \item[i)] Prepare $|0 \ran \lan  0|^B\ten \rho$; 
 \item[ii)] Apply a unitary $U\in U(BA)$;
 \item[iii)] Apply the channel $\psi_{\xi}(\rho)  \lel |\xi\ran\lan \xi|\ten tr_E(\rho)$ for $\xi \in E$
 \item[ iv)] Apply the channel  $\psi_{\xi}(\rho)  \lel |\xi\ran\lan \xi|\ten tr_F(\rho)$ for $\xi\in F$. 
  \item[v)] Apply the channel $\Psi(\rho)= |0\ran\lan 0|\ten tr_B(\rho)$.  
 \item[vi)] Trace out $tr_{B}\ten id$. 
 \end{enumerate} 
\end{remark}

\begin{definition}
    For a channel $\Phi:\mc D(A)\rightarrow\mc D(A)$, define its $\eps$-approximation length by 
    $$L_\eps(\Phi|G, V, W):= \inf\lset m:\norm{\Phi-\Phi_{\bf U}^{\omega, \bf v}}_\diamond<\eps\rset$$
    where the infimum is over $\omega\in W$, $v_1, \dots, v_m\in V$, and $U_1, \dots, U_m\in G$. Let the worst case approximation length be 
    $$L_\eps^{ch}(G, V, W)=\sup_{\Phi\in\cha(A)}L_\eps(\Phi|G, V, W)$$
\end{definition}

\begin{lemma}\label{cov} Let $\cha(d)=\lset \Phi:\mc D(A)\rightarrow \mc D(A): A\simeq \mb C^d, \Phi \text{ is CPTP}\rset$ be the set of all channels on a Hilbert space of dimension $d$. There exist universal constants $c, C, \eps_0>0$ such that for $0<\eps<\eps_0$,
$$\lpara\frac{C}{\eps}\rpara^{cd^4}\leq N(\cha(d), \eps)\leq \lpara1+\frac{4}{\eps}\rpara^{d^4-d^2}$$
\end{lemma}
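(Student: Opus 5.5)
We prove the two bounds separately. In both we view $\cha(d)$ as a convex subset of the real vector space of Hermiticity-preserving maps, equipped with the diamond norm. Through the Choi isomorphism $\Phi\mapsto J(\Phi)=\sum_{ij}|i\ran\lan j|\ten\Phi(|i\ran\lan j|)$, this space is identified with the self-adjoint matrices on $\cz^d\ten\cz^d$. Under this identification, channels correspond to positive $J$ with $\tr_2 J = 1$, where $\tr_2$ is the partial trace over the output factor. This affine constraint fixes $d^2$ real parameters. The channel set is therefore a convex body of real dimension $d^4-d^2$ inside an affine subspace. For comparing norms we use the standard inequalities
\[ \tfrac1d\|J(\Phi)-J(\Psi)\|_1\kl \|\Phi-\Psi\|_\diamond\kl \|J(\Phi)-J(\Psi)\|_1 . \]
The right-hand inequality is not always sharp in the form stated, but it holds up to a factor of $d$, and that factor will be absorbed into the constants.

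\textbf{Upper bound.} We follow the proof of \cref{dens}. Translate by the completely depolarizing channel $\Phi_0(\rho)=\tr(\rho)\frac{I}{d}$, so that $\cha(d)-\Phi_0$ lies in the linear space $X$ of Hermiticity-preserving maps $T$ with $\tr\circ T=0$. This space has real dimension $d^4-d^2$. Every channel has diamond norm $1$, so $\cha(d)-\Phi_0\subset 2B_X$, where $B_X$ is the unit ball of $X$ for the diamond norm. By \eqref{cov},
\[ N(2B_X,\eps/2)=N(B_X,\eps/4)\le (1+8/\eps)^{d^4-d^2}. \]
To make the centers lie in $\cha(d)$ itself, we use the monotonicity trick from \cref{dens}: replacing $\eps$ by $\eps/2$ accounts for this. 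This gives a bound of the form $(1+C/\eps)^{d^4-d^2}$. Getting exactly $(1+4/\eps)$ needs the sharper containment $\cha(d)-\Phi_0\subset B_X$ in a suitable norm, or a rescaling of $\eps$. If the precise constant $4$ cannot be reached, we allow the constant to adjust.

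\textbf{Lower bound.} This is a volumetric argument. First we exhibit a ball of radius $r\sim 1/d^{k}$, taken in the ambient norm, around $\Phi_0$ that lies inside $\cha(d)$. Concretely, take $\Phi=\Phi_0+T$ with $T\in X$. Its Choi matrix is $J(\Phi_0)+J(T)=\frac{1}{d}I+J(T)$. This is positive as soon as
\[ \|J(T)\|_\infty\le \tfrac1d, \]
and for that it suffices that $\|T\|_\diamond\le 1/d$, using $\|J(T)\|_\infty\le\|J(T)\|_1\le d\|T\|_\diamond$. Trace preservation holds automatically because $T\in X$. Hence $\Phi_0+\tfrac{1}{d^2}B_X\subset\cha(d)$.

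Now compare volumes in $X$, of dimension $D=d^4-d^2$. An $\eps$-cover of $\cha(d)$ also covers the ball $\tfrac1{d^2}B_X$ up to $2\eps$, which gives
\[ N(\cha(d),\eps)\ge N\!\left(B_X,2\eps d^2\right)\ge (2\eps d^2)^{-D}. \]
This is the main obstacle. The factor $d^2$ in the base does not fit the claimed form $(C/\eps)^{cd^4}$ with constants independent of $d$, because $(1/(\eps d^2))^{D}$ is small when $\eps$ is fixed and $d$ is large.

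To remove the dimensional loss we instead use a well-separated packing. A Gilbert–Varshamov-type argument over random unitary channels $\Ad_U$ does not give enough dimension. Random Stinespring isometries $V:\cz^d\to\cz^d\ten\cz^d$ do: the isometries form a manifold of real dimension $\sim 2d^4$, modulo the gauge $U(d)$ on the environment. We would then show that the channels $\tr_{env}(V\cdot V^*)$ attached to a $\delta$-separated family of isometries modulo gauge remain $c\delta$-separated in diamond norm, via a local injectivity estimate uniform in $d$. This is exactly the step that parallels the quotient argument $S^{2n-1}\to\mb{CP}^{n-1}$ in \cref{vect}, and it is where the effort goes. Alternatively, one can test against a maximally entangled input. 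The diamond norm dominates $\|J(\Phi)-J(\Psi)\|_1/d$, and this normalized trace norm on Choi states of dimension $d^2$ admits packings of size $(c/\eps)^{cd^4}$ by \cref{dens}, applied on a subset of Choi states of the form $\frac1d(I/d\ten\cdot)$ perturbed within a normalized trace-norm ball of constant radius. Verifying that such a constant-radius normalized ball stays within the channel set, for instance by restricting to mixtures of a fixed channel with arbitrary ones, $\Phi=\tfrac12\Phi_0+\tfrac12\Psi$, finishes the argument with $c$ and $C$ independent of $d$.
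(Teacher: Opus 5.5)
Your upper bound is correct, and $d^4-d^2$ is the right dimension. The paper's covering numbers do not require the centers to lie in the set, so $\cha(d)-\Phi_0\subset 2B_X$ directly gives $N(\cha(d),\eps)\le N(B_X,\eps/2)\le(1+4/\eps)^{d^4-d^2}$. No constant needs adjusting.

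\textbf{The lower bound is not established.} You correctly see that the inscribed ball $\Phi_0+d^{-2}B_X$ loses a factor $d^2$. However, route (b) runs into the same loss, and route (a) is unproved.

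In route (b), no normalized trace-norm ball of constant radius lies inside the set of normalized Choi states. Take $\delta=\frac r2\big(|0\ran\lan0|\ten|0\ran\lan0|-|0\ran\lan0|\ten|1\ran\lan1|\big)$. It satisfies $\tr_{\rm out}\delta=0$ and $\|\delta\|_1=r$, yet $\frac{I}{d^2}+\delta$ has the eigenvalue $\frac1{d^2}-\frac r2$. So the inscribed radius is at most $2/d^2$, which is the same loss as before. Passing to $\frac12\Phi_0+\frac12\cha(d)$ gives a homothetic copy contained in $\cha(d)$, so it cannot contain a larger ball. Also, \cref{dens} concerns all of $\mc D(\cz^{d^2})$, not the slice $\{\tr_{\rm out}\rho=I/d\}$, so its packing does not transfer.

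In route (a), isometries $\cz^d\to\cz^d\ten\cz^d$ form a manifold of real dimension $2d^3-d^2$, not $\sim 2d^4$. Channels of Kraus rank $\le d$ can therefore give at most $(C/\eps)^{O(d^3)}$. You would need a $d^2$-dimensional environment. Even then, the separation estimate for the quotient by the gauge group, uniform in $d$, is the whole difficulty, and you do not supply it.

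\textbf{The missing idea.} The paper abandons inscribed balls and uses a decomposition argument. This is the channel analogue of the lower bound in \cref{dens}, where $B_{Sh_1}$ is written through $\la D_1-(1-\la)D_2$.

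Work in the Heisenberg picture with the cb-norm, which is dual to the diamond norm. By the Wittstock/Haagerup--Paulsen factorization, every $*$-preserving $u$ with $\|u\|_{cb}\le 1$ has the form $u(x)=\mc V^*\pi(x)\mc W$ with contractions $\mc V,\mc W$. Symmetrizing and polarizing gives $u=\psi-\phi$ with $\psi,\phi$ completely positive and $\psi(1),\phi(1)$ bounded. Hence $u=a^*\hat\psi(\cdot)a-b^*\hat\phi(\cdot)b$, where $\hat\psi,\hat\phi$ are unital completely positive (duals of channels) and $\|a\|,\|b\|\le 2$.

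Consequently, an $\eps$-net $\Delta$ of channels, together with $\eps$-nets for $a$ and $b$ (only $O(d^2)$ real parameters), yields a $C\eps$-net of the full unit ball $B$ of $*$-preserving maps. This ball has real dimension $d^4$. The volume bound $N(B,C\eps)\ge (1/(C\eps))^{d^4}$ then forces $|\Delta|^2(C/\eps)^{O(d^2)}\ge (1/(C\eps))^{d^4}$. That is, $N(\cha(d),\eps)\ge (C'/\eps)^{c'd^4}$ with constants independent of $d$; bounded $d$ can be handled by your inscribed-ball estimate.

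The point is that $\cha(d)$ need not contain a large ball. It suffices that two channels, together with few extra parameters, generate one.
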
 

\begin{proof} We refer to \cite{Pau, Pis} for the definition and properties of cb-norm. We consider space of completely bounded $*$ preserving maps. This space has real dimension $d(d-1)\times 2d^2$. For every unital completely positive map $u$ we can consider $u-id$ in the subspace of dimension $2(d^4-d^3-d)$ of $cb$-norm $\le 2$. By a standard Banach space-volume argument (see \cite{pisier}) we deduce 
 \[ N(\cha(d),\eps)\kl (1+\frac{4}{\eps})^{2(d^4-d^3-d)} \pl .\]  
For the lower estimate we consider an arbitrary $*$-preserving  $u$ in the unit ball. It can be written as  
 \[ u(x) \lel {\mc V}^*x{\mc W} \pl ,\pl u(x)\lel \mc{W}^*x\mc V \] 
holds for contractions $\mc V,\mc W$. Then 
 \[ 2u(x) \lel (\mc V+\mc W)^*x(\mc V+\mc W)+(\mc V-\mc W)^*x(\mc V- \mc W) -2\mc{V}^*x\mc{V}-2\mc{W}^*x\mc{W} \] 
shows that $u(x)=\psi(x)-\phi(x)$ with 
 \[ \psi(1) \kl 4 \pl ,\pl \phi(1)\kl 2 \pl .\] 
This allows us to write
 \[ u(x) \lel a^*\hat{\psi}(x)a-b^*\hat{\phi}(x)b \] 
such that $\hat{\psi}$ and $\hat{\phi}$ are unital completely positive and $\|a\|\le 2$, $\|b\|\le \sqrt{2}$. We fix $\eps $ nets for these unit balls  and $\Delta$ an $\eps$ net for all $ucp$ maps. Let $\tilde{u}=a_s^*\hat{\psi}_ta_s-b_l^*\hat{\phi}_lb_l$. Then 
  \begin{align*}
 \|u-\tilde{u}\|_\diamond
 &  \kl \|a-a_s\|\|a\|+|a\|\|a-a_s\|+4\|\hat{\psi}_t-\hat{\psi}\|_\diamond
   + 2\sqrt{2}\|b-b_l\| + 2\|\hat{\phi}-\hat{\phi}_l\|_\diamond 
   \\
   &\le   4\eps+4\eps + 2\sqrt{2}\eps + 6\eps \kl 17 \eps \pl .
     \end{align*}  
This shows that 
 \[ (\frac{2}{17\eps})^{2d^3(d-1)} \kl 
 N(B_X,17 \eps)\kl |\Delta|^2 (\frac{C}{\eps})^{2d^2} \pl .\] 
This implies a lower bound for covering number of $\cha(d)$ using the real Banach space  of $*$-preserving maps with the $cb$-norm. \end{proof}

The same approximation argument as in \cref{lowerbounds} now applies.
\begin{theorem}\label{cha_lower}
    Let $G\subseteq U(EFA)$ satisfies $$N(G, \eps)\leq \lpara 1+\frac{a}{\eps}\rpara^d$$
    and let $V\subseteq E$ be a finite set of preparable unit vectors. Assume that $\hat C_V^A(G)$ is channel universal. Then for sufficiently small $\eps>0$,
    $$L_\eps^{ch}(G, V, \lset\ket0_F\rset)\geq \min\lset c(\eps)\frac{2^{4n}}{dn\ln(1+a)}, \frac{2^{4n}}{6\ln|V|} \rset.$$
\end{theorem}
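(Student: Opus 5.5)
We follow the covering-number strategy of \cref{lww}, with the channel set $\cha(2^n)$ in the diamond norm replacing the density set, and \cref{cov} replacing \cref{dens}. Fix $\eps\le\eps_0$ and set $m=L^{ch}_{\eps/2}(G,V,\{\ket0_F\})$. By definition, every channel $\Phi$ on $A$ is within $\eps/2$ in diamond norm of a channel of the form
\[
\rho\mapsto \tr_F\Big(\Psi^{v(1)}_{U_1}\cdots\Psi^{v(m)}_{U_m}\big(\ket0_F\bra0_F\otimes\rho\big)\Big),
\qquad U_j\in G,\ v(j)\in V.
\]
(If fewer than $m$ rounds are used, we pad with $U=1$.)

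First, we discretize $G$. Let $S$ be an $\frac{\eps}{4m}$-net of $G$ in operator norm. As in the proof of \cref{lww}, we have $\|\Psi^{v}_{U}-\Psi^{v}_{U'}\|_\diamond\le 2\|U-U'\|_{op}$, since the partial trace and the state preparation are diamond-norm contractions. A telescoping sum over the $m$ factors then lets us replace each $U_j$ by some $V_j\in S$ at a total cost of at most $\eps/2$. The final $\tr_F(\cdot)$ and the preparation of $\ket0_F$ are again contractions. Hence the set $\Delta$ of all channels $\tr_F\circ\Psi^{v(1)}_{V_1}\cdots\Psi^{v(m)}_{V_m}\circ(\ket0_F\bra0_F\otimes\cdot)$ is an $\eps$-net of $\cha(2^n)$, and
\[
|\Delta|\le (|S|\,|V|)^m,\qquad |S|\le \Big(1+\frac{4am}{\eps}\Big)^d .
\]

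Second, we compare this with the lower bound of \cref{cov} with $d_A=2^n$. This gives
\[
c\,2^{4n}\ln\frac{C}{\eps}\le \ln|\Delta|\le m\Big(d\ln\big(1+\tfrac{4am}{\eps}\big)+\ln|V|\Big).
\]
Splitting the right-hand side as in \cref{lww} into the three terms $2dm\ln m$, $dm\ln(1+4a/\eps)$ and $m\ln|V|$, one of them must be at least a third of the left-hand side. This leaves three cases:
\begin{itemize}
\item $m\ln m\gtrsim 2^{4n}/d$;
\item $m\gtrsim c(\eps)\,2^{4n}/(d\ln(1+a))$, using $\ln(1+4a/\eps)\le \ln(1+a)+\ln(1+4/\eps)$;
\item $m\gtrsim 2^{4n}/\ln|V|$.
\end{itemize}
In the first case we invert $m\ln m\ge y$ to $m\gtrsim y/\ln y$, with $\ln y\lesssim n$, as in \cite[Lemma 5.12]{jpv}. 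This produces the factor $dn$ in the denominator. Finally, we replace $\eps/2$ by $\eps$ and absorb the constants into $c(\eps)$; the constant $6$ comes from the factor $3$ in the splitting and the $c$ in \cref{cov}.

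The main obstacle is the precise constant in the exponent of the lower covering bound in \cref{cov}. That bound only provides $cd_A^4$, so it determines whether the $\ln|V|$ alternative really has denominator $6$, or only $6/c$. If necessary, we would sharpen the counting by restricting to unital channels, or to the $*$-preserving ball of real dimension roughly $2d_A^4$. Otherwise we would accept the constant $c$, and hence a constant depending only on $\eps_0$, in the $|V|$ term.
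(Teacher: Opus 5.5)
Your proposal is correct and follows the paper's route: the paper's proof is only the remark that the covering argument of \cref{lww} carries over with the channel covering lemma for $\cha(d)$ in place of \cref{dens}, and that is exactly what you carry out. The constant issue you flag at the end is not a real obstacle: in the third case you get $m\ge \frac{c\ln(C/\eps)}{3}\cdot\frac{2^{4n}}{\ln|V|}$, and the hypothesis ``for sufficiently small $\eps$'' lets you assume $c\ln(C/\eps)\ge \frac12$, which gives the denominator $6$ without any sharpening of the covering bound.
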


\section{Conclusion}

One interaction Hamiltonian suffices!
In our most economical construction, a single Hamiltonian, together with repeated preparation and trace-out of a one-qubit environment, is enough to obtain universal state preparation. This raises the possibility of using the same open-system architecture as a lower dimensional variational ansatz. For example, one can consider variational quantum eigensolver problems of the form
 \[ l_{\theta v}(\rho,O) \lel tr(e^{i\theta h}(|v\ran \lan  v|\ten \rho)e^{-i\theta h}(1\ten O)) \pl .\] 
Since now we are dealing with a one parameter set traditional in a one parameters search (strictly speaking two parameter search) should be useful in optimizing the underline quantum machine learning algorithm. In this case the Haar measure should be replaced by the so-called Bohr measure. This certainly requires further investigation. Even for the larger Spin groups from section 4, the variational quantum eigensolver (VQE) could in favorable cases avoid Barren plateaus thanks to the formula 
  \[ \ez \;\text{Var}_{\theta}(|v\ran\lan v|\ten \rho,1\ten O) \lel \frac{P_{\mathfrak{g}} (|v\ran\lan v|\ten \rho), P_{\mathfrak{g}}(1 \ten O)}{\rm \dim(\mathfrak{g})} \pl    \]
in \cite{diaz2023, Larocca2025} where the expression for the projection $P_{\mathfrak{g}}$ is explained. Here one has to assume that either the density or the observable belongs to Lie algebra $\mathfrak{g}$ which is calculated explicitly in section 4. The fact that Lie algebra has dimension $O(n^2)$ rather then $4^{n}-1$ together with expressivity (implementing all unitary channels) certainly requires further investigation. 

The concrete examples presented in this paper should be viewed as an invitation to identify small Lie algebras for given resource of interacting Hamiltonian. Our results show few state preparation in the environment can make a huge difference. In the sense of \cite{jpv} this means the open system operation contain the magic which is usually attributed to $T$-gate. Our particular choice of resources, as in \cite{jpv}, stems from the fact that many current platforms for quantum computation can perform pulses, the unitary groups of local Pauli strings.

Although the resulting open-system models are universal for state preparation and channels, this universality does not make the synthesis problem easy. The lower bound obtained in this paper shows that worst-case approximation lengths remain exponential even when the underlying coherent control group has only polynomial dimension. These estimates may be viewed as a first step toward a complexity theory for state and channel synthesis in open quantum systems.

We do not expect the constructions presented here to be optimal with respect to the number of auxiliary qubits, preparable environment states, or coherent control parameters. Determining the minimal resources required for open-system state and channel universality remains an interesting open problem. More broadly, the abundance of structured Pauli-string models in quantum information suggests that many further examples of open-system magic remain to be discovered.

\section{Appendix}
Here we provide a concrete example of bracket generating with two elements. 
\begin{proposition}\label{two}
Let
 \[ H \lel \sum_j X_jX_{j+1}+Y_jY_{j+1}+\delta Z_{j}Z_{j+1} \pl. \]
The Lie algebra generated by  $\{Z_1,X_1+H\}$ is $\mathfrak{su}(2^n)$.
\end{proposition}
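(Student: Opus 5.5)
The plan is to let $\mf g=\lie_{\mb R}(iZ_1, i(X_1+H))$ and to peel the chain apart one site at a time. At each step I will use $\ad_{Z_k}^2$ as a projector onto the Pauli strings that anticommute with $Z_k$, then extract a local $Z$ on the next site and propagate.

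\textbf{Step 1 (splitting at site 1).} For a Pauli string $P$ we have $[Z_1,[Z_1,P]]=4P$ if $P$ anticommutes with $Z_1$, and $0$ otherwise. Applying this to $X_1+H$ kills every term except $X_1$, $X_1X_2$, $Y_1Y_2$. This gives
\[ B_1 := X_1+X_1X_2+Y_1Y_2,\qquad iB_1\in\mf g, \]
and therefore also $i R_1\in\mf g$, where
\[ R_1:=X_1+H-B_1=\delta Z_1Z_2+\sum_{j\ge2}\big(X_jX_{j+1}+Y_jY_{j+1}+\delta Z_jZ_{j+1}\big). \]
We also have $iC_1\in\mf g$ with $C_1=\tfrac{1}{2i}[Z_1,B_1]=Y_1+Y_1X_2-X_1Y_2$.

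\textbf{Step 2 (extracting $Z_2$).} This is the step I expect to be the main obstacle. The goal is to show $iZ_2\in\mf g$, and with it $iX_1,iY_1$ and the two-body term $X_1X_2+Y_1Y_2$ separately. I would compute the brackets $[B_1,C_1]$, $[Z_1,[B_1,C_1]]$, and $[B_1,[B_1,C_1]]$ by Pauli multiplication. These produce linear combinations of $Z_1$, $Z_2$ and $Z_1X_2$ (for instance $[X_1X_2,-X_1Y_2]=-2iZ_2$). Since we already have $Z_1$, and the strings $Z_1X_2$, $Z_1Y_2$ can be separated from $Z_2$ by a further $\ad_{Z_1}$-type projection, I expect to isolate $Z_2$ modulo elements already known.

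Concretely, everything here lives in a small Lie algebra on qubits $1,2$, namely the one generated by $Z_1,B_1,C_1$. I would verify by a finite computation in $\su(4)$ that this algebra is all of $\su(4)\otimes 1$. It contains $X_1$ and is not contained in the $U(1)$-symmetric subalgebra, because $X_1$ breaks the conservation of total $Z$. This verification is the main obstacle. If the hand computation is messy, a fallback is to argue by representation theory: the generated algebra acts irreducibly on $\cz^4$ and contains an element with simple spectrum. If $\delta$ causes trouble, it enters only through $\delta Z_1Z_2$ in $R_1$, and $Z_1Z_2$ is itself in $\su(4)\otimes 1$ once that is established.

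\textbf{Step 3 (induction and full control).} Once $iZ_1,iZ_2\in\mf g$ and $\su(4)$ on qubits $1,2$ is available, subtract $\delta Z_1Z_2$ from $R_1$. Then apply $\ad_{Z_2}^2$ to the remainder, which isolates $X_2X_3+Y_2Y_3$. Bracketing this term with the local algebra on qubit $2$ (which contains $X_2$ and $Y_2$) yields $Z_3$ and $\su(2)$ on qubit $3$, exactly as in Step 2 but now with full single-qubit control already present on the left neighbour, so the computation is easier.

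Iterating the last step gives all $Z_j$, $X_j$, $Y_j$ together with the entangling terms $X_jX_{j+1}+Y_jY_{j+1}$. Single-qubit $\su(2)$ on every site plus a nearest-neighbour entangling coupling along a connected chain generates $\su(2^n)$ by the standard argument, as used in the proof of the lemma in \cref{spinex} with reference to \cite{jpv}.
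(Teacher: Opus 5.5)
Your Step 1 is fine, but Step 2 fails, and with it the treatment of $\delta$ in Step 3. The algebra generated by $Z_1,B_1,C_1$ is not $\su(4)\otimes 1$. It does not even contain $Z_2$, $X_1$ or $Y_1$. The operator $K:=X_2-Z_1Z_2$ commutes with $Z_1$ and with $B_1$, since (with the convention $XY=iZ$)
\[
[B_1,X_2]=[Y_1Y_2,X_2]=-2iY_1Z_2=[X_1,Z_1Z_2]=[B_1,Z_1Z_2],
\]
while $[Z_2,K]$, $[X_1,K]$ and $[Y_1,K]$ are all nonzero. In fact
\[
\lie_{\mb R}(iZ_1,iB_1)=i\,\mathrm{span}_{\mb R}\{Z_1,\,X_1X_2,\,Y_1X_2,\,Z_1X_2-Z_2,\,X_1+Y_1Y_2,\,Y_1-X_1Y_2\},
\]
a six-dimensional algebra isomorphic to $\su(2)\oplus\su(2)$. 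Concretely, $[B_1,C_1]=2i(3Z_1+2Z_1X_2-2Z_2)$ produces only the combination $Z_1X_2-Z_2$. No $\ad_{Z_1}$-projection can split it, because both strings commute with $Z_1$. Your fallback fails as well. The action on $\cz^4$ is reducible, since it preserves the eigenspaces of $K$. And irreducibility together with an element of simple spectrum would not force $\su(4)$ anyway; $\mathfrak{sp}(2)\subset\su(4)$ is a counterexample.

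The missing idea is that $\delta\neq 0$ must enter the argument: for $\delta=0$ and $n\ge 2$ the statement is false. Under the Jordan--Wigner transform, $X_1$ and $Y_1$ are linear in Majorana operators, while $Z_1$, $X_jX_{j+1}$ and $Y_jY_{j+1}$ are quadratic, so $\mf g\subseteq\so(2n+1)$.

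Since your Steps 1--2 never use $\delta$, they cannot yield $\su(4)$. Your remark that $Z_1Z_2$ is available ``once that is established'' is circular, so subtracting $\delta Z_1Z_2$ from $R_1$ in Step 3 is unjustified. The same obstruction reappears in your inductive step. Single-qubit control on site $k$ together with $X_kX_{k+1}+Y_kY_{k+1}$ generates only the ten-dimensional span of $P_k$, $Z_{k+1}$, $P_kX_{k+1}$, $P_kY_{k+1}$ for $P\in\{X,Y,Z\}$. That span contains neither $X_{k+1}$ nor $Y_{k+1}$, so you must use $\delta Z_kZ_{k+1}$ or elements reaching back to earlier sites.

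For $n=2$ the repair is immediate. Here $R_1=\delta Z_1Z_2$, so $Z_1Z_2\in\mf g$. Then $[Z_1Z_2,Z_1X_2-Z_2]=2iY_2$ and $[Y_2,Z_1X_2-Z_2]=-2i(X_2+Z_1Z_2)$ give $X_2$, $Y_2$ and $Z_2$. From these, $\ad_{Z_2}^2(X_1+Y_1Y_2)$ gives $Y_1Y_2$, hence $X_1$, $Y_1$ and all of $\su(4)$. For $n\ge 3$ you still need a genuine additional computation in which the $\delta Z_jZ_{j+1}$ terms are used; any argument that would also work at $\delta=0$ must fail.

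The paper brings in $\delta$ by applying $\ad_{X_1}^2$, $\ad_{Y_1}^2$ and $\ad_{Z_1}^2$ to $H$. This isolates $X_1X_2+\delta Z_1Z_2$, $Y_1Y_2+\delta Z_1Z_2$ and $X_1X_2+Y_1Y_2$, hence $Z_1Z_2$. That step presupposes control of qubit $1$. Be aware that the paper's derivation of $X_1Y_2$ and $Y_1$ from $Z_1$ and $X_1+a_1$ alone runs into exactly the obstruction above. The last term in its expansion of $[\alpha,X_1+a_1]$ should be $2iZ_1$, not $2iZ_2$, so $\beta$ is really proportional to $Z_1X_2-Z_2$ and $X_1Y_2$ cannot be separated. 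So the $\delta$-term is needed already at the qubit-$1$ stage there too.
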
 

\begin{proof}
For the calculation, we have to recall the standard commutator relations
 \begin{align*}
 ZX&=-iY \pl ,\pl ZY \lel iX \pl ,\pl XY \lel -iZ \pl,  \\
 XZ &=iY \pl,\pl YZ \lel -iX \pl ,\pl YX \lel iZ \pl .
 \end{align*}
The aim is now to increase the list of terms which can be produce from commutators. We need some notation 
 \[ a_j \lel X_jX_{j+1}+Y_jY_{j+1} \pl ,\pl b_j \lel Y_jY_{j+1}+\delta Z_jZ_{j+1} \pl,\pl c_j \lel X_jX_{j+1}+\delta Z_jZ_{j+1} \pl .\] 
Then we find  
 \begin{align*}
  [Z_1,X_1+H] &=[Z_1,X_1+a_1] \lel -2iY_1-2iY_1X_2+2iX_1Y_2 \pl =:\pl -2i\alpha  \pl ,\\
  [Z_1,[Z_1,X_1+H]] &= 4(X_1+X_1X_2+Y_1Y_2) \lel 4(X_1+a_1) \pl ,
 \end{align*}
In the next we take the corresponding commutator of these terms
 \begin{align*}
 [\alpha,X_1+a_1] &= 2iZ_1+2iZ_1X_2 +[Y_1+Y_1X_2-X_1Y_2,X_1X_2]+[Y_1+Y_1X_2-X_1Y_2,Y_1Y_2] \\
 &= 2iZ_1+2iZ_1X_2 +2iZ_1X_2+2iZ_1 -2iZ_2 -2iZ_2+2iZ_2 \\
 &= 2i(2Z_1+2Z_1X_2-Z_2) \pl=:\pl 2i(2Z_1+\beta) \pl .
 \end{align*}
We repeat this procedure with $\beta=2Z_1X_2-Z_2$:     
 \begin{align*}
 [2Z_1X_2-Z_2,X_1+a_1] &= -2iY_1X_2+[2Z_1X_2-Z_2,X_1X_2]+[2Z_1X_2-Z_2,Y_1Y_2] \\
 &= -2iY_1X_2 -4iY_1+2iX_1Y_2 + -2iY_1X_2 \\
 &= -2i(2Y_1X_2+2Y_1-X_1Y_2) \pl .
 \end{align*}
Note that $\al=Y_1+Y_1X_2-X_1Y_2$ and hence $X_1Y_2$ in the Lie algebra $\mf l$ generated by $Z_1$ and $X_1+H$. Another commutator with $Z_1$ leads to the following list
 \[ \{Z_1,X_1Y_2,Y_1Y_2,Y_1X_2+Y_1,2Z_1X_2-Z_2,X_1+X_1X_2,\} \pl .\] 
Then $[X_1Y_2,[X_1Y_2,[Y_1X_2+Y_1]]=4Y_1$. This means we have full control in the first qubit and $H\in \mf l$. It is well-known that this is enough for universal control \cite{Burgarth2009}. Let us spell this out. We find all three linear combinations
 \[ \{X_1X_2+\delta Z_1Z_2, X_1X_2+Y_1Y_2, Y_1Y_2+\delta Z_1Z_2\} \subset \mf l\]
Taking linear combinations we find $X_1X_2$, $Y_1Y_2$ and $Z_1Z_2$. Then we consider $[a_1,[a_1,H]]$, $[b_1[b_1,H]]$ and $[c_1[c_1,H]]$ and find the same term with a shifted index. Thus by induction we have
 \[ \{X_1,Y_1,Z_1\} \cup \mathcal{X}(2)\cup\mathcal{Y}(2) \cup \mathcal{Z}(2) \subset \mf l \pl .\]
 Then \cite[Theorem 3.6 (ii)]{jpv}  yields the assertion.  \end{proof}

\bibliographystyle{alpha}
\bibliography{reference}
\bigbreak
\end{document}